\documentclass[showpacs,amsmath,amssymb,onecolumn,superscriptaddress,notitlepage,preprintnumbers,pra]{revtex4-1}
\usepackage{bm}
\usepackage{times}
\usepackage{multirow}
\usepackage{dcolumn}
\usepackage{tabularx}
\usepackage{graphicx}
\usepackage{subfigure}
\usepackage{ae}
\usepackage{lettrine}
\usepackage{color}
\usepackage{amsmath,amsthm,amssymb,amsfonts,boxedminipage}
\usepackage{epstopdf}
\usepackage{booktabs}
\makeatletter
\makeatother
\newtheorem{theorem}{Theorem}
\newtheorem{lemma}{Lemma}
\newtheorem{corollary}{Corollary}
\newtheorem{definition}{Definition}

\newtheorem{fact}{Fact}
\newtheorem{problem}{Problem}

\bibliographystyle{unsrt}
\addtolength{\itemsep}{-5ex}
\makeatletter
\renewcommand\@biblabel[1]{#1.}
\renewcommand{\thesubfigure}\makeatletter
\makeatother

\usepackage[noend,noline,ruled,linesnumbered]{algorithm2e}
\usepackage{hyperref}

 

\hypersetup{colorlinks=true, linkcolor=cyan, citecolor=cyan, urlcolor=blue }

\usepackage{braket}

\newcommand{\Ucal}{\mathcal{U}}

\newcommand{\abs}[1]{\left| #1 \right|}

\newcommand{\pku}{Center on Frontiers of Computing Studies, School of Computer Science, Peking University, Beijing 100871, China}

\newcommand{\bnu}{School of Artificial Intelligence,
 Beijing Normal University, Beijing,
 100875, China}


\begin{document}

\title{\Large Classical Algorithms for Hamiltonian Dynamics Mean Value \\and Guided Local Hamiltonian Problem}

\author{Yusen Wu}
\thanks{These authors contributed equally.}
\affiliation{\bnu}

\author{Yukun Zhang}
\thanks{These authors contributed equally.}
\affiliation{\pku}

\author{Chuan Wang}
\affiliation{\bnu}

\author{Xiao Yuan}
\email{xiaoyuan@pku.edu.cn}
\affiliation{\pku}

\begin{abstract}

The efficient simulation of quantum dynamics and ground states is a central challenge in physics and a key frontier for quantum advantage. While short-time evolution in one-dimensional systems can often be simulated classically, extending this to higher dimensions remains difficult.  
Here, we introduce an efficient classical algorithm for simulating the short-time dynamics of arbitrary local quantum systems. For any local Hamiltonian $H$ and constant evolution time $t$, our method estimates expectation values of the form $\langle\psi|e^{iHt} O e^{-iHt}|\psi\rangle$ for global Pauli observables $O$ and stabilizer states $|\psi\rangle$, with high precision and exponentially small failure probability. Furthermore, we present a classical dequantization of a tailored quantum algorithm that efficiently solves the guided local Hamiltonian~(GLH) problem to constant additive error—previously considered classically hard and hence a promising candidate for quantum computational advantage.
These results reveal unexpected classical tractability in constant-time quantum dynamics and fundamental connections between Hamiltonian dynamics mean value and the GLH problem.
Our work refines the boundary between classical and quantum computational power, identifying sharper criteria for regimes where quantum advantage may genuinely emerge.

\end{abstract}

\maketitle

\noindent\textbf{Introduction---}Understanding the classical complexity of quantum many-body systems remains a long-standing open challenge. Two of the most fundamental tasks in this domain—simulating Hamiltonian dynamics, and determining ground-state energies (given a good initial state)—admit efficient quantum algorithms and are known to be BQP-complete when the evolution time and inverse energy precision scale polynomially with system size~\cite{janzing2005ergodic, gharibian2022improved,cade2022complexity}. These problems are therefore among the most promising candidates for demonstrating quantum computational advantage in the worst-case scenario, assuming ${\rm BPP} \neq {\rm BQP}$. Advancing classical algorithms for such tasks is essential for refining the boundary between classical and quantum computational power.

Although simulating general quantum mean value by Hamiltonian dynamics is classically intractable in the worst case~\cite{janzing2005ergodic,bravyi2021classical}, certain structured quantum systems with short evolution times allow for efficient classical simulation. Notably, the short-time dynamics of (quasi-)one-dimensional systems that satisfy the area law can be efficiently simulated using matrix product state methods~\cite{schollwock2005density, schollwock2011density, orus2014practical, bridgeman2017hand}. However, extending these methods to higher-dimensional systems remains a major challenge~\cite{eisert2010colloquium}. Even for frustration-free, gapped Hamiltonians that satisfy the area law~\cite{anshu2022area}, it remains unknown whether their short-time dynamics (the associated quantum expectation values) can be simulated efficiently on a classical computer. Consequently, establishing the classical tractability of the quantum mean-value problem in higher dimensions remains a central open problem in quantum computational complexity.

In parallel, quantum algorithms for the ground-state problem often assume access to a guiding state with non-negligible overlap with the true ground state—a scenario formalized as the guided local Hamiltonian (GLH) problem~\cite{cade2022improved, cade2022complexity, gharibian2022dequantizing}. While the GLH problem is BQP-complete under high-precision demands, it has been shown that such quantum algorithms can be dequantized to yield efficient classical algorithms for constant relative accuracy $\epsilon$~\cite{gharibian2022dequantizing}, thereby eliminating exponential quantum advantage in that regime. However, constant relative error is typically insufficient for practical purposes, as the corresponding additive error $\epsilon\|H\|$ scales with system size, representing a relatively low accuracy for large systems, especially in the scenario $\|H\|={\rm poly}(n)$. Developing dequantized classical algorithms that achieve constant additive accuracy remains an important and open challenge with significant practical implications.

In this paper, we address these challenges by presenting polynomial classical algorithms for simulating Hamiltonian dynamics mean value and solving the GLH problem. 
For the first result, our classical algorithm provides an $\epsilon$-approximation to the mean value in polynomial running time in terms of $n$ and $1/\epsilon$ with an exponentially small failure probability. This approach significantly extends prior works, including Ref.~\cite{wild2023classical} from local to general global observables and Ref.~\cite{bravyi2021classical} from $2$D discrete circuits to high-dimensional continuous Hamiltonian dynamics. For the second result, leveraging the filter‑function formalism~\cite{lin2022heisenberg,wang2023quantum} and dequantization of a tailored quantum algorithm, we reduce the GLH problem to a Hamiltonian dynamics mean value problem, and prove that any classical algorithm efficient for local‑Hamiltonian dynamics can also be used to solve the GLH problem with constant additive error.

Our results depict the quantum advantage requirement for the quantum mean value problem and the GLH problem, see Fig.~\ref{fig:ResultsSummary}. For the Hamiltonian dynamics mean value problem, quantum advantage is only possible when the evolution time exceeds $t=\Omega(\log n)$, which generally requires error correction to avoid the exponential cost of error mitigation~\cite{takagi2023universal, quek2024exponentially}. On the other hand, for the GLH problem, quantum advantage occurs only when the guided state overlap or the additive error is very small, i.e., $\gamma~\textrm{or}~\epsilon\le 1/{\rm poly}(n)$. 
Our results reveal unexpected classical tractability in constant-time quantum dynamics and GLH problems and refine the boundary between classical and quantum computational power.  

\begin{figure}[t]
\centering
\includegraphics[width=\textwidth]{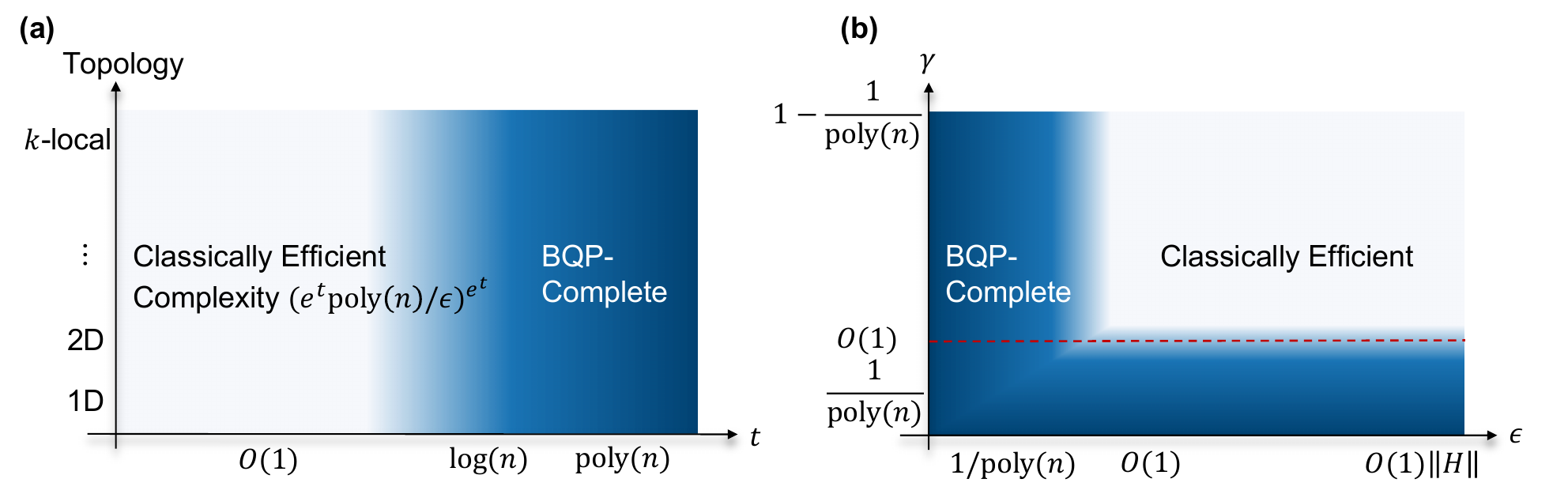}
  \caption{Main Results Summary: (a) Our first result establishes the hardness for estimating the quantum‐dynamics mean value $\langle\psi|e^{iHt}Oe^{-iHt}|\psi\rangle$ for Hamiltonian topologies and and evolution time, where $O$ represents a global operator, and $|\psi\rangle$ is a stabilizer state. Previously, classical efficiency had only been demonstrated for expectation values of shallow 2D quantum circuits \cite{bravyi2021classical}. We generalize this to arbitrary lattice geometries and show that constant‐time Hamiltonian evolution—which is equivalent to a ${\rm poly}\log (n)$-depth quantum circuit. When $t\sim{\rm poly}(n)$, the quantum dynamics mean value problem is proved to be ${\rm BQP}$-complete~\cite{janzing2005ergodic,wild2023classical}. (b) Our second result shows that the quantum-dynamics mean-value algorithm can be applied to compute the ground-state energy of a gapped Hamiltonian in the regime where the overlap satisfies $\gamma=\Omega(1)$ and any constant $\epsilon$. This significantly extendes the previous results which demonstrates the regime $\left(\gamma=\Omega(1), \epsilon=\mathcal{O}(1)\|H\|\right)$ is classically efficient~\cite{gharibian2022dequantizing}. For a normalized Hamiltonian $H/\|H\|$, our classical algorithm achieves an additive error of $\mathcal{O}(1)/\|H\|$ to the ground-state energy. 
  }
  \label{fig:ResultsSummary}
\end{figure}

\begin{figure}[t]
\centering
\includegraphics[width=\textwidth]{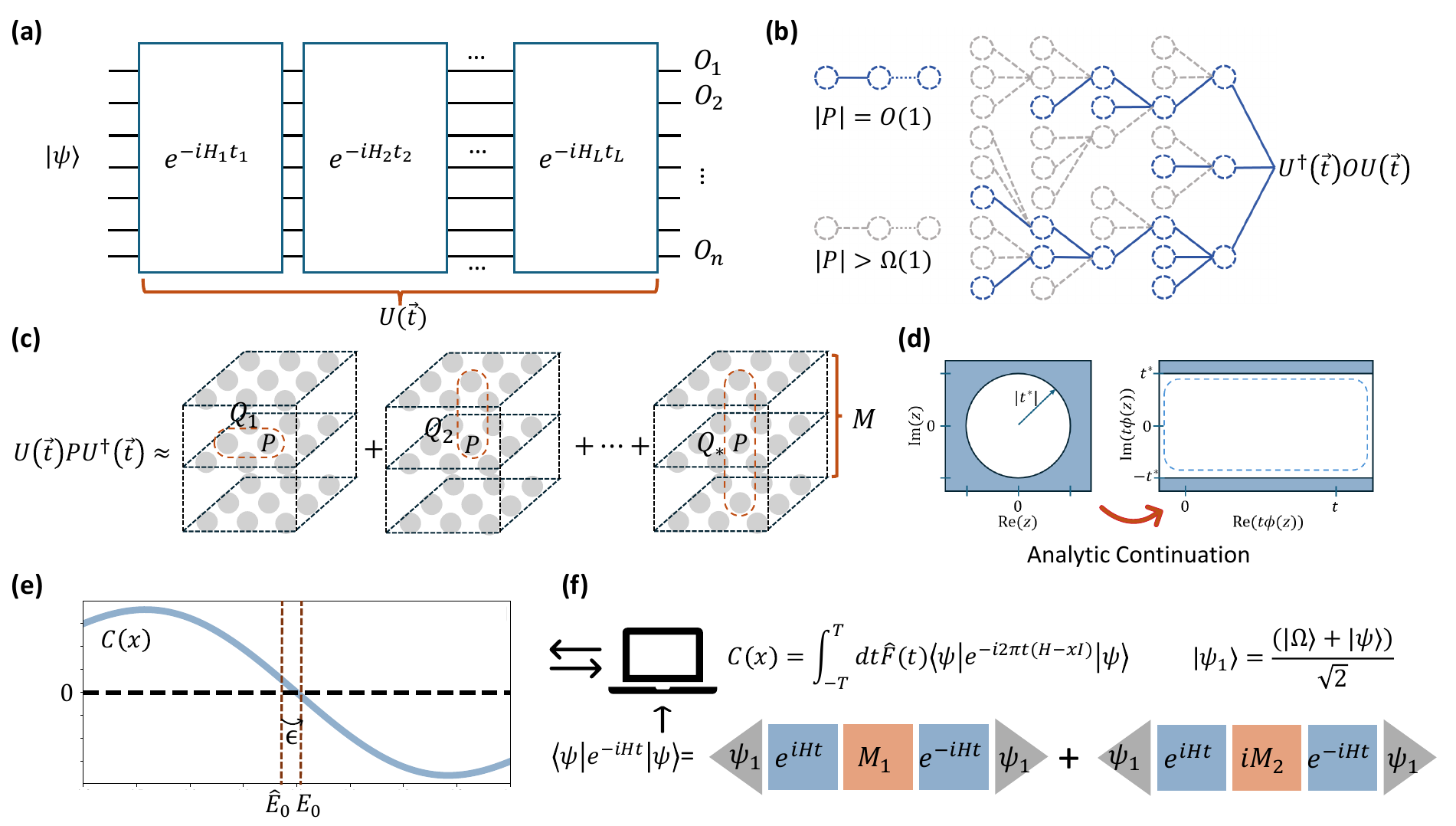}
  \caption{(a) A quantum circuit representation on the quantum dynamics mean value. The quantum dynamics is governed by local Hamiltonians $\{H^{(l)}\}_{l=1}^L$ and corresponding time $\{t_l\}_{l=1}^L$, which acts on the initial Clifford state $|\psi\rangle\sim{\rm Cl}(2^n)
  $. (b) Visualization on approximating $U^{\dagger}(\vec{t})OU(\vec{t})$ by computing low-weight Pauli paths (dark-blue paths) while cutting all large weight Pauli paths $\abs{P}>\Omega(1)$ (light-gray paths). (c) Each grey point represents a single qubit and the red dot circle represents a path of the interaction graph (induced by $\{H^{(l)}\}_{l=1}^L$) connected to $P$. The approximation is essentially a linear combination of ${\rm poly}(n)$ matrices induced by connected clusters, and it is applied to $M\leq\tilde{\mathcal{O}}(e^{L\mathfrak{d}t}\log(e^{L\mathfrak{d}t}/\epsilon))$ qubits. 
  (d) The analytic continuation method provides a paradigm to approximate $U(\vec{t})PU^{\dagger}(\vec{t})$ for general $\max\{\abs{t_l}\}\leq\mathcal{O}(1)$. (e) Visualization of using the filter function $C(x)$ to approximate the ground state energy $E_0$, in which $C(x)$ has a unique zero point around $E_0$. (e) The Loschmidt echo $\langle\psi|e^{-iHt}|\psi\rangle$ can be transformed into linear combinations of quantum mean values, where Hermitian operators $M_1=|\psi_c\rangle\langle\Omega|+|\Omega\rangle\langle\psi_c|$, $M_2=i(|\Omega\rangle\langle\psi_c|-|\psi_c\rangle\langle\Omega|)$, and $|\Omega\rangle$ represents the vacuum state.
  }
  \label{fig:Main_Results}
\end{figure}

\vspace{0.2cm}
\noindent\textbf{Hamiltonian Dynamics Mean Value---}We consider $n$-qubit local Hamiltonian $H=\sum_{X\subset S}\lambda_Xh_X$, where $S$ represents a set of subsystems, real-valued coefficient $\abs{\lambda_X}\leq 1$ and $h_X$ represents a Hermitian operator non-trivially acting on local qubits $X\subset S$. Without loss of generality, we assume the operator norm of each $h_X$ satisfies $\|h_X\|\leq 1$, and $X$ is not necessarily to be geometrically local. 
To characterize the correlation strength within the Hamiltonian, we introduce the associated interaction graph $G$ to depict overlaps of operators contained in $H$~\cite{haah2024learning,wild2023classical, bakshi2024high}. Specifically, given the Hamiltonian terms $\{h_X\}_{X\subset S}$, the interaction graph $G$ is a simple graph with vertex set $\{h_X\}_{X\subset S}$. An edge exists between $h_X$ and $h_{X^{\prime}}$ if $X\cap X^{\prime}\neq\emptyset$, and we denote the degree $\mathfrak{d}(h_X)$ of a vertex $h_X$, which is the number of edges incident to it. The maximum degree among all vertexes within the interaction graph $G$ is denoted by $\mathfrak{d}=\max_{h_X\in H}\left\{\mathfrak{d}(h_X)\right\}$. 

First, we consider the computation of expectation values at the output of an $L$-step Hamiltonian dynamics.

\begin{problem}
    [Hamiltonian Dynamics Mean Value Problem]
\label{problem1}
Consider $L$ local Hamiltonians $\{H^{(1)},H^{(2)},\cdots, H^{(L)}\}$ defined on a $n$-qubit system, and a global observable $O=O_1\otimes\cdots\otimes O_n$ with the operator norm $\|O_i\|\leq 1$ for $i\in[n]$. Let $U(\vec{t})=\prod_{l=1}^Le^{-iH^{(l)}t_l}$, the quantum dynamics mean value is defined by 
\begin{align}
\mu(\vec{t})=\langle\psi_C|U(\vec{t})^{\dagger}O U(\vec{t})|\psi_C\rangle,
    \label{Eq:meanvalue}
\end{align}
 where evolution time series $\vec{t}=\{t_1,\cdots,t_L\}$, input state $|\psi_C\rangle=C|0^n\rangle$, with $C$ a random Clifford circuit. The target is to provide an estimation $\hat{\mu}(\vec{t})$ such that $\abs{\mu(\vec{t})-\hat{\mu}(\vec{t})}\leq\epsilon$.
\end{problem}

For two-dimensional constant-depth quantum circuits, Ref.~\cite{bravyi2021classical} demonstrated that the quantum mean value problem can be efficiently solved on a classical computer. However, extending this result to higher-dimensional lattices or to general Hamiltonian dynamics remains challenging. In the case of higher-dimensional lattices, all known classical simulation algorithms incur high computational overhead.
Assuming each quantum gate satisfies a local ``scrambling'' property, Ref.~\cite{angrisani2024classically} recently introduces a classical method for estimating quantum mean values on high-dimensional lattices. This method operates in time $\mathcal{O}(n^{\log (n/\epsilon)})$ with a success probability of at least $1 - 1/n$ when the quantum circuit depth $d={\rm poly}\log(n)$. Nevertheless, achieving a truly polynomial-time algorithm for Hamiltonian dynamics without the ``scrambling'' assumption of the quantum process remains a significant open question.
For general Hamiltonian dynamics, Ref.~\cite{wild2023classical} employs a cluster expansion technique to approximate quantum mean values for local observables up to a specified additive error. For global observables, the cluster expansion method may fail to converge, and extending such techniques to global observables thus remains unsolved. Finally, the most advanced quantum simulation algorithms for $D$-dimensional local Hamiltonians require circuit depth $\mathcal{O}(t\,\mathrm{poly}\log(nt/\epsilon))$~\cite{haah2021quantum}. As a result, classical evaluation of the local Heisenberg evolution $e^{iHt} O_j e^{-iHt}$ necessitates time $\mathcal{O}(2^{t^D\,\mathrm{poly}\log(nt/\epsilon)})$, leading to only quasi-polynomial-time classical algorithms. 
Therefore, it remains a formidable task to develop polynomial classical algorithms for the Hamiltonian dynamics problem.

Here, we present a novel method to resolve the challenges and provide the first polynomial classical algorithm for solving Problem~\ref{problem1}. 
First, we rewrite the Heisenberg picture time evolved operator $U_O(\vec{t})=U^{\dagger}(\vec{t})OU(\vec{t})$ onto the Pauli basis as $$U_O(\vec{t})=\sum\limits_{P\in\{I,X,Y,Z\}^{\otimes n}}{\rm Tr}\left[OU(\vec{t})PU^{\dagger}(\vec{t})\right]P/{2^n}.$$
This reframes the problem from evolving $O$ directly to evolving each Pauli string $P$. However, handling all $4^n$ Pauli terms remains intractable. Due to the interference of high-weight Pauli operators, we truncate the summation to Pauli strings of constant weight $|P|<k$ with $k=\mathcal{O}(1)$. Here, the Pauli weight $|P|$ represents the number of qubits on which $P$ acts non-trivially. As a result, the truncated operator $U_{\rm cut}=\sum_{|P|<k}{\rm Tr}\left[OU(\vec{t})PU^{\dagger}(\vec{t})\right]P/2^n$ serves as an estimator for $U_O(\vec{t})$ (see Fig.~\ref{fig:Main_Results}.~(b)). Since $P$ is a local operator, its Heisenberg evolution $U(\vec{t})PU^{\dagger}(\vec{t})$ can be approximated by an operator $V_P(\vec{t})=\sum_{|{\rm supp}(Q)|\leq M}\alpha_QQ$ with real coefficients $\alpha_Q\in\mathbb{R}$, such that $\|V_P(\vec{t})-U(\vec{t})PU^{\dagger}(\vec{t})\|\leq \epsilon^{\prime}$, and the support size is 
$M=\tilde{\mathcal{O}}\left(e^{tL\mathfrak{d}}\log\left[e^{tL\mathfrak{d}}/\epsilon^{\prime}\right]\right)$ as visualized in Fig.~\ref{fig:Main_Results}~(c) and (d). This approximation follows from the cluster expansion method~\cite{haah2024learning,wild2023classical}, which is essentially a multi-variable Taylor expansion to $U(\vec{t})PU^{\dagger}(\vec{t})$. The total number of terms within $V_P(\vec{t})$ is determined by the number of interaction graphs of size $M$ that are connected to $P$. One may view this as a tree‑generation process: starting at the root node $P$, each step expands only to its neighboring nodes. Since each node has only $\mathcal{O}(\mathfrak{d})$ connected neighbors for local Hamiltonians, the number of connected paths grows like $\mathcal{O}((L\mathfrak{d})^M)$, independently of the operator norm $\|H\|$. Consequently, the cost of estimating $V_P(\vec{t})$ scales as $\mathcal{O}(\exp(M))$, thereby yielding an efficient classical simulation approach whenever $t=\max_{l\in[L]}\{\abs{t_l}\}$ and $L$ are constant values.

The above observation implies that $U_{\rm cut}$ admits a classically efficient approximation of the form
\begin{align}
    V(\vec{t}) = \sum_{|P| < k} {\rm Tr}\left[ O V_P(\vec{t}) \right] P/2^n,
\end{align}
such that $\| V(\vec{t}) - U_{\rm cut}(\vec{t}) \| \leq \epsilon$, by setting $\epsilon' = \epsilon n^{-k}$. It remains to characterize the difference between $U_{\rm cut}$ and the true Hamiltonian dynamics operator $U_O(\vec{t})$. While their operator norm difference may not be tightly bounded, this norm quantifies the worst-case bias across all input states. However, most quantum states $|\psi\rangle$ incur significantly smaller deviations, as the expectation value of high-weight Pauli operators tends to be exponentially suppressed on typical states $|\psi\rangle$.
To capture this typical behavior, we consider the average-case deviation over the Clifford-state ensemble (or computational basis state ensemble):
\begin{align}
    {\rm dist}_{\psi}(U_{\rm cut}, U_O(\vec{t})) = \mathbb{E}_{|\psi\rangle \sim {\rm Cl}(2^n)} \left| \langle \psi | (U_{\rm cut} - U_O(\vec{t})) | \psi \rangle \right|.
\end{align}
When the observable $O$ is a low-rank projector, it can be shown that ${\rm dist}_{\psi}(U_{\rm cut}, U_O(\vec{t}))$ decays exponentially. As a result, for all but an exponentially small fraction of input states $|\psi\rangle$, we have
$
\left| \langle \psi | U_{\rm cut} | \psi \rangle - \langle \psi | U_O(\vec{t}) | \psi \rangle \right| \leq \epsilon$.
Moreover, in many noisy intermediate-scale quantum (NISQ) algorithms, one is often interested in the expectation values of Pauli operators. 
We prove that for random Pauli operators, the deviation satisfies ${\rm dist}_{\psi}(U_{\rm cut}, U_O(\vec{t})) \leq \epsilon$ with exponentially small failure probability over both the state and observable ensembles. Therefore, it indicates that the difference between $U_{\rm cut}$ and $U_O(\vec t)$ is indeed small for most cases. 
We summarize these findings as follows.



\begin{theorem}[Informal]
   There exists a classical algorithm that produces an estimation $\hat{\mu}(\vec{t})$ such that $\abs{\mu(\vec{t})-\hat{\mu}(\vec{t})}\leq\epsilon$ for Problem~\ref{problem1} in runtime $\mathcal{O}((e^{tL\mathfrak{d}}{\rm poly}(n)/\epsilon)^{e^{tL\mathfrak{d}}})$, with $t=\max\{|t_l|\}_{l=1}^L$, $\epsilon=1/{\rm poly}(n)$ and an exponentially small (${\rm poly}(n)/2^n$) failure probability.
    \label{theorem1}
\end{theorem}

The result demonstrates that any Hamiltonian dynamics with constant evolution time can be efficiently simulated on a classical computer with near-unit success probability.
Our results can be directly applied to simulating constant-depth analogue quantum computation~\cite{arute2019quantum,wu2021strong} and variational algorithms~\cite{peruzzo2014variational,huang2023efficient,wu2023orbital,farhi2014quantum,cerezo2021variational,zhang2022quantum}.
Meanwhile, due to the exponential increase of error mitigation cost with circuit depth~\cite{takagi2023universal, quek2024exponentially}, current noisy quantum hardware platforms—including trapped ions~\cite{smith2016many}, neutral atom arrays~\cite{evered2023high}, and superconducting qubit circuits~\cite{arute2019quantum,morvan2024phase}—can only implement constant-depth circuits. Our result thus implies that quantum computing may not provide exponential quantum advantage within the noisy intermediate-scale quantum regime. The result highlights the critical need for advanced quantum error-correction schemes to enable deep logical circuits, sharpening our understanding of quantum computing advantage in simulating Hamiltonian dynamics.

\vspace{0.2cm}
\noindent\textbf{Guided Local Hamiltonian Problem---}The local Hamiltonian (LH) problem~\cite{kempe2006complexity, gharibian2015quantum} is a central challenge in both physics and computer science. 
In physics, it is crucial for understanding low-energy phenomena such as superconductivity~\cite{wilson1983superconducting}, superfluidity~\cite{wheatley1975experimental}, and topological orders~\cite{wen1995topological, kane2005z}. In computer science, the LH problem has been proven to be QMA-complete for both synthetic~\cite{kitaev2002classical, kempe2006complexity} and physical~\cite{childs2014bose, o2021electronic} quantum systems. As the quantum analogue~\cite{aharonov2002quantum} of classical constraint-satisfaction problems, the LH problem plays a foundational role in quantum complexity theory, analogous to how the Cook–Levin theorem characterizes NP-completeness~\cite{karp2010reducibility}. Moreover, the quantum PCP conjecture~\cite{aharonov2013guest}—which posits that approximating the ground-state energy to constant precision remains QMA-hard—stands as a major open question.

The pursuit of efficiently solving the LH problem has driven much of quantum computing research. Notably, the problem becomes efficiently solvable on a quantum computer when a guiding state with non-trivial overlap with the ground state is provided~\cite{lin2020near, dong2022ground, lin2022heisenberg, wan2022randomized, wang2023quantum, ding2023even, ni2023low}. In theoretical computer science, this setting defines the guided local Hamiltonian (GLH) problem~\cite{gharibian2022dequantizing}.

\begin{problem}
    [Guided Local Hamiltonian Problem]
\label{problem2}
Consider an $n$-qubit local Hamiltonian $H=\sum_{X\subset S}\lambda_Xh_X$, let $E_0<E_1\leq\cdots\leq E_{2^n-1}$ be eigenvalues of $H$ with corresponding eigenstates $|\phi_0\rangle,|\phi_1\rangle,\cdots, |\phi_{2^n-1}\rangle$. Suppose 
the energy gap $\Delta=E_1-E_0$, and classical initial state $|\psi_c\rangle$ such that $p_0=\abs{\langle\psi_c|\phi_0\rangle}^2\geq\gamma$. The target is to provide an estimation $\hat{E}_0$ to the ground state energy $E_0$ such that $|\hat{E}_0-E_0|\leq\epsilon$.
\end{problem}

This problem has been proven to be BQP-complete when the guided-state overlap satisfies $\gamma \in (1/{\rm poly}(n),\, 1 - 1/{\rm poly}(n))$ and the required accuracy is $\epsilon = 1/{\rm poly}(n)$~\cite{cade2022improved,cade2022complexity}, thereby characterizing the ultimate computational capability of quantum computers. These results suggest the potential for exponential quantum advantage in systems with accessible guiding states, assuming ${\rm BPP} \neq {\rm BQP}$. However, this advantage can disappear under relaxed conditions. Specifically, if the guiding state is classically tractable and the estimation is performed in the \emph{constant-relative-accuracy} regime—i.e., with error bounded by $\epsilon\|H\|$ with $\epsilon=\mathcal{O}(1)$—then the GLH problem becomes classically solvable~\cite{gharibian2022dequantizing, gall2024classical}. This follows from dequantized algorithms inspired by recent near-optimal quantum methods~\cite{lin2020near, gilyen2019quantum}. Nonetheless, constant-relative-accuracy is often impractical due to the typically large operator norm $\|H\| = {\rm poly}(n)$ in physical systems, even for the simplest Ising model. Designing efficient classical algorithms for the GLH problem under \emph{constant absolute-accuracy} $\epsilon$ remains an open challenge.


Here, we address this challenge by establishing a direct connection between the quantum dynamics mean value problem and the GLH problem through a dequantization of a tailored quantum algorithm.   Specifically, we consider a filter function constructed from the guiding state $|\psi_c\rangle$ (see Fig.~\ref{fig:Main_Results}.~(e)):
\begin{align}\label{Eq:Cxmain}
    C(x)=(F_\sigma*P)(x)=\sum\limits_{j=0}^{2^n-1}p_jF_{\sigma}(x-E_j)\approx\int_{-T}^{T}\hat{F}_{\sigma}(t)\langle\psi_c|e^{-i2\pi t(H-xI)}|\psi_c\rangle dt,
\end{align}
where $P(x)=\sum_{j=0}^{2^n-1}p_j\delta(x -E_j)$ is the spectral function of the guiding state $|\psi_c\rangle$ with $p_j=\abs{\langle\phi_j|\psi_c\rangle}^2$, and $\delta(\cdot)$ is the Dirac delta function. The Gaussian derivative filter function is $F_{\sigma}(t)=-te^{-t^2/(2\sigma^2)}/(\sigma^3\sqrt{2\pi})$ and $\hat{F}_{\sigma}(t)$ represents its Fourier transform with $\sigma=\mathcal{O}(\Delta/\log(\Delta\epsilon^{-1}\gamma^{-1}))$. Although the filter function $C(x)$ is formally defined over $x\in\mathbb{R}$, its concentration around $E_0$ allows us to truncate the integral to $T=\mathcal{O}(\sqrt{\ln(1/\epsilon_1)}/\Delta)$, for which the deviation from the original function $C(x)$ can be bounded by $\epsilon_1=\mathcal{O}(\gamma\epsilon\sigma^{-3})$~\cite{wang2023quantum}. Since the Gaussian derivative filter has an exponentially-decaying tail, $(F_{\sigma}*P)(x)$ is dominated by $\gamma F_{\sigma}(x-E_0)$ in the vicinity of $E_0$, which decays monotonically to zero when $x\leq E_0$ approaches $E_0$ (see Fig.~\ref{fig:Main_Results}.~(e)). Hence, our scheme proceeds by sampling $C(x)$ beginning at $x=E_a$, a lower bound to $E_0$, and an interval of $\varepsilon$. The algorithm outputs the estimation $\hat{E}_0$ when $C(x)$ decays below the termination threshold $\epsilon_1/2$.


Whether the above quantum algorithm can be dequantized depends on the ability to efficiently evaluate the Loschmidt echo 
\(\langle \psi_c | e^{-iHt} | \psi_c \rangle\) in Eq.~\eqref{Eq:Cxmain} for evolution time \(t \leq T\). In the worst case, estimating the Loschmidt echo for \(t = \mathcal{O}(n)\) is BQP-complete, while efficient computation is only known for the regime \(|t| < t^*\), where \(t^*\) denotes a constant threshold, using the cluster expansion method~\cite{wild2023classical}. In this work, we demonstrate that the Loschmidt echo can in fact be computed classically for any constant \(t\).  

The key idea is to consider the vacuum state $|\Omega\rangle=|0^n\rangle$ of the Hamiltonian $H$, which satisfies $e^{-iHt}|\Omega\rangle=|\Omega\rangle$ due to particle number symmetry preservation. Now, consider a superposed input state $ |\psi_1\rangle=\frac{1}{\sqrt{2}}\left(|\Omega\rangle+|\psi_c\rangle\right)$ and observables $O_1=|\psi_c\rangle\langle\Omega|$ and $O_2=|\Omega\rangle\langle\psi_c|$, the Loschmidt echo can be expressed as (see Fig.~\ref{fig:Main_Results}.~(f)):
\begin{eqnarray}
  \begin{split}
        {\rm Re}\left[\langle\psi_c|e^{-iHt}|\psi_c\rangle\right]&=\langle\psi_1|e^{iHt}{\left(O_1+O_2\right)}e^{-iHt}|\psi_1\rangle,\\
        {\rm Im}\left[\langle\psi_c|e^{-iHt}|\psi_c\rangle\right]&=i\langle\psi_1|e^{iHt}{\left(O_2-O_1\right)}e^{-iHt}|\psi_1\rangle,
  \end{split}
\end{eqnarray}
As a result, simulating $\langle\psi_1|e^{iHt}O_1e^{-iHt}|\psi_1\rangle$ and $\langle\psi_1|e^{iHt}O_2e^{-iHt}|\psi_1\rangle$ suffices to obtain the the Loschmidt echo $\langle\psi_c|e^{-iHt}|\psi_c\rangle$. Since the input state $|\psi_1\rangle$ is the linear combinations of computational basis states drawn from a Clifford ensemble, as a result, we can apply the proposed classical simulation algorithm to compute the mean value for each involved computational basis, and one may obtain $\langle\psi_c|e^{-iHt}|\psi_c\rangle$ with high success probability according to Theorem~\ref{theorem1}.

\begin{theorem}[Informal]
Suppose an $R={\rm poly}(n)$-configurational classical guiding state $\ket{\psi_c}$ is given, which has a constant overlap to the ground state of $H$. Then there exists a polynomial classical algorithm to solve the GLH problem, which outputs an approximation to the ground state energy $E_0$ within a constant additive error and an exponentially small (${\rm poly}(n)/2^n$) failure probability.
\label{theorem:dequantize}
\end{theorem}

This result establishes a fundamental connection between the simulation of quantum dynamics and the solution of the GLH problem, which is expected to have independent applications in other domains. Furthermore, it suggests that existing quantum algorithms~\cite{lin2022heisenberg,wang2023quantum} for the GLH problem do not inherently guarantee exponential quantum advantage when the accuracy requirement is $\epsilon=\mathcal{O}(1)$. The key to realizing such an advantage lies in preparing a nontrivial initial state that does not admit an efficient classical representation. This necessitates the exploration of alternative quantum algorithms, such as adiabatic state preparation~\cite{ge2016rapid,crosson2021prospects} and engineered dissipation~\cite{mi2024stable,harrington2022engineered}, to evaluate their potential for quantum speedups. Consequently, this consideration refines the requirements for achieving quantum computational advantage in LH problems.

Moreover, if the Hamiltonian is restricted to a 2D architecture, the success probability can be boosted to unity. Suppose $|\psi_c\rangle$ has configurations $|\bm j\rangle$, and denote $|\bm j\rangle\langle\Omega|=O_1\otimes \cdots \otimes O_n$, $U_j(t)=e^{iHt}O_je^{-iHt}$ for $j\in[n]$, then our first step aims to approximate $U_j(t)$ by $V_j(t)$ such that $\|U_j(t)-V_j(t)\|\leq \mathcal{O}(\epsilon/2n)$. Here, the operator $V_j(t)$ is a given by the cluster expansion, which nontrivial act on at most $M=\mathcal{O}(e^{\mathfrak{d}t}\log(2ne^{\mathfrak{d}t}/\epsilon))$ qubits, as a result, the mean value can be approximated by $\langle\psi_1|V_1(t)\cdots V_n(t)|\psi_1\rangle$ within $\epsilon/2$-additive error. The second step applies the causality principle and the support of $V_j(t)$ to assign $\{V_j(t)\}_{j=1}^n$ into two different groups, which are denoted by $V(R_1)$ and $V(R_2)$, where regions $R_1$ and $R_2$ are disjoint and $R_1\cup R_2$ compose the whole system. This method is first studied in Ref.~\cite{bravyi2021classical} to simulate constant 2D digital quantum circuits. It is shown that each region ($R_1$ or $R_2$) consists of $\sqrt{n}/4M$ sub-regions which are separated by $\geq 2M$ distance. This property enables operators $V(R_1)$ and $V(R_2)$ are easy to simulate classically, and the quantum dynamics mean value has the form $\hat{\mu}(t)=\langle\psi_1|V(R_1)V(R_2)|\psi_1\rangle$. Then the classical Monte Carlo algorithm can be used to approximate $\hat{\mu}(t)$ in $\mathcal{O}(1/\epsilon^2)$ running time, such that $\abs{\hat{\mu}(t)-\mu(t)}\leq\epsilon$. 

\begin{theorem}[Informal]
Suppose a 2D geometrically local Hamiltonian $H$, and
an $R={\rm poly}(n)$-configurational classical guiding state $\ket{\psi_c}$ which has a constant overlap to the ground state of $H$. There exists a classical algorithm to solve the GLH problem with quasi-polynomial runtime, 
which outputs an approximation to the ground state energy $E_0$ within a constant additive error.
\end{theorem}

Here, the quasi-polynomial complexity $\mathcal{O}(n^{\log n})$ arises from sampling over operators $V(R_1)$ and $V(R_2)$. Suppose $|\psi_1\rangle=\sum_{\bm j}a_j|\bm j\rangle$, then we can write $\langle\psi_1|V(R_1)V(R_2)|\psi_1\rangle=\sum_{\bm j,\bm j^{\prime}}a_ja_{j^{\prime}}^*\langle\bm j^{\prime}|V(R_1)V(R_2)|\bm j\rangle$, where each term $\langle \bm j^{\prime}|V(R_1)V(R_2)|\bm j\rangle$ can be simulated by using the classical Monte Carlo method. We write $\langle\bm j^{\prime}|V(R_1)V(R_2)|\bm j\rangle=\sum_x\abs{\langle\bm j^{\prime}|V(R_1)|x\rangle}^2\langle x|V(R_2)|\bm j\rangle/\langle x|V^{\dagger}(R_1)|\bm j^{\prime}\rangle$, and the amplitude $\langle\bm j^{\prime}|V(R_q)|x\rangle=\prod_{s}\langle \bm j^{\prime}_{{R_q(s)}}|V(R_q(s))|x_{R_q(s)}\rangle$ by the lightcone argument, where $V(R_q(s))$ is a quasi-1D operator acting on a subsystem of size $\sqrt{n}\times M$, indexes $q\in\{1,2\}$ and $s\in\{1,\cdots,[\sqrt{n}/M]\}$. The local amplitude $\langle \bm j^{\prime}_{R_q(s)}|V(R_q(s))|x_{R_q(s)}\rangle$ can be estimated using a row-by-row sampling strategy---each time one samples only the $M$ qubits within an $M\times M$-sized 2D local window. By sliding this window across the $\sqrt{n}\times M$-sized subsystem and repeating the procedure $\sqrt{n}/M$ times, an estimate of the local amplitude is obtained. The total running time is $\sqrt{n}2^{\mathcal{O}(M^2)}/M$, a quasi-polynomial complexity when $M=\mathcal{O}(\log n)$.



\vspace{0.2cm}
\noindent\textbf{Discussion and Outlook---}In this work, we introduce a polynomial-time classical algorithm for simulating constant-time quantum dynamics, where the underlying Hamiltonian could be a general local Hamiltonian without geometry restrictions. We rigorously prove that for any local Hamiltonian $H$ and constant evolution time $t$, our algorithm efficiently achieves a small error $\epsilon=1/{\rm poly}(n)$ in estimating the quantum mean value $\langle\psi|e^{iHt}Oe^{-iHt}|\psi\rangle$ on all global Pauli observables $O$ and stabilizer states $|\psi\rangle$ except for an exponentially small fraction. This implies that the quantum advantage of near-term quantum algorithms—such as VQE, QAOA, and QML algorithms based on Hamiltonian variational ansatz—should be reconsidered, even in noise-free quantum circuits. This extends the quantum advantage boundary from constant noise to the noiseless regime~\cite{stilck2021limitations,aharonov2022polynomial,shao2024simulatingnoisyvariationalquantum}.
Our results further demonstrate a close connection between the quantum mean value problem and the GLH problem, showing that the ground-state energy of a constant-gapped local Hamiltonian can be efficiently computed by a classical algorithm with high probability, provided that both the required accuracy and the guided-state overlap are constants. 
Our result thus indicates the importance of preparing a nontrivial guiding state via other quantum algorithms.

This work leaves substantial opportunities for further research. First, the results demonstrate that the quantum mean value $\langle\psi|e^{iHt}Oe^{-iHt}|\psi\rangle$ can be efficiently simulated by a classical algorithm in the average case; however, the possibility of classically intractable worst-case instances is not excluded. An important direction for future investigation is whether the proposed algorithm can be extended to address worst-case scenarios, or how to identify the exponentially small subsets of states and observables for which classical simulation becomes inefficient.
Additionally, we have shown that the GLH problem, when supplied with a favorable initial state, can be solved in polynomial time under the assumptions of a constant spectral gap and a constant precision requirement. The quantum PCP conjecture asserts that, in the worst case, deciding whether the ground‑state energy of a local Hamiltonian lies within a constant promise gap is QMA‑hard. It remains an open question whether identifying such a “good” initial state is itself QMA‑hard for general quantum systems, or whether the conjecture does not apply to constant‑gap Hamiltonians. Finally, recent studies have also demonstrated that expectation values on noisy quantum circuits with deep circuit depths are classically simulable~\cite{shao2024simulatingnoisyvariationalquantum,schuster2024polynomialtimeclassicalalgorithmnoisy}. These findings highlight the theoretical limitations of near-term quantum computation in the NISQ era. However, the proposed classical algorithms are efficient only in the asymptotic sense—valid for large $n$ and constant error rates—yet they still incur large exponents, leaving room for practical quantum advantages with polynomial speedups.  
Moreover, our work considers expectation values in the absence of mid-circuit measurements and feedforward operations, which are essential for fault-tolerant quantum computing. How to classically simulate quantum processes with mid-circuit measurements and feedforward operations remains an intriguing open question.

\vspace{8pt}

\section*{Data availability}
\noindent 
No datasets were generated or analysed during the current study.

\section*{Competing interests}
\noindent 
The authors declare no competing interests.

\section*{Acknowledgments}
\noindent The authors would like to express gratitude to Xiaoming Zhang, Jinzhao Sun, Zhongxia Shang, Bujiao Wu and Jingbo B. Wang for valuable discussions. This work is supported by the Innovation Program for Quantum Science and Technology (Grant No.~2023ZD0300200), the National Natural Science Foundation of China Grant (No.~12175003, No.~12361161602 and  No.~62131002),  NSAF (Grant No.~U2330201). 
\clearpage
\bibliography{main.bbl}

\begin{thebibliography}{10}

\bibitem{janzing2005ergodic}
Dominik Janzing and Pawel Wocjan.
\newblock Ergodic quantum computing.
\newblock {\em Quantum Information Processing}, 4(2):129--158, 2005.

\bibitem{gharibian2022improved}
Sevag Gharibian, Ryu Hayakawa, Fran{\c{c}}ois~Le Gall, and Tomoyuki Morimae.
\newblock Improved hardness results for the guided local hamiltonian problem.
\newblock {\em arXiv preprint arXiv:2207.10250}, 2022.

\bibitem{cade2022complexity}
Chris Cade, Marten Folkertsma, and Jordi Weggemans.
\newblock Complexity of the guided local hamiltonian problem: Improved parameters and extension to excited states.
\newblock {\em arXiv preprint arXiv:2207.10097}, 2022.

\bibitem{bravyi2021classical}
Sergey Bravyi, David Gosset, and Ramis Movassagh.
\newblock Classical algorithms for quantum mean values.
\newblock {\em Nature Physics}, 17(3):337--341, 2021.

\bibitem{schollwock2005density}
Ulrich Schollw{\"o}ck.
\newblock The density-matrix renormalization group.
\newblock {\em Reviews of modern physics}, 77(1):259--315, 2005.

\bibitem{schollwock2011density}
Ulrich Schollw{\"o}ck.
\newblock The density-matrix renormalization group in the age of matrix product states.
\newblock {\em Annals of physics}, 326(1):96--192, 2011.

\bibitem{orus2014practical}
Rom{\'a}n Or{\'u}s.
\newblock A practical introduction to tensor networks: Matrix product states and projected entangled pair states.
\newblock {\em Annals of physics}, 349:117--158, 2014.

\bibitem{bridgeman2017hand}
Jacob~C Bridgeman and Christopher~T Chubb.
\newblock Hand-waving and interpretive dance: an introductory course on tensor networks.
\newblock {\em Journal of physics A: Mathematical and theoretical}, 50(22):223001, 2017.

\bibitem{eisert2010colloquium}
Jens Eisert, Marcus Cramer, and Martin~B Plenio.
\newblock Colloquium: Area laws for the entanglement entropy.
\newblock {\em Reviews of modern physics}, 82(1):277--306, 2010.

\bibitem{anshu2022area}
Anurag Anshu, Itai Arad, and David Gosset.
\newblock An area law for 2d frustration-free spin systems.
\newblock In {\em Proceedings of the 54th Annual ACM SIGACT Symposium on Theory of Computing}, pages 12--18, 2022.

\bibitem{cade2022improved}
Chris Cade, Marten Folkertsma, Sevag Gharibian, Ryu Hayakawa, Fran{\c{c}}ois~Le Gall, Tomoyuki Morimae, and Jordi Weggemans.
\newblock Improved hardness results for the guided local hamiltonian problem.
\newblock {\em arXiv preprint arXiv:2207.10250}, 2022.

\bibitem{gharibian2022dequantizing}
Sevag Gharibian and Fran{\c{c}}ois Le~Gall.
\newblock Dequantizing the quantum singular value transformation: hardness and applications to quantum chemistry and the quantum pcp conjecture.
\newblock In {\em Proceedings of the 54th Annual ACM SIGACT Symposium on Theory of Computing}, pages 19--32, 2022.

\bibitem{wild2023classical}
Dominik~S Wild and {\'A}lvaro~M Alhambra.
\newblock Classical simulation of short-time quantum dynamics.
\newblock {\em PRX Quantum}, 4(2):020340, 2023.

\bibitem{lin2022heisenberg}
Lin Lin and Yu~Tong.
\newblock Heisenberg-limited ground-state energy estimation for early fault-tolerant quantum computers.
\newblock {\em PRX Quantum}, 3(1):010318, 2022.

\bibitem{wang2023quantum}
Guoming Wang, Daniel~Stilck Fran{\c{c}}a, Ruizhe Zhang, Shuchen Zhu, and Peter~D Johnson.
\newblock Quantum algorithm for ground state energy estimation using circuit depth with exponentially improved dependence on precision.
\newblock {\em Quantum}, 7:1167, 2023.

\bibitem{takagi2023universal}
Ryuji Takagi, Hiroyasu Tajima, and Mile Gu.
\newblock Universal sampling lower bounds for quantum error mitigation.
\newblock {\em Physical Review Letters}, 131(21):210602, 2023.

\bibitem{quek2024exponentially}
Yihui Quek, Daniel Stilck~Fran{\c{c}}a, Sumeet Khatri, Johannes~Jakob Meyer, and Jens Eisert.
\newblock Exponentially tighter bounds on limitations of quantum error mitigation.
\newblock {\em Nature Physics}, 20(10):1648--1658, 2024.

\bibitem{haah2024learning}
Jeongwan Haah, Robin Kothari, and Ewin Tang.
\newblock Learning quantum hamiltonians from high-temperature gibbs states and real-time evolutions.
\newblock {\em Nature Physics}, pages 1--5, 2024.

\bibitem{bakshi2024high}
Ainesh Bakshi, Allen Liu, Ankur Moitra, and Ewin Tang.
\newblock High-temperature gibbs states are unentangled and efficiently preparable.
\newblock In {\em 2024 IEEE 65th Annual Symposium on Foundations of Computer Science (FOCS)}, pages 1027--1036. IEEE, 2024.

\bibitem{angrisani2024classically}
Armando Angrisani, Alexander Schmidhuber, Manuel~S Rudolph, M~Cerezo, Zo{\"e} Holmes, and Hsin-Yuan Huang.
\newblock Classically estimating observables of noiseless quantum circuits.
\newblock {\em arXiv preprint arXiv:2409.01706}, 2024.

\bibitem{haah2021quantum}
Jeongwan Haah, Matthew~B Hastings, Robin Kothari, and Guang~Hao Low.
\newblock Quantum algorithm for simulating real time evolution of lattice hamiltonians.
\newblock {\em SIAM Journal on Computing}, 52(6):FOCS18--250, 2021.

\bibitem{arute2019quantum}
Frank Arute, Kunal Arya, Ryan Babbush, Dave Bacon, Joseph~C Bardin, Rami Barends, Rupak Biswas, Sergio Boixo, Fernando~GSL Brandao, David~A Buell, et~al.
\newblock Quantum supremacy using a programmable superconducting processor.
\newblock {\em Nature}, 574(7779):505--510, 2019.

\bibitem{wu2021strong}
Yulin Wu, Wan-Su Bao, Sirui Cao, Fusheng Chen, Ming-Cheng Chen, Xiawei Chen, Tung-Hsun Chung, Hui Deng, Yajie Du, Daojin Fan, et~al.
\newblock Strong quantum computational advantage using a superconducting quantum processor.
\newblock {\em Physical review letters}, 127(18):180501, 2021.

\bibitem{peruzzo2014variational}
Alberto Peruzzo, Jarrod McClean, Peter Shadbolt, Man-Hong Yung, Xiao-Qi Zhou, Peter~J Love, Al{\'a}n Aspuru-Guzik, and Jeremy~L O’brien.
\newblock A variational eigenvalue solver on a photonic quantum processor.
\newblock {\em Nature Communications}, 5(1):1--7, 2014.

\bibitem{huang2023efficient}
Yifei Huang, Yuguo Shao, Weiluo Ren, Jinzhao Sun, and Dingshun Lv.
\newblock Efficient quantum imaginary time evolution by drifting real-time evolution: an approach with low gate and measurement complexity.
\newblock {\em Journal of Chemical Theory and Computation}, 19(13):3868--3876, 2023.

\bibitem{wu2023orbital}
Yusen Wu, Zigeng Huang, Jinzhao Sun, Xiao Yuan, Jingbo~B Wang, and Dingshun Lv.
\newblock Orbital expansion variational quantum eigensolver.
\newblock {\em Quantum Science and Technology}, 2023.

\bibitem{farhi2014quantum}
Edward Farhi, Jeffrey Goldstone, and Sam Gutmann.
\newblock A quantum approximate optimization algorithm.
\newblock {\em arXiv preprint arXiv:1411.4028}, 2014.

\bibitem{cerezo2021variational}
Marco Cerezo, Andrew Arrasmith, Ryan Babbush, Simon~C Benjamin, Suguru Endo, Keisuke Fujii, Jarrod~R McClean, Kosuke Mitarai, Xiao Yuan, Lukasz Cincio, et~al.
\newblock Variational quantum algorithms.
\newblock {\em Nature Reviews Physics}, 3(9):625--644, 2021.

\bibitem{zhang2022quantum}
Yukun Zhang, Yifei Huang, Jinzhao Sun, Dingshun Lv, and Xiao Yuan.
\newblock Quantum computing quantum monte carlo.
\newblock {\em arXiv preprint arXiv:2206.10431}, 2022.

\bibitem{smith2016many}
Jacob Smith, Aaron Lee, Philip Richerme, Brian Neyenhuis, Paul~W Hess, Philipp Hauke, Markus Heyl, David~A Huse, and Christopher Monroe.
\newblock Many-body localization in a quantum simulator with programmable random disorder.
\newblock {\em Nature Physics}, 12(10):907--911, 2016.

\bibitem{evered2023high}
Simon~J Evered, Dolev Bluvstein, Marcin Kalinowski, Sepehr Ebadi, Tom Manovitz, Hengyun Zhou, Sophie~H Li, Alexandra~A Geim, Tout~T Wang, Nishad Maskara, et~al.
\newblock High-fidelity parallel entangling gates on a neutral-atom quantum computer.
\newblock {\em Nature}, 622(7982):268--272, 2023.

\bibitem{morvan2024phase}
Alexis Morvan, B~Villalonga, X~Mi, S~Mandra, A~Bengtsson, PV~Klimov, Z~Chen, S~Hong, C~Erickson, IK~Drozdov, et~al.
\newblock Phase transitions in random circuit sampling.
\newblock {\em Nature}, 634(8033):328--333, 2024.

\bibitem{kempe2006complexity}
Julia Kempe, Alexei Kitaev, and Oded Regev.
\newblock The complexity of the local hamiltonian problem.
\newblock {\em Siam journal on computing}, 35(5):1070--1097, 2006.

\bibitem{gharibian2015quantum}
Sevag Gharibian, Yichen Huang, Zeph Landau, Seung~Woo Shin, et~al.
\newblock Quantum hamiltonian complexity.
\newblock {\em Foundations and Trends{\textregistered} in Theoretical Computer Science}, 10(3):159--282, 2015.

\bibitem{wilson1983superconducting}
Martin~N Wilson.
\newblock Superconducting magnets.
\newblock 1983.

\bibitem{wheatley1975experimental}
John~C Wheatley.
\newblock Experimental properties of superfluid he 3.
\newblock {\em Reviews of modern physics}, 47(2):415, 1975.

\bibitem{wen1995topological}
Xiao-Gang Wen.
\newblock Topological orders and edge excitations in fractional quantum hall states.
\newblock {\em Advances in Physics}, 44(5):405--473, 1995.

\bibitem{kane2005z}
Charles~L Kane and Eugene~J Mele.
\newblock Z 2 topological order and the quantum spin hall effect.
\newblock {\em Physical review letters}, 95(14):146802, 2005.

\bibitem{kitaev2002classical}
Alexei~Yu Kitaev, Alexander Shen, and Mikhail~N Vyalyi.
\newblock {\em Classical and quantum computation}.
\newblock Number~47. American Mathematical Soc., 2002.

\bibitem{childs2014bose}
Andrew~M Childs, David Gosset, and Zak Webb.
\newblock The bose-hubbard model is qma-complete.
\newblock In {\em Automata, Languages, and Programming: 41st International Colloquium, ICALP 2014, Copenhagen, Denmark, July 8-11, 2014, Proceedings, Part I 41}, pages 308--319. Springer, 2014.

\bibitem{o2021electronic}
Bryan O'Gorman, Sandy Irani, James Whitfield, and Bill Fefferman.
\newblock Electronic structure in a fixed basis is qma-complete.
\newblock {\em arXiv preprint arXiv:2103.08215}, 2021.

\bibitem{aharonov2002quantum}
Dorit Aharonov and Tomer Naveh.
\newblock Quantum np-a survey.
\newblock {\em arXiv preprint quant-ph/0210077}, 2002.

\bibitem{karp2010reducibility}
Richard~M Karp.
\newblock {\em Reducibility among combinatorial problems}.
\newblock Springer, 2010.

\bibitem{aharonov2013guest}
Dorit Aharonov, Itai Arad, and Thomas Vidick.
\newblock Guest column: the quantum pcp conjecture.
\newblock {\em Acm sigact news}, 44(2):47--79, 2013.

\bibitem{lin2020near}
Lin Lin and Yu~Tong.
\newblock Near-optimal ground state preparation.
\newblock {\em Quantum}, 4:372, 2020.

\bibitem{dong2022ground}
Yulong Dong, Lin Lin, and Yu~Tong.
\newblock Ground-state preparation and energy estimation on early fault-tolerant quantum computers via quantum eigenvalue transformation of unitary matrices.
\newblock {\em PRX Quantum}, 3(4):040305, 2022.

\bibitem{wan2022randomized}
Kianna Wan, Mario Berta, and Earl~T Campbell.
\newblock Randomized quantum algorithm for statistical phase estimation.
\newblock {\em Physical Review Letters}, 129(3):030503, 2022.

\bibitem{ding2023even}
Zhiyan Ding and Lin Lin.
\newblock Even shorter quantum circuit for phase estimation on early fault-tolerant quantum computers with applications to ground-state energy estimation.
\newblock {\em PRX Quantum}, 4(2):020331, 2023.

\bibitem{ni2023low}
Hongkang Ni, Haoya Li, and Lexing Ying.
\newblock On low-depth algorithms for quantum phase estimation.
\newblock {\em Quantum}, 7:1165, 2023.

\bibitem{gall2024classical}
Fran{\c{c}}ois~Le Gall.
\newblock Classical algorithms for constant approximation of the ground state energy of local hamiltonians.
\newblock {\em arXiv preprint arXiv:2410.21833}, 2024.

\bibitem{gilyen2019quantum}
Andr{\'a}s Gily{\'e}n, Yuan Su, Guang~Hao Low, and Nathan Wiebe.
\newblock Quantum singular value transformation and beyond: exponential improvements for quantum matrix arithmetics.
\newblock In {\em Proceedings of the 51st Annual ACM SIGACT Symposium on Theory of Computing}, pages 193--204, 2019.

\bibitem{ge2016rapid}
Yimin Ge, Andr{\'a}s Moln{\'a}r, and J~Ignacio Cirac.
\newblock Rapid adiabatic preparation of injective projected entangled pair states and gibbs states.
\newblock {\em Physical review letters}, 116(8):080503, 2016.

\bibitem{crosson2021prospects}
EJ~Crosson and DA~Lidar.
\newblock Prospects for quantum enhancement with diabatic quantum annealing.
\newblock {\em Nature Reviews Physics}, 3(7):466--489, 2021.

\bibitem{mi2024stable}
Xiao Mi, AA~Michailidis, Sara Shabani, KC~Miao, PV~Klimov, J~Lloyd, E~Rosenberg, R~Acharya, I~Aleiner, TI~Andersen, et~al.
\newblock Stable quantum-correlated many-body states through engineered dissipation.
\newblock {\em Science}, 383(6689):1332--1337, 2024.

\bibitem{harrington2022engineered}
Patrick~M Harrington, Erich~J Mueller, and Kater~W Murch.
\newblock Engineered dissipation for quantum information science.
\newblock {\em Nature Reviews Physics}, 4(10):660--671, 2022.

\bibitem{stilck2021limitations}
Daniel Stilck~Fran{\c{c}}a and Raul Garcia-Patron.
\newblock Limitations of optimization algorithms on noisy quantum devices.
\newblock {\em Nature Physics}, 17(11):1221--1227, 2021.

\bibitem{aharonov2022polynomial}
Dorit Aharonov, Xun Gao, Zeph Landau, Yunchao Liu, and Umesh Vazirani.
\newblock A polynomial-time classical algorithm for noisy random circuit sampling.
\newblock {\em arXiv preprint arXiv:2211.03999}, 2022.

\bibitem{shao2024simulatingnoisyvariationalquantum}
Yuguo Shao, Fuchuan Wei, Song Cheng, and Zhengwei Liu.
\newblock Simulating noisy variational quantum algorithms: A polynomial approach, 2024.

\bibitem{schuster2024polynomialtimeclassicalalgorithmnoisy}
Thomas Schuster, Chao Yin, Xun Gao, and Norman~Y. Yao.
\newblock A polynomial-time classical algorithm for noisy quantum circuits, 2024.

\bibitem{terhal2002adaptive}
Barbara~M Terhal and David~P DiVincenzo.
\newblock Adaptive quantum computation, constant depth quantum circuits and arthur-merlin games.
\newblock {\em arXiv preprint quant-ph/0205133}, 2002.

\bibitem{hastings2005quasiadiabatic}
Matthew~B Hastings and Xiao-Gang Wen.
\newblock Quasiadiabatic continuation of quantum states: The stability of topological ground-state degeneracy and emergent gauge invariance.
\newblock {\em Physical Review B—Condensed Matter and Materials Physics}, 72(4):045141, 2005.

\bibitem{osborne2007simulating}
Tobias~J Osborne.
\newblock Simulating adiabatic evolution of gapped spin systems.
\newblock {\em Physical Review A—Atomic, Molecular, and Optical Physics}, 75(3):032321, 2007.

\bibitem{chen2010local}
Xie Chen, Zheng-Cheng Gu, and Xiao-Gang Wen.
\newblock Local unitary transformation, long-range quantum entanglement, wave function renormalization, and topological order.
\newblock {\em Physical Review B—Condensed Matter and Materials Physics}, 82(15):155138, 2010.

\bibitem{lieb1972finite}
Elliott~H Lieb and Derek~W Robinson.
\newblock The finite group velocity of quantum spin systems.
\newblock {\em Communications in mathematical physics}, 28(3):251--257, 1972.

\bibitem{hastings2010locality}
Matthew~B Hastings.
\newblock Locality in quantum systems.
\newblock {\em Quantum Theory from Small to Large Scales}, 95:171--212, 2010.

\bibitem{childs2021theory}
Andrew~M Childs, Yuan Su, Minh~C Tran, Nathan Wiebe, and Shuchen Zhu.
\newblock Theory of trotter error with commutator scaling.
\newblock {\em Physical Review X}, 11(1):011020, 2021.

\bibitem{basso2022performance}
Joao Basso, David Gamarnik, Song Mei, and Leo Zhou.
\newblock Performance and limitations of the qaoa at constant levels on large sparse hypergraphs and spin glass models.
\newblock In {\em 2022 IEEE 63rd Annual Symposium on Foundations of Computer Science (FOCS)}, pages 335--343. IEEE, 2022.

\bibitem{anshu2023concentration}
Anurag Anshu and Tony Metger.
\newblock Concentration bounds for quantum states and limitations on the qaoa from polynomial approximations.
\newblock {\em Quantum}, 7:999, 2023.

\bibitem{low2018hamiltonian}
Guang~Hao Low and Nathan Wiebe.
\newblock Hamiltonian simulation in the interaction picture.
\newblock {\em arXiv preprint arXiv:1805.00675}, 2018.

\bibitem{cerezo2021cost}
Marco Cerezo, Akira Sone, Tyler Volkoff, Lukasz Cincio, and Patrick~J Coles.
\newblock Cost function dependent barren plateaus in shallow parametrized quantum circuits.
\newblock {\em Nature Communications}, 12(1):1--12, 2021.

\bibitem{bjorklund2008fast}
Andreas Bj{\"o}rklund, Thore Husfeldt, Petteri Kaski, and Mikko Koivisto.
\newblock The fast intersection transform with applications to counting paths.
\newblock {\em arXiv preprint arXiv:0809.2489}, 2008.

\bibitem{quek2022exponentially}
Yihui Quek, Stilck~Daniel Franc, Sumeet Khatri, Jakob~Johannes Meyer, and Jens Eisert.
\newblock Exponentially tighter bounds on limitations of quantum error mitigation.
\newblock {\em arXiv preprint arXiv:2210.11505}, 2022.

\bibitem{wang2021noise}
Samson Wang, Enrico Fontana, Marco Cerezo, Kunal Sharma, Akira Sone, Lukasz Cincio, and Patrick~J Coles.
\newblock Noise-induced barren plateaus in variational quantum algorithms.
\newblock {\em Nature communications}, 12(1):1--11, 2021.

\bibitem{haah2016sample}
Jeongwan Haah, Aram~W Harrow, Zhengfeng Ji, Xiaodi Wu, and Nengkun Yu.
\newblock Sample-optimal tomography of quantum states.
\newblock In {\em Proceedings of the forty-eighth annual ACM symposium on Theory of Computing}, pages 913--925, 2016.

\bibitem{setia2019superfast}
Kanav Setia, Sergey Bravyi, Antonio Mezzacapo, and James~D Whitfield.
\newblock Superfast encodings for fermionic quantum simulation.
\newblock {\em Physical Review Research}, 1(3):033033, 2019.

\bibitem{stanisic2022observing}
Stasja Stanisic, Jan~Lukas Bosse, Filippo~Maria Gambetta, Raul~A Santos, Wojciech Mruczkiewicz, Thomas~E O’Brien, Eric Ostby, and Ashley Montanaro.
\newblock Observing ground-state properties of the fermi-hubbard model using a scalable algorithm on a quantum computer.
\newblock {\em Nature communications}, 13(1):5743, 2022.

\bibitem{montorsi2012nonlocal}
Arianna Montorsi and Marco Roncaglia.
\newblock Nonlocal order parameters for the 1d hubbard model.
\newblock {\em Physical review letters}, 109(23):236404, 2012.

\bibitem{barbiero2013hidden}
Luca Barbiero, Arianna Montorsi, and Marco Roncaglia.
\newblock How hidden orders generate gaps in one-dimensional fermionic systems.
\newblock {\em Physical Review B—Condensed Matter and Materials Physics}, 88(3):035109, 2013.

\bibitem{gross2018graph}
Jonathan~L Gross, Jay Yellen, and Mark Anderson.
\newblock {\em Graph theory and its applications}.
\newblock Chapman and Hall/CRC, 2018.

\bibitem{garey1979computers}
Michael~R Garey and David~S Johnson.
\newblock {\em Computers and intractability}, volume 174.
\newblock freeman San Francisco, 1979.

\bibitem{kadowaki1998quantum}
Tadashi Kadowaki and Hidetoshi Nishimori.
\newblock Quantum annealing in the transverse ising model.
\newblock {\em Physical Review E}, 58(5):5355, 1998.

\end{thebibliography}
\clearpage

\section*{Supplementary Information}
\appendix

\section{Comparison to related Results}
In this section, we review recent progress in studying the classical simulation of short-time Hamiltonian simulation and shallow-depth quantum circuits. It has been long known that the simulation of Hamiltonian evolution or quantum circuits is classically intractable. For example, in the seminal work of Ref.~\cite{terhal2002adaptive}, it is shown that the simulation (sample from) of constant-depth quantum circuits is hard unless the polynomial hierarchy collapses. This hardness result persists even for depth-$3$ 2D quantum circuits, indicating the non-simulatability even for constant-depth quantum circuits in low dimensions. Nevertheless, recent efforts have found that generating samples from a constant-depth quantum circuit that supports a geometrical lattice can indeed be simulated efficiently by a classical computer, a counter-intuitive result shows that quantum-classical computational separation may not be fully understood yet. Besides, for more physical-relate purposes, e.g.~estimating expectation values of observables with respect to the quantum circuits, the classical simulation could become tractable in certain scenarios~\cite{bravyi2021classical}.

While it is known that Hamiltonian simulation for a polynomial time is a BQP-complete problem, it is still open that whether decreasing the simulation time to constant or poly-logarithmic the problem remains BQP-complete. To this end, Ref.~\cite{wild2023classical} considered the estimation of the expectation value of $k$-local observables $O$ concerning Hamiltonian evolutional circuit: $\langle O(t) \rangle:=\bra{\psi}e^{iHt} O e^{-iHt}\ket{\psi}$, where $\ket{\psi}$ is an initial state prescribed to be a product state, $H$ is $\mathfrak{d}$-sparse. Through techniques of cluster expansion~\cite{haah2024learning}, which also acts as a key ingredient in this work, a computational complexity scales super-polynomially with $t/t_c$ and $\frac{1}{\epsilon}$ is achieved for $t<t_c$, where $t_c:=1/(2e\mathfrak{d})$ is the critical time for reaching an $\epsilon\|O\|$ error in estimation.
Astute readers may find the results surprising as the complexity is independent of the Hamiltonian's operator norm $\|H\|$, which outperforms the state-of-the-art quantum algorithm~\cite{haah2021quantum} that has polynomial-logarithmic dependence on $\|H\|$. The key to this independence is that disconnected clusters have no contribution to the estimator $\langle O(t) \rangle$ resulting from the geometric locality of the Hamiltonian and sparsity of the observable.
Furthermore, by devising an ingenious analytic continuation of the estimator $\braket{O(t)}$, Ref.~\cite{wild2023classical} manage to extend the classical simulation to arbitrary $O_1$ time with the computational cost blow up to scale doubly exponential with $t/t_c$, that is $\mathrm{poly}\left((\frac{1}{\epsilon}\frac{t}{t_c})^\frac{t}{t_c}\right)$. However, Ref.~\cite{wild2023classical} can only handle the time evolution of local observable $O$.

On the other hand, the concept of `quasi-adiabatic continuation' (QAC)~\cite{hastings2005quasiadiabatic,osborne2007simulating} is developed for extracting physical properties of the adiabatic evolution of constant time. 
Here, constant-time adiabatic evolution is of special interest because it relates to the definition of quantum phases~\cite{chen2010local}. The QAC method is built primarily upon the Lieb-Robinson bound~\cite{lieb1972finite,hastings2010locality}, which utilizes the fact that in the Heisenberg picture, the short-time dynamics $O(t)=e^{iHt}Oe^{-iHt}$ is (approximately) confined in the light cone supported on the interacting graph of $O(0)$. Yet, when the simulation time remains constant, the Lieb-Robinson bound scales super-polynomially with $e^{\mathcal{O}(vt)}$ and $\frac{1}{\epsilon}$, where $v$ is the Lieb-Robinson velocity. As pointed out in Ref.~\cite{wild2023classical}, this result is outperformed by the cluster expansion method with alternatively a polynomial dependence on $\frac{1}{\epsilon}$. While the time-dependence case is not considered in Ref.~\cite{wild2023classical}, here we manage to extend to the classical simulation of time-dependent quantum dynamics, where algorithms with similar complexity are achieved.

S.~Bravyi et al.~\cite{bravyi2021classical} proposed a classical algorithm for simulating the quantum mean value problem for general classes of quantum observables $O=O_1\otimes\cdots\otimes O_n$ and $2$-dimensional constant-depth quantum circuits $U$. Specifically, they divide $UOU^{\dagger}=\prod_{i=1}^nUO_iU^{\dagger}$ into two operators $U_AU_B$, where $U_A$ and $U_B$ can be classically simulated easily. By utilizing the classical Monte Carlo method, they can efficiently simulate the value $\langle0^n|U_AU_B|0^n\rangle$ which approximates the quantum mean value. However, when the unitary $U$ is given by a Hamiltonian dynamics $e^{-iHt}$, the fundamental challenge arises because $e^{-iHt}O_ie^{iHt}$ may not be easily computed by the causal principle which only considers quantum gates in the light-cone of $UO_iU^{\dagger}$. 
If we utilize the Trotter-Suzuki method to translate Hamiltonian dynamics $e^{iHt}$ into quantum circuit model $\tilde{U}$, the resulting quantum circuit depth would be ${\rm poly}(\|H\|,t^{O(1)},(1/\epsilon)^{O(1)})$~\cite{childs2021theory}. When the Hamiltonian norm $\|H\|=\Theta(n)$ or the accuracy $\epsilon=\mathcal{O}(1/n)$, the corresponding quantum circuit $\tilde{U}$ will spread the information to the whole system, and computing $\tilde{U}O_i\tilde{U}^{\dagger}$ by causal principle would be classically hard. In this work, we present a technique able to overcome the above obstacle, by combining the cluster expansion method and analytic continuation. The proposed method can efficiently approximate $e^{-iHt}O_ie^{iHt}$ meanwhile limiting its support size to ${\rm poly}\log n$ rather than ${\rm poly}(n)$. As a result, our method nontrivial extends and generalizes Ref.~\cite{bravyi2021classical} in solving quantum mean value problems when $U$ is given by Hamiltonian dynamics.

From an application perspective, hybrid quantum-classical algorithms, such as quantum approximate optimization algorithms (QAOAs) and variational quantum eigensolvers (VQEs), are paradigmatic protocols for demonstrating the potential quantum advantage on near-term quantum devices. Yet, provable theoretical barriers~\cite{basso2022performance,anshu2023concentration} are found for constant-depth QAOA methods. For solving classical optimization problems, it is discovered that the computational basis measurement of constant-depth QAOA approaches will concentrate regarding the distribution of the output Hamming weight, which can then be used to show their inability to outperform classical algorithms. Our methods on the other hand feature a different direction for classical computation of the output results, which provides a computational-theoretical oriented perspective on the problem.

\section{Theoretical Background}
\label{Appendix:Def}
\subsection{Geometrical Local Hamiltonian}
\begin{definition}[Local Hamiltonian]
    A local Hamiltonian is composed by linear combinations of Hermitian operators $h_X$ which nontrivially acts on the qubit subset $X\subset S$ with the corresponding coefficient $\lambda_X$. Here, the coefficients satisfy $\abs{\lambda_X}\leq 1$ and are chosen such that $\|h_X\|=1$. Here, the subsystem set $X\subset S$ are not necessarily to be geometrical local. We define the associated Hamiltonian as $H=\sum_{X\subset S}\lambda_Xh_X$.
\end{definition}

In this article, we assume Hermitian terms $h_X$ are distinct and non-identity multi-qubit Pauli operators. Such assumption naturally satisfies $\|h_X\|=1$. computation.

\begin{definition}[Operator support~\cite{bravyi2021classical,haah2024learning}]
    The support ${\rm supp}(P)$ of an operator $P$ represents the minimal qubit set such that $P=O_{{\rm supp}(P)}\otimes I_{n\setminus{\rm supp}(P)}$ for some operator $O$.
\end{definition}

\subsection{Cluster and Interaction Graph}

\begin{definition}[Cluster induced by Hamiltonian]
    Given a general local Hamiltonian $$H=\sum_{X\subset S}\lambda_Xh_X,$$ a cluster $\bm{V}$ is defined as a nonempty multi-set of subsystems from $S$, where multi-sets allow an element appearing multiple times. The set of all clusters $\bm V$ with size $m$ is denoted by $\mathcal{C}_m$ and the set of all clusters is represented by $\mathcal{C}=\cup_{m\geq 1}\mathcal{C}_m$.
\end{definition}

For example, if the Hamiltonian $H=X_0X_1+Y_0Y_1$, then possible candidates for $\bm V$ would be $\{X_0X_1\}$, $\{Y_0Y_1\}$, $\{X_0X_1,X_0X_1\},\cdots$. We call the number of times a subsystem $X$ appears in a cluster $\bm{V}$ the multiplicity $\mu_{\bm{V}}(X)$, otherwise we assign $\mu_{\bm{V}}(X)=0$. Enumerate all subsets $X\subset S$ may determine the size of $\bm{V}$, that is $\abs{\bm{V}}=\sum_{X\subset S}\mu_{\bm V}(X)$. In the provided example, when $\bm V=\{X_0X_1,X_0X_1\}$, we have $\mu_{\bm V}(X_0X_1)=2$, $\mu_{\bm V}(Y_0Y_1)=0$ and $\abs{\bm V}=2$.

\begin{definition}[Interaction Graph]
    We associate with every cluster $\bm{V}$ a simple graph $G_{\bm{V}}$ which is also termed as the cluster graph. The vertices of $G_{\bm{V}}$ correspond to the subsystems in $\bm{V}$, with repeated subsystems also appearing as repeated vertices. Two distinct vertices $X$ and $Y$ are connected by an edge if and only if the respective subsystems overlap, that is ${\rm supp}(h_X)\cap{\rm supp}(h_Y)\neq\emptyset$.
\end{definition}

Suppose the cluster $\bm V=\{X_0X_1,X_0X_1\}$, then its corresponding interaction graph $G_{\bm V}$ has two vertices $v_1,v_2$, related to $X_0X_1$ and $X_0X_1$, respectively, and $v_1$ connects to $v_2$ since ${\rm supp}(X_0X_1)\cap {\rm supp}(X_0X_1)\neq \emptyset$. We say a cluster $\bm{V}$ is connected if and only if $G_{\bm{V}}$ is connected. We use the notation $\mathcal{G}_m$ to represent all connected clusters of size $m$ and $\mathcal{G}=\cup_{m\geq 1}\mathcal{G}_m$ for the set of all connected clusters.

\begin{definition}[Super-Interaction Graph]
\label{Def:superinteractiongraph}
    Suppose we have $L$ clusters $\bm V_1,\bm V_2,\cdots, \bm V_L$, we define the super-interaction graph $G^L_{\bm V_1,\cdots,\bm V_L}$ composed by interaction graphs $G_{\bm V_1},G_{\bm V_2},\cdots G_{\bm V_L}$, where vertices $\{h_X\}_{X\subset S_1\cup S_2\cdots\cup S_L}$ inherit from $G_{\bm V_1},G_{\bm V_2},\cdots G_{\bm V_L}$  and vertices $h_X$ and $h_Y$ are connected if ${\rm supp}(h_X)\cap{\rm supp}(h_Y)\neq\emptyset$. 
\end{definition}

In our paper, the super-interaction graph is generally induced by a Hamiltonian series, say $\{H^{(1)},\cdots,H^{(L)}\}$. From the above definition, we know that the super-interaction graph $G^L_{\bm V_1,\cdots,\bm V_L}$ contains $\sum_{k=1}^L\abs{G_{\bm V_k}}$ vertices. 

\begin{definition}[Connected Super-Interaction Graph]
\label{Def:consuperinteractiongraph}
The super-interaction-graph $G^L_{\bm V_1,\cdots,\bm V_L}$ is connected if and only if 
the super cluster $\bm V=(\bm V_1,\bm V_2,\cdots, \bm V_L)$ is connected. All $m$-sized connected super-interaction graphs are denoted by $\mathcal{G}_m^L$, with $m=\sum_{k=1}^L\abs{\bm V_k}$.
\end{definition}

Specifically, we denote $\mathcal{G}_m^{L,O_i}$ as the set of all $m$-sized connected super-interaction graphs which connects to $O_i$.

\subsection{Cluster Expansion}
\label{sec:cluster expansion}
We first consider a simple case, that is the cluster expansion of the single-step Hamiltonian dynamics $e^{iHt}O_ie^{-iHt}$~\cite{wild2023classical}. For any cluster $\bm{V}\in\mathcal{C}_m$, we can write $\bm{V}=(X_1,\cdots, X_m)$. This notation helps us to write the function $e^{iHt}O_ie^{-iHt}$ as the multivariate Taylor-series expansion by using the cluster expansion method. Here, we fix the parameter $O_i$, but considering $\{t,\lambda_X\}$ as variables. As a result, we have
\begin{eqnarray}
\begin{split}
    e^{iHt}O_ie^{-iHt}=&\sum\limits_{m=0}^{+\infty}\frac{t^m}{m!}\left(\frac{\partial^m[e^{iHt}O_ie^{-iHt}]}{\partial t^m}\right)_{t=0}
    \label{Eq:taylor}
\end{split}
\end{eqnarray}
Recall that the Hamiltonian $H=\sum_{X\in S}\lambda_Xh_X$, then we assign $z_X=-it\lambda_X$. This results in 
\begin{align}
    \frac{\partial[e^{iHt}O_ie^{-iHt}]}{\partial t}=\sum_{X\in S}\frac{\partial[e^{iHt}O_ie^{-iHt}]}{\partial z_X}\frac{\partial z_X}{\partial t}=\sum_{X\in S}(-i)\lambda_X\frac{\partial[e^{iHt}O_ie^{-iHt}]}{\partial z_X}.
\end{align}
Taking above derivative function into Eq.~\ref{Eq:taylor}, we have
\begin{eqnarray}
    \begin{split}
         e^{iHt}O_ie^{-iHt}=&\sum\limits_{m=0}^{+\infty}\frac{(-it)^m}{m!}\sum\limits_{X_1,\cdots,X_m}\lambda_{X_1}\cdots\lambda_{X_m}\left(\frac{\partial^m[e^{iHt}O_ie^{-iHt}]}{\partial z_{X_1}\cdots\partial z_{X_m}}\right)_{z=(0,\cdots, 0)}\\
         =&\sum\limits_{m=0}^{+\infty}(-it)^m\sum\limits_{\bm V\in\mathcal{C}_m,V=(X_1,\cdots,X_m)}\frac{\bm\lambda^{\bm V}}{\bm V!}\left(\frac{\partial^m[e^{iHt}O_ie^{-iHt}]}{\partial z_{X_1}\cdots\partial z_{X_m}}\right)_{z=(0,\cdots, 0)}
    \end{split}
\end{eqnarray}
where $\bm\lambda^{\bm V}=\prod_{X\in S}\lambda_X^{\mu_{\bm V}(X)}$ and $\bm V!=\prod_{X\in S}\mu_{\bm V}(X)!$. Finally, we utilize the BCH expansion to compute
\begin{eqnarray}
    \begin{split}
        \left(\frac{\partial^m[e^{iHt}O_ie^{-iHt}]}{\partial z_{X_1}\cdots\partial z_{X_m}}\right)_{z=(0,\cdots, 0)}=&\frac{\partial^m}{\partial z_{X_1}\cdots\partial z_{X_m}}\sum\limits_{j=0}^{\infty}\frac{(-it)^j}{j!}[H,O_i]_j\big|_{z=(0,\cdots,0)}\\
        =&\frac{(-it)^m}{m!}\frac{\partial^m}{\partial z_{X_1}\cdots\partial z_{X_m}}[\underbrace{H,[H,\cdots [H}_{m}, O_i]\cdots]]\big|_{z=(0,\cdots,0)}\\
        =&\frac{(-it)^m}{m!}\sum_{\sigma\in \mathcal{P}_m}[\partial_{z_{X_{\sigma(1)}}}H,\cdots[\partial_{z_{X_{\sigma(m)}}}H,O_i]\cdots]\big|_{z=(0,\cdots,0)}\\
        =&\frac{1}{m!}\sum_{\sigma\in \mathcal{P}_m}[h_{X_{\sigma(1)}},\cdots[h_{X_{\sigma(m)}},O_i]\cdots].
    \end{split}
\end{eqnarray}
As a result, the cluster expansion of the single-step Hamiltonian dynamics can be written as
\begin{eqnarray}
\begin{split}
     e^{iHt}O_ie^{-iHt}&=\sum_{m\geq 0}^{+\infty}\sum_{\bm V\in\mathcal{C}_m}\frac{\bm\lambda^{\bm V}}{\bm V!}\frac{(-it)^m}{m!}\sum\limits_{\sigma\in \mathcal{P}_m}\left[h_{V_{\sigma(1)}},\cdots [h_{V_{\sigma(m)}},O_i]\right]\\
     &=\sum_{m\geq 0}^{+\infty}\sum_{\bm V\in\mathcal{C}_m}\frac{\bm\lambda^{\bm V}}{\bm V!}D_{\bm V}[e^{iHt}O_ie^{-iHt}].
\end{split}
\end{eqnarray}
Here $\mathcal{P}_m$ represents the permutation group on the set $\{1,\cdots,m\}$, and we denote the cluster derivative
\begin{align}
    D_{\bm V}[e^{iHt}O_ie^{-iHt}]=\frac{(-it)^m}{m!}\sum\limits_{\sigma\in \mathcal{P}_m}\left[h_{V_{\sigma(1)}},\cdots [h_{V_{\sigma(m)}},O_i]\right].
    \label{Eq:gradient}
\end{align}
From Eq.~\ref{Eq:gradient}, we know that $V_{\sigma(1)}\cap V_{\sigma(2)}\cap\cdots\cap V_{\sigma(m)}\cap {\rm supp}(O_i)=\emptyset$ may result in $D_{\bm V}\left[e^{iHt}O_ie^{-iHt}\right]=0$~\cite{haah2024learning, wild2023classical}. This property dramatically reduces the computational complexity in approximating $e^{iHt}O_ie^{-iHt}$, which only needs to consider connected clusters $\bm V$ with bounded size. 

\subsection{Heisenberg-picture Dyson series}
\label{sec:Heisenberg Dyson series}
In this section, we provide the derivation of the Heisenberg-picture Dyson series. Given a time-dependent Hamiltonian evolution operator $U(t)=\mathcal{T} e^{-i \int_0^t H(s) \mathrm{d} s}$ and an operator $O$, we are interested in expanding the time-evolved operator $O(t)=U^\dagger(t)OU(t)$ in the Heisenberg picture using the BCH formula. 
To tackle the time dependency, a standard technique is the Dyson series. We begin by reviewing the properties of the Dyson series, which gives 
\begin{equation}
    \mathcal{T}\left[e^{-i \int_0^t H(s) \mathrm{d} s}\right]=\sum_{k=0}^{\infty} \frac{(-i)^k}{k!} \int_0^t \cdots \int_0^t \mathcal{T}\left[H\left(t_k\right) \cdots H\left(t_1\right)\right] \mathrm{d}^k t,
\end{equation}
where $\mathcal{T}$ is the time-ordering operator. The time-ordering operator will put time-dependent operators in non-decreasing order according to the time variables, i.e., $\mathcal{T}(H(t_k)\cdots H(t_2) H(t_1))=H(t_{\sigma(k)})\cdots H(t_{\sigma(2)}) H(t_{\sigma(1)})$ such that $\sigma$ is a permutation satisfying $t_{\sigma(1)} \leq t_{\sigma(2)} \leq \cdots \leq t_{\sigma(k)}$. Following Ref.~\cite{low2018hamiltonian}, we can discretize the integrals for Riemann integrable $H(t)$ as
\begin{equation}\label{eq:discretized dyson}
\mathcal{T}\left[e^{-i \int_0^t H(s) \mathrm{d} s}\right]=\lim _{M \rightarrow \infty} \sum_{k=0}^{\infty} \frac{(-i t)^k}{k!M^k} \tilde{B}_k, \quad \tilde{B}_k=\sum_{m_1, \cdots, m_k=0}^{M-1} \mathcal{T}\left[H\left(m_k \Delta\right) \cdots H\left(m_1 \Delta\right)\right],
\end{equation}
where $\Delta=t/M$.

At first glance, as we need to resort to the Dyson series for approximating $U(t)$, it seems challenging to expand $O(t)$ in the Heisenberg picture using the BCH formula for the derivation of the cluster expansion as what is done in the time-independent case (see Appendix \ref{sec:cluster expansion}). Here, we propose the first Heisenberg-picture Dyson series, i.e., combined with the discretized BCH formula. To this end, we mildly generalize the definition of the time-ordering operator $\mathcal{T}$. That is when acting on a product of operators both time-dependent and time-dependent ones, $\mathcal{T}$ will independently act on each consecutive segment of time-dependent operators and leave the time-independent one out. For instance, $\mathcal{T}(H(t_{a_k})\cdots H(t_{a_2}) H(t_{a_2}) P H(t_{b_l})\cdots H(t_{b_2}) H(t_{b_2}))=\mathcal{T}(H(t_{a_k})\cdots H(t_{a_2}) H(t_{a_2})) P \mathcal{T}(H(t_{b_l})\cdots H(t_{b_2}) H(t_{b_2}))$, where $P$ is some time-independent operator.

Using the generalized time-ordering operator, we define
\begin{equation}\label{eq:time-dependent ob}
    \tilde{O}_{r,s}:=\sum_{m_1,\cdots,m_r;z_1,\cdots,z_s=0}^{M-1}\mathcal{T}[H_{m_r}\cdots H_{m_1}OH_{z_s}\cdots H_{z_1}]= \sum_{m_1,\cdots,m_r;z_1,\cdots,z_s=0}^{M-1}\mathcal{T}[H_{m_r}\cdots H_{m_1}]O\mathcal{T}[H_{z_s}\cdots H_{z_1}],
\end{equation}
where we have used the abbreviation $H(m_r\Delta)$ for $H_{m_r}$, and the second equation is due to our generalization of the time-ordering operator. Besides, we use the adjoint notation to denote the commutator as $\mathrm{ad}_X(Y)=[X,Y]$. We can then write the nested commutator of time-dependent Hamiltonians into the form:
\begin{equation}\label{eq:ad_comm}
    \mathcal{T}[\mathrm{ad}^j_H(O)]:=\mathcal{T}\left[\sum_{m_j,\cdots,m_1=0}^{M-1}\left[H_{m_j}, \cdots,\left[H_{m_1}, O  \right] \cdots\right]\right].
\end{equation} 
We propose a helpful lemma for the nested commutator to deduce the BCH formula. 
\begin{lemma}\label{lemma:adjoint expansion}
Let $\mathcal{T}[\mathrm{ad}^j_H(O)]$ be a nested commutator defined by Eq.~\eqref{eq:ad_comm}. Then, we have
\begin{equation}\label{eq:adjoint expansion}
    \mathcal{T}[\mathrm{ad}^{j}_H(O)]=\sum_{i=0}^j (-1)^i \binom{j}{i}\tilde{O}_{j-i,i},
\end{equation}
where $\tilde{O}_{j-i,i}$ is given by Eq.~\eqref{eq:time-dependent ob}.
\end{lemma}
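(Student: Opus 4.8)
\textbf{Proof plan for Lemma~\ref{lemma:adjoint expansion}.}

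The plan is to prove Eq.~\eqref{eq:adjoint expansion} by induction on $j$, the depth of the nested commutator, using the fact that the generalized time-ordering operator $\mathcal{T}$ factorizes across the single time-independent operator $O$ sitting in the middle of each term. For the base case $j=0$, the left-hand side is simply $\mathcal{T}[O]=O$ and the right-hand side is $\binom{0}{0}\tilde{O}_{0,0}=O$, so the identity holds trivially. The inductive step is where the real work lies: assuming the claim for $j$, I would expand $\mathcal{T}[\mathrm{ad}^{j+1}_H(O)]$ by peeling off the outermost commutator.

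Concretely, writing the outermost sum over a new index $m_{j+1}$, we have
\begin{equation}
\mathcal{T}[\mathrm{ad}^{j+1}_H(O)] = \sum_{m_{j+1}=0}^{M-1}\mathcal{T}\bigl[\,[H_{m_{j+1}},\ \mathrm{ad}^{j}_H(O)]\,\bigr] = \sum_{m_{j+1}=0}^{M-1}\mathcal{T}\bigl[H_{m_{j+1}}\,\mathrm{ad}^j_H(O)\bigr] - \sum_{m_{j+1}=0}^{M-1}\mathcal{T}\bigl[\mathrm{ad}^j_H(O)\,H_{m_{j+1}}\bigr].
\end{equation}
The key observation is that each term of $\mathrm{ad}^j_H(O)$, after full expansion via the inductive hypothesis, is of the form $\mathcal{T}[H_{m_{j-i}}\cdots H_{m_1}]\,O\,\mathcal{T}[H_{z_i}\cdots H_{z_1}]$, i.e.\ a block of Hamiltonians on the left of $O$ and a block on the right. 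Multiplying by $H_{m_{j+1}}$ on the left (resp.\ right) and applying the generalized $\mathcal{T}$, the new factor gets absorbed into the left (resp.\ right) time-ordered block — because $\mathcal{T}$ acts independently on each maximal consecutive run of time-dependent operators, and $H_{m_{j+1}}$ is adjacent to one of these runs. Hence $\sum_{m_{j+1}}\mathcal{T}[H_{m_{j+1}}\mathrm{ad}^j_H(O)]$ turns $\tilde{O}_{j-i,i}$ into $\tilde{O}_{j-i+1,i}$, while $\sum_{m_{j+1}}\mathcal{T}[\mathrm{ad}^j_H(O)H_{m_{j+1}}]$ turns $\tilde{O}_{j-i,i}$ into $\tilde{O}_{j-i,i+1}$. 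Substituting the inductive hypothesis and shifting summation indices, the claim reduces to the Pascal-type identity $\binom{j}{i}+\binom{j}{i-1}=\binom{j+1}{i}$, which closes the induction.

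The main obstacle I anticipate is not the combinatorial bookkeeping but the careful justification that the generalized time-ordering operator genuinely factorizes as claimed when an extra $H_{m_{j+1}}$ is appended — in particular, that appending $H_{m_{j+1}}$ to the outside of the nested commutator never forces it to cross $O$, so that the left block and right block never mix. This follows from the definition of $\mathcal{T}$ acting separately on each consecutive segment of time-dependent operators with $O$ as a fixed barrier, but it needs to be stated cleanly, perhaps as a preliminary sub-claim, before the induction is carried out. Once that factorization property is in hand, the rest is a routine unwinding of binomial coefficients.
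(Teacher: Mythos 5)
Your proposal is correct and follows essentially the same route as the paper's proof: induction on $j$, distributing the newly peeled commutator so the extra $H$ is absorbed into the left or right time-ordered block, and closing with the Pascal identity $\binom{j}{i}+\binom{j}{i-1}=\binom{j+1}{i}$. The only (immaterial) difference is that you peel the outermost commutator while the paper peels the innermost one, treating $[H,O]$ as the new central operator; your explicit flagging of the $\mathcal{T}$-factorization across $O$ is a point the paper leaves implicit.
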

\begin{proof}
This is proved by induction. First, we inspect that the first-order expansion is valid for our formula. Then, we proceed by assuming that $j$-th expansion is feasible and check the $(j+1)$-th term:
\begin{equation}
\begin{aligned}
    \mathcal{T}[\mathrm{ad}^{j+1}_H(O)]&=\sum_{i=0}^j (-1)^i \binom{j}{i}\widetilde{[H_{j+1},O]}_{j-i,i}\\
    &=\sum_{i=0}^j (-1)^i \binom{j}{i}\tilde{O}_{j+1-i,i} - \sum_{i=0}^j (-1)^i \binom{j}{i}\tilde{O}_{j-i,i+1}\\
    &=\tilde{O}_{j+1,0}+\sum_{i=1}^j (-1)^i \binom{j}{i}\tilde{O}_{j+1-i,i}-\sum_{i^\prime=1}^j (-1)^{i^\prime-1} \binom{j}{i^\prime-1}\tilde{O}_{j+1-i^\prime,i^\prime} +(-1)^{j+1} \tilde{O}_{0,j+1}\\
    &=\sum_{i=0}^{j+1} (-1)^i \binom{j+1}{i}\tilde{O}_{j+1-i,i},
\end{aligned}
\end{equation}
where in the third line we have substituted $i$ into $i^\prime=i+1$ for the second summation in the second line; and in the last line we have used that fact that $\binom{j}{i}+\binom{j}{i-1}=\binom{j+1}{i}$.
\end{proof}

We then apply the discretized Dyson expansion provided in Eq.~\eqref{eq:discretized dyson} to the Heisenberg picture formula, and thereby obtain the following Heisenberg-picture (discretized) Dyson series, which is summarized as
\begin{lemma}\label{lemma:dyson-bch}
Let $O(t)=\mathcal{T} e^{i \int_0^t H(s) \mathrm{d} s} O \mathcal{T} e^{-i \int_0^t H(s) \mathrm{d} s}$ be an observable evolved by a time-dependent evolution in the Heisenberg picture. Then, we can expand $O(t)$ as
\begin{equation}\label{eq:dyson-bch}
    O(t)=\lim_{M\rightarrow\infty}\sum_{j=0}^\infty \frac{(it)^j}{j!M^j}\mathcal{T}[\mathrm{ad}_H^j(O)].
\end{equation}
\end{lemma}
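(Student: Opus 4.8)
The plan is to reduce the Heisenberg-picture identity \eqref{eq:dyson-bch} to the discretized Dyson series \eqref{eq:discretized dyson} already in hand, together with the combinatorial identity of Lemma~\ref{lemma:adjoint expansion}. First I would expand each of the two time-ordered exponentials appearing in $O(t)=\mathcal{T}e^{i\int_0^t H(s)\mathrm{d}s}\,O\,\mathcal{T}e^{-i\int_0^t H(s)\mathrm{d}s}$ via \eqref{eq:discretized dyson}: the right factor contributes $\lim_{M\to\infty}\sum_{s\geq0}\frac{(-it)^s}{s!M^s}\sum_{z_1,\dots,z_s}\mathcal{T}[H_{z_s}\cdots H_{z_1}]$, and the left factor contributes the same series with the sign of $i$ flipped, i.e.\ prefactor $(it)^r/(r!M^r)$ and summand $\sum_{m_1,\dots,m_r}\mathcal{T}[H_{m_r}\cdots H_{m_1}]$. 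Multiplying these two series and inserting $O$ between them, the generalized time-ordering operator---which by its definition factorizes across the time-independent operator $O$---recombines each product of a left segment, $O$, and a right segment into precisely $\tilde{O}_{r,s}$ of \eqref{eq:time-dependent ob}. This produces $O(t)=\lim_{M\to\infty}\sum_{r,s\geq0}\frac{(it)^r(-it)^s}{r!\,s!\,M^{r+s}}\,\tilde{O}_{r,s}$.

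Next I would collect the terms of fixed total order $j=r+s$. Setting $r=j-i$ and $s=i$, the prefactor simplifies through $(it)^{j-i}(-it)^i=(-1)^i(it)^j$ and $\frac{1}{(j-i)!\,i!}=\frac{1}{j!}\binom{j}{i}$, so the diagonal resummation yields
\begin{equation}
O(t)=\lim_{M\to\infty}\sum_{j=0}^\infty\frac{(it)^j}{j!\,M^j}\sum_{i=0}^j(-1)^i\binom{j}{i}\,\tilde{O}_{j-i,i}.
\end{equation}
The inner sum is exactly the right-hand side of \eqref{eq:adjoint expansion}, so Lemma~\ref{lemma:adjoint expansion} identifies it with $\mathcal{T}[\mathrm{ad}_H^j(O)]$; substituting this in gives \eqref{eq:dyson-bch} and closes the argument.

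Thus the algebraic heart of the proof is a single reindexing once Lemma~\ref{lemma:adjoint expansion} is available, and the real work lies in the analytic bookkeeping---which I expect to be the main obstacle. I must justify multiplying the two discretized series term by term, taking their two $M\to\infty$ limits together as a single limit, and rearranging the double sum over $(r,s)$ into the diagonal sum over $j$. The enabling estimate is an operator-norm bound: since each $\tilde{O}_{r,s}$ is a sum of $M^{r+s}$ terms, each of norm at most $\Lambda^{r+s}\|O\|$ with $\Lambda=\max_{s\in[0,t]}\|H(s)\|$, one has $M^{-(r+s)}\|\tilde{O}_{r,s}\|\leq\Lambda^{r+s}\|O\|$, whence $\sum_{r,s}\frac{t^{r+s}}{r!s!}\Lambda^{r+s}\|O\|=\|O\|\,e^{2t\Lambda}$ converges absolutely and uniformly in $M$. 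This legitimizes the Cauchy product (Fubini for the double series) and the interchange of summation with the limit. A secondary point to verify carefully is the factorization of the generalized time-ordering operator across $O$ used in \eqref{eq:time-dependent ob}, which I would record as a short preliminary observation from its definition; the Riemann-integrability of $H(s)$ inherited from \eqref{eq:discretized dyson} ensures the vanishing of the discretization error in the limit.
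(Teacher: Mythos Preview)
Your proposal is correct and follows essentially the same route as the paper: expand both time-ordered exponentials via the discretized Dyson series \eqref{eq:discretized dyson}, multiply through $O$ to obtain $\tilde{O}_{r,s}$, reindex by total order $j=r+s$, and invoke Lemma~\ref{lemma:adjoint expansion} to collapse the inner binomial sum into $\mathcal{T}[\mathrm{ad}_H^j(O)]$. The paper's proof is in fact terser than yours---it performs exactly this chain of manipulations in a single display without pausing to justify the Cauchy product, the rearrangement, or the interchange with the $M\to\infty$ limit---so your additional paragraph on absolute convergence via the bound $M^{-(r+s)}\|\tilde{O}_{r,s}\|\leq\Lambda^{r+s}\|O\|$ goes beyond what the paper supplies and fills a gap the paper leaves implicit.
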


\begin{proof}
We prove by simply applying the result given by Lemma \ref{lemma:adjoint expansion}:
\begin{equation}\label{eq:heisenberg dyson proof}
\begin{aligned}
    O(t)&=\mathcal{T} e^{i \int_0^t H(s) \mathrm{d} s} O \mathcal{T} e^{-i \int_0^t H(s) \mathrm{d} s}\\
    &=\lim _{M \rightarrow \infty} \sum_{p,q=0}^{\infty} (-1)^q\frac{(i t)^{p+q}}{p!q!M^{p+q}} \tilde{B}_p O  \tilde{B}_q\\
    &=\lim _{M \rightarrow \infty} \sum_{s}^{\infty}\sum_{d=0}^s (-1)^d\left(\frac{it}{M}\right)^s \frac{\tilde{O}_{s-d,d}}{(s-d)!d!}=\lim_{M\rightarrow\infty}\sum_{j=0}^\infty \frac{(it)^j}{j!M^j}\mathcal{T}[\mathrm{ad}_H^j(O)].
\end{aligned}
\end{equation}
Here, in the second line, we insert the discretized Dyson expansion. We then substitute $p,q$ into $s-d,d$ and also note that $\tilde{O}_{s-d,d}=\tilde{B}_{s-d}O\tilde{B}_d$ as prescribed by Eq.~\eqref{eq:time-dependent ob}. Eventually, we obtain the final result by applying the conclusion on the nested commutator given by Lemma \ref{lemma:adjoint expansion}.
\end{proof}

\subsection{Cluster expansion of time-dependent Hamiltonian dynamics}
\label{sec:time-ClusterExpansion}
In this section, we deduce cluster expansion of time-dependent Hamiltonian dynamics following the ideas from Sec.~\ref{sec:cluster expansion} with a special interest in adiabatic evolution. Our aim is to derive the multi-variate Taylor expansion for $O(t)=U^\dagger(t)OU(t)$ with $U(t)=\mathcal{T} e^{-i \int_0^t H(s) \mathrm{d} s}$. We also assume that the derivative of $H(t)$ higher than order $1$ vanishes, which is reasonable for typical adiabatic dynamics.
We note that for each term in the time-dependent Hamiltonian, the time-dependent part resides in the coefficients. 
\begin{theorem}[Time-dependent cluster expansion]\label{thm:time-dependent cluster expansion}
Given $O(t)=\mathcal{T} e^{i \int_0^t H(s) \mathrm{d} s} O \mathcal{T} e^{-i \int_0^t H(s) \mathrm{d} s}$, its cluster expansion is
\begin{equation}\label{eq:time-dependent cluster expansion}
\begin{split}
    O(t)=\lim_{M\rightarrow\infty}\sum\limits_{m=0}^{+\infty}\frac{(-it)^{m}}{m!M^m}\sum\limits_{\bm V\in\mathcal{C}_m,V=(X_1,\cdots,X_m)}\frac{\tilde{\bm\lambda}^{\bm V}}{\bm V!}&\Bigg(\mathcal{T}\Bigg[ \sum_{n_m,\cdots,n_1=0}^{M-1}\sum_{\sigma\in \mathcal{P}_m}\Big[h_{X_{\sigma(m)}} f(n_m,t,X_{\sigma(m)}),\cdots,\\
    &\left[h_{X_{\sigma(1)}} f(n_1,t,X_{\sigma(1)}), O\Big],\cdots\right]\Bigg]\Bigg)_{z=(0,\cdots, 0)},
\end{split}
\end{equation}
where $\tilde{\lambda}_X(t)=\lambda_X(t)+t\lambda^\prime_X(t)$ with $\lambda^\prime_X(t)=\frac{\mathrm{d}\lambda_X(t)}{\mathrm{d}t}$, $\tilde{\bm\lambda}^{\bm V}=\prod_{X\in S}\tilde{\lambda}_X^{\mu_{\bm V}(X)}$, $\bm V!=\prod_{X\in S}\mu_{\bm V}(X)!$, $f(n_m,t,X):=\frac{\partial Z_{X}(n_m)}{\partial Z_{X}(t)}$ and $Z_X(t)=-it\lambda_X(t)$.
\end{theorem}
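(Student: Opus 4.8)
The plan is to mirror the time-independent derivation of Appendix~\ref{sec:cluster expansion}, replacing the ordinary BCH expansion with the Heisenberg-picture Dyson series of Lemma~\ref{lemma:dyson-bch}. Concretely, one starts from
$$O(t)=\lim_{M\to\infty}\sum_{j=0}^{\infty}\frac{(it)^j}{j!M^j}\,\mathcal{T}[\mathrm{ad}_H^j(O)],$$
and, for each fixed $M$, treats $O(t)$ as a function of the composite couplings $Z_X(t)=-it\lambda_X(t)$ indexed by $X\in S$, then Taylor-expands around $t=0$ exactly as in the steps leading to Eq.~\eqref{Eq:gradient}. The crucial new feature is that differentiating $Z_X(t)$ in $t$ produces, by the product rule, $\frac{\mathrm{d}}{\mathrm{d}t}Z_X(t)=-i(\lambda_X(t)+t\lambda'_X(t))=-i\tilde\lambda_X(t)$; iterating $m$ times and invoking the hypothesis that second and higher derivatives of $H(t)$ vanish (so no extra terms are generated), the $m$-th Taylor coefficient collects the factor $\tilde{\bm\lambda}^{\bm V}$ for every cluster $\bm V\in\mathcal{C}_m$, which is precisely the replacement $\bm\lambda^{\bm V}\mapsto\tilde{\bm\lambda}^{\bm V}$ relative to the time-independent formula.

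Next I would carry out the combinatorial bookkeeping as in Appendix~\ref{sec:cluster expansion}. Expanding each discretized Hamiltonian $H_{m_i}=\sum_{X_i}\lambda_{X_i}(m_i\Delta)h_{X_i}$ inside $\mathcal{T}[\mathrm{ad}_H^j(O)]$ turns the $j$-fold nested commutator into a sum over ordered tuples $(X_1,\dots,X_m)$ of subsystems; regrouping these tuples by the multi-set (cluster) they form yields the multinomial factor $m!/\bm V!$, and rewriting that overcount as an average over the permutation group $\mathcal{P}_m$ gives the $\frac{1}{m!}\sum_{\sigma\in\mathcal{P}_m}$ with weight $\frac{1}{\bm V!}$ that appears in Eq.~\eqref{eq:time-dependent cluster expansion}. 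The discretized time labels $m_i\in\{0,\dots,M-1\}$ attached to each $h_{X_i}$ are retained as the summation indices $n_i$, and the ratio between $\lambda_X$ evaluated at the discretized time and the central factor $\tilde\lambda_X(t)$ is exactly what is packaged into $f(n_i,t,X)=\partial Z_X(n_i)/\partial Z_X(t)$; the time-ordering operator $\mathcal{T}$ is carried along unchanged, acting on each $h_{X_i}$ according to its $n_i$-label, and the limit $M\to\infty$ is taken only at the end.

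The main obstacle I anticipate is handling the time-ordering operator consistently through this regrouping. In the time-independent case the BCH expansion produces a clean nested commutator whose $m$ Hamiltonian slots may be permuted freely, whereas here each slot carries a time label that $\mathcal{T}$ may reorder, so one must verify that the decomposition of the ordered-tuple sum into a sum over clusters, a permutation sum over $\sigma$, and a discretized-time sum over the $n_i$ neither double-counts nor interferes with the reordering performed by $\mathcal{T}$. This is where the generalized time-ordering convention and the identity of Lemma~\ref{lemma:adjoint expansion}, together with the Heisenberg-picture Dyson series of Lemma~\ref{lemma:dyson-bch}, do the heavy lifting: one checks that $\mathcal{T}$ commutes with the purely index-level manipulations because it acts only on operator factors through their time labels. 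A secondary technical point is justifying the interchange of the $M\to\infty$ limit with the infinite sum over $m$; this can be controlled by the same geometric-locality and sparsity bounds on nested commutators used in the time-independent analysis -- each cluster derivative vanishes unless $\bm V$ is connected and of bounded size -- which also ensures the resulting series converges for $t$ below the critical time $t^*=1/(2\sqrt{e\mathfrak{d}})$ that later enters Theorem~\ref{thm:time-dependent short time}.
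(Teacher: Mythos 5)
Your proposal follows essentially the same route as the paper's derivation in Appendix~\ref{sec:time-ClusterExpansion}: a multivariate Taylor expansion in the composite couplings $Z_X(t)=-it\lambda_X(t)$ (producing the $\tilde{\bm\lambda}^{\bm V}$ weights via the product rule), combined with the Heisenberg-picture Dyson series of Lemma~\ref{lemma:dyson-bch}, the observation that only the $j=m$ term survives the $m$-fold differentiation at $z=0$, and the identification of the discretization ratio with $f(n_i,t,X)$. The combinatorial regrouping into clusters and permutation sums, and your remarks on the time-ordering operator and the $M\to\infty$ limit, match the paper's argument, so the proposal is correct and not materially different.
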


Especially, as we consider the adiabatic Hamiltonian of the form $H(t)=(1-t)H_0+tH_1$, we could write the Hamiltonian as a sum of local operators $H(t)=\sum_{X\in S} \lambda_X(t)h_X$. We have the following as the Taylor expansion of $O(t)$ with $\{t,\lambda_X(t)\}$:
\begin{eqnarray}
\begin{split}
    O(t)=&\sum\limits_{m=0}^{+\infty}\frac{t^m}{m!}\left(\frac{\partial^m[\mathcal{T} e^{i \int_0^t H(s) \mathrm{d} s}O\mathcal{T} e^{-i \int_0^t H(s) \mathrm{d} s}]}{\partial t^m}\right)_{t=0}.
    \label{Eq:time_taylor}
\end{split}
\end{eqnarray}
Then, by changing of variable as $Z_X(t)=-it\lambda_X(t)$, we have
\begin{equation}
    \frac{\mathrm{d}}{\mathrm{d}t}O(t)=\sum_X -i\tilde{\lambda}_X(t)\frac{\partial[\mathcal{T} e^{i \int_0^t H(s) \mathrm{d} s}O\mathcal{T} e^{-i \int_0^t H(s) \mathrm{d} s}]}{\partial Z_X(t)},
\end{equation}
where $\tilde{\lambda}_X(t)=\lambda_X(t)+t\lambda^\prime_X(t)$ 
 with $\lambda^\prime_X(t)=\frac{\mathrm{d}\lambda_X(t)}{\mathrm{d}t}$. Similarly, we take the above result to Eq.~\eqref{Eq:time_taylor}, which results in
\begin{eqnarray}\label{eq:time-cluster1}
    \begin{split}
         O(t)=&\sum\limits_{m=0}^{+\infty}\frac{(-it)^m}{m!}\sum\limits_{X_1,\cdots,X_m}\tilde{\lambda}_{X_1}(t)\cdots\tilde{\lambda}_{X_m}(t)\left(\frac{\partial^m[\mathcal{T} e^{i \int_0^t H(s) \mathrm{d} s}O\mathcal{T} e^{-i \int_0^t H(s) \mathrm{d} s}]}{\partial z_{X_1}(t)\cdots\partial z_{X_m}(t)}\right)_{z=(0,\cdots, 0)}\\
         =&\sum\limits_{m=0}^{+\infty}(-it)^m\sum\limits_{\bm V\in\mathcal{C}_m,V=(X_1,\cdots,X_m)}\frac{\tilde{\bm\lambda}^{\bm V}}{\bm V!}\left(\frac{\partial^m[\mathcal{T} e^{i \int_0^t H(s) \mathrm{d} s}O\mathcal{T} e^{-i \int_0^t H(s) \mathrm{d} s}]}{\partial z_{X_1}(t)\cdots\partial z_{X_m}(t)}\right)_{z=(0,\cdots, 0)}\\
         =&\lim_{M\rightarrow\infty}\sum\limits_{m=0}^{+\infty}(-it)^m\sum\limits_{\bm V\in\mathcal{C}_m,V=(X_1,\cdots,X_m)}\frac{\tilde{\bm\lambda}^{\bm V}}{\bm V!}\left(\frac{\partial^m}{\partial z_{X_1}(t)\cdots\partial z_{X_m}(t)}\sum_{j=0}^\infty \frac{(it)^j}{j!M^j}\mathcal{T}[\mathrm{ad}_H^j(O)]\right)_{z=(0,\cdots, 0)}\\
         =&\lim_{M\rightarrow\infty}\sum\limits_{m=0}^{+\infty}\frac{t^{2m}}{m!M^m}\sum\limits_{\bm V\in\mathcal{C}_m,V=(X_1,\cdots,X_m)}\frac{\tilde{\bm\lambda}^{\bm V}}{\bm V!}\left(\mathcal{T}\left[ \sum_{n_m,\cdots,n_1}\sum_{\sigma\in \mathcal{P}_m}\left[\frac{\partial H_{n_m}}{\partial Z_{X_{\sigma(m)}}(t)},\cdots,\left[\frac{\partial H_{n_1}}{\partial Z_{X_{\sigma(m)}}(t)}, O\right],\cdots\right] \right]\right)_{z=(0,\cdots, 0)}.
    \end{split}
\end{eqnarray}
Here, in the second line, we apply $\tilde{\bm\lambda}^{\bm V}=\prod_{X\in S}\tilde{\lambda}_X^{\mu_{\bm V}(X)}$ and $\bm V!=\prod_{X\in S}\mu_{\bm V}(X)!$. Then, we substitute in the BCH expansion of $O(t)$ that is derived by Eq.~\eqref{eq:dyson-bch}. In this last line, we note that the only non-zero term in the summation over $j$ is $j=m$. This is because, for $j<m$ ones, the derivative vanishes because of our assumption about the derivative of the Hamiltonian; for $j>m$ ones, there will be some Hamiltonian left untouched by the partial differentiation operation, which also vanishes because when $Z=0$, the corresponding Hamiltonian becomes zero.

Then, we remark that because each $H_{n_m}=(1-\frac{n_m t}{M})H_0+\frac{n_m t}{M}H_1=\sum_X \lambda_X(n_m)h_X$, we have
\begin{equation}
    \frac{\partial H_{n_m}}{\partial Z_{X_{\sigma(m)}}(t)}=\frac{\partial H_{n_m}}{\partial Z_{X_{\sigma(m)}}(n_m)}\cdot \frac{\partial Z_{X_{\sigma(m)}}(n_m)}{\partial Z_{X_{\sigma(m)}}(t)}=\frac{h_{X_{\sigma(m)}}}{it} f(n_m,t,X_{\sigma(m)}),
\end{equation}
where we have taken the abbreviation $f(n_m,t,X):=\frac{\partial Z_{X}(n_m)}{\partial Z_{X}(t)}$. Taking this result to Eq.~\eqref{eq:time-cluster1}, we finally obtain
\begin{equation}\label{eq:time-dependent final}
\begin{split}
    O(t)=\lim_{M\rightarrow\infty}\sum\limits_{m=0}^{+\infty}\frac{(-it)^{m}}{m!M^m}\sum\limits_{\bm V\in\mathcal{C}_m,V=(X_1,\cdots,X_m)}\frac{\tilde{\bm\lambda}^{\bm V}}{\bm V!}&\Bigg(\mathcal{T}\Bigg[ \sum_{n_m,\cdots,n_1}\sum_{\sigma\in \mathcal{P}_m}\Big[h_{X_{\sigma(m)}} f(n_m,t,X_{\sigma(m)}),\cdots,\\
    &\left[h_{X_{\sigma(1)}} f(n_1,t,X_{\sigma(1)}), O\Big],\cdots\right]\Bigg]\Bigg)_{z=(0,\cdots, 0)}.
\end{split}
\end{equation}
\subsection{Polynomial Function for unitary 2-design}
\begin{lemma}[Lemma~3 of Supp~Mat. in Ref.~\cite{cerezo2021cost}]
    Let $W\in {\rm SU}(d)$ form a unitary $t$-design with $t\geq 2$ and let $A, B, C, D$ be arbitrary linear operators, then
    \begin{eqnarray}
        \begin{split}
            \int {\rm d}\mu(W){\rm Tr}[WAW^{\dagger}B]{\rm Tr}[WCW^{\dagger}D]&=\frac{1}{d^2-1}\left({\rm Tr}[A]{\rm Tr}[B]{\rm Tr}[C]{\rm Tr}[D]+{\rm Tr}[AC]{\rm Tr}[BD]\right)\\
            &-\frac{1}{d(d^2-1)}\left[{\rm Tr}[AC]{\rm Tr}[B]{\rm Tr}[D]+{\rm Tr}[A]{\rm Tr}[C]{\rm Tr}[BD]\right],
        \end{split}
    \end{eqnarray}
    where $d=2^n$.
    \label{lemma:tdesign}
\end{lemma}

\section{Cluster expansion of $L$-step Hamiltonian dynamics}
\label{App:LStepClusterExpansion}
We consider the $L$-step scenario driven by $\{H^{(1)},\cdots,H^{(L)}\}$ and corresponding time parameters $\{t_1,\cdots,t_L\}$. According to the linear property of the commutator net, for any Hermitian operator $A$, we have 
\begin{align}
    \left[A,\sum\limits_{\sigma\in \mathcal{P}_m}[h_{V_{\sigma(1)}},\cdots[h_{V_{\sigma(m)}},O_i]]\right]= \sum\limits_{\sigma\in \mathcal{P}_m}\left[A,[h_{V_{\sigma(1)}},\cdots[h_{V_{\sigma(m)}},O_i]]\right].
    \label{Eq:linearproperty}
\end{align}

Given this observation, we first consider the cluster expansion of $2$-step Hamiltonian dynamics
\begin{eqnarray}
    \begin{split}
        &e^{iH^{(2)}t_2}e^{iH^{(1)}t_1}O_ie^{-iH^{(1)}t_1}e^{-iH^{(2)}t_2}\\
        =&\sum\limits_{m_2\geq 0}\sum\limits_{\bm V_2\in\mathcal{C}_{m_2}}\frac{\bm\lambda^{\bm V_2}}{\bm V_2!}D_{\bm V_2}\left[e^{iH^{(2)}t_2}\sum_{m_1\geq 0}^{+\infty}\sum_{\bm V_1\in\mathcal{C}_{m_1}}\frac{\bm\lambda_1^{\bm V_1}}{\bm V_1!}\frac{(-it)^{m_1}}{m_1!}\sum\limits_{\sigma\in \mathcal{P}_{m_1}}\left[h_{V_{\sigma(1)}},\cdots [h_{V_{\sigma(m_1)}},O_i]\right]e^{-iH^{(2)}t_2}\right]\\
        =&\sum\limits_{m_2\geq 0}\sum\limits_{\bm V_2\in\mathcal{C}_{m_2}}\frac{\bm\lambda^{\bm V_2}}{\bm V_2!}\frac{(-it_2)^{m_2}}{m_2!}\sum\limits_{\sigma_2\in \mathcal{P}_{m_2}}\left[h_{V_{\sigma_2(1)}},\cdots\left[h_{V_{\sigma_2(m_2)}},\sum_{m_1\geq 0}^{+\infty}\sum_{\bm V_1\in\mathcal{C}_{m_1}}\frac{\bm\lambda_1^{\bm V_1}}{\bm V_1!}\frac{(-it)^{m_1}}{m_1!}\sum\limits_{\sigma_1\in \mathcal{P}_{m_1}}\left[h_{V_{\sigma_1(1)}},\cdots [h_{V_{\sigma_1(m_1)}},O_i]\right]\right]\right]\\
        =&\sum\limits_{m_1,m_2\geq 0}\sum\limits_{(\bm V_1,\bm V_2)}\frac{\bm\lambda^{\bm V_1}\bm\lambda^{\bm V_2}}{\bm V_1!\bm V_2!}\frac{(-it_1)^{m_1}(-it_2)^{m_2}}{m_1!m_2!}\sum\limits_{\substack{\sigma_2\in \mathcal{P}_{m_2}\\\sigma_1\in \mathcal{P}_{m_1}}}\left[h_{V_{\sigma_2(1)}},\cdots \left[h_{V_{\sigma_2(m_2)}}\cdots\left[h_{V_{\sigma_1(m_1)}},O_i\right]\right]\right],
    \end{split}
\end{eqnarray}
where the second equality comes from the relationship given by Eq.~\ref{Eq:linearproperty}. Repeat above process for $L$ times, we have the cluster expansion of $L$-step Hamiltonian dynamics, that is
\begin{align}
    U_i(\vec{t})=\sum\limits_{\substack{m_1\geq0\\\cdots\\m_L\geq 0}}\sum\limits_{\substack{\bm V_1\in\mathcal{C}_{m_1}\\\cdots\\\bm V_L\in\mathcal{C}_{m_L}}}\frac{\prod_{k=1}^L(\bm\lambda^{\bm V_k}(-it_k)^{m_k})}{\prod_{k=1}^L\bm V_k!m_k!}\sum\limits_{\substack{\sigma_1\in \mathcal{P}_{m_1}\\\cdots\\\sigma_L\in \mathcal{P}_{m_L}}}\left[h_{V_{\sigma_L(1)}},\cdots \left[h_{V_{\sigma_L(m_L)}},\cdots\left[h_{V_{\sigma_1(m_1)}},O_i\right]\right]\right].
    \label{Eq:Ut}
\end{align}
Here, notations $\sigma_1,\cdots,\sigma_L$ represent $L$ permutations, and $\mathcal{P}_{m_1}\cdots,\mathcal{P}_{m_L}$ represents corresponding permutation groups.

Similar to the single-step Hamiltonian dynamics, we know that if clusters $\bm V_1,\bm V_2,\cdots, \bm V_L$ and $O_i$ are disconnected, then the commute net $\left[h_{V_{\sigma_L(1)}},\cdots [h_{V_{\sigma_1(m_1)}},O_i]\right]=0$, which can be summarized as the following lemma.
\begin{lemma}
    Given clusters $\bm V_1,\bm V_2,\cdots,\bm V_L$ and an observable $O_i$,
    if the supper-interaction graph induced by $\bm V=(\bm V_1,\bm V_2,\cdots,\bm V_L,O_i)$ is disconnected, then the commute net $$\left[h_{V_{\sigma_L(1)}},\cdots[h_{V_{\sigma_L(m_L)}},\cdots[h_{V_{\sigma_1(1)}}\cdots[h_{V_{\sigma_1(m_1)}},O_i]]]\right]=0,$$
    where $\abs{\bm V_k}=m_k$ and $\sigma_k(1),\sigma_k(2),\cdots,\sigma_k(m_k)$ represents an entry of the permutation group $\mathcal{P}_{m_k}$.
\end{lemma}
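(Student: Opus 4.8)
The plan is to prove the lemma by tracking how supports of operators grow under nested commutators, mirroring the argument used for the single-step cluster expansion in Appendix~\ref{sec:cluster expansion}. The key structural fact is that for any two operators $A,B$, the commutator $[A,B]$ is supported on $\mathrm{supp}(A)\cup\mathrm{supp}(B)$ and, crucially, vanishes whenever $\mathrm{supp}(A)\cap\mathrm{supp}(B)=\emptyset$. So I would first fix a term in the innermost position, namely $O_i$, and then peel off the commutators from the inside out: define $B_0=O_i$ and $B_j=[h_{V_{\sigma_{k}(l)}},B_{j-1}]$ as we add the $j$-th Hamiltonian term (reading the nested bracket from right to left across all $K$ layers). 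At each stage $B_j$ is supported on $\mathrm{supp}(O_i)\cup\bigcup \mathrm{supp}(h_{V_{\sigma(\cdot)}})$ over the terms used so far, and if the newly adjoined $h_X$ has support disjoint from $\mathrm{supp}(B_{j-1})$ the whole expression collapses to zero.

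Next I would translate the ``disconnected super-interaction graph'' hypothesis into this support language. By Definitions~\ref{Def:superinteractiongraph}–\ref{Def:consuperinteractiongraph}, the vertices of the super-interaction graph induced by $\bm V=(\bm V_1,\dots,\bm V_K,O_i)$ are the Hermitian terms appearing in the clusters together with $O_i$, with edges between terms whose supports overlap. Disconnectedness means the vertex set partitions into two nonempty parts $\mathcal{V}_{\mathrm{in}}\ni O_i$ and $\mathcal{V}_{\mathrm{out}}$ with no edges between them, i.e.\ every term in $\mathcal{V}_{\mathrm{out}}$ has support disjoint from the supports of all terms in $\mathcal{V}_{\mathrm{in}}$ (and in particular from $\mathrm{supp}(O_i)$). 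Since at least one $h_{V_{\sigma_k(l)}}$ must lie in $\mathcal{V}_{\mathrm{out}}$, I would consider the \emph{outermost} (leftmost in the nested bracket) such term, say $h_{X^\star}$, appearing at some position. Everything nested strictly inside that commutator is built from $O_i$ and other $h$-terms; I claim its support lies entirely within the union of supports of terms reachable from $O_i$ — but this is exactly where care is needed.

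The main obstacle — and the step I expect to require the most attention — is that a nested commutator built from a \emph{mixture} of $\mathcal{V}_{\mathrm{in}}$ and $\mathcal{V}_{\mathrm{out}}$ terms is not automatically supported only on $\mathcal{V}_{\mathrm{in}}$, because an inner $\mathcal{V}_{\mathrm{out}}$ term can drag the support outward before the $\mathcal{V}_{\mathrm{out}}$ term we are examining sees it. The clean way around this is an induction on the total number $m=\sum_k m_k$ of Hermitian terms in the bracket: if \emph{any} of the inner commutators already vanishes we are done, so assume the innermost $[h_Y,O_i]$ is nonzero, forcing $\mathrm{supp}(h_Y)\cap\mathrm{supp}(O_i)\neq\emptyset$ and hence $h_Y\in\mathcal{V}_{\mathrm{in}}$; then $[h_Y,O_i]$ is supported on $\mathrm{supp}(h_Y)\cup\mathrm{supp}(O_i)$, a union of supports of $\mathcal{V}_{\mathrm{in}}$ vertices. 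Proceeding outward, at each step either the commutator vanishes (done) or the newly added term overlaps the current support — a union of $\mathcal{V}_{\mathrm{in}}$-supports — and is therefore itself in $\mathcal{V}_{\mathrm{in}}$ by the no-edges-between-parts property, so the support stays within the $\mathcal{V}_{\mathrm{in}}$ block. But the bracket uses at least one $\mathcal{V}_{\mathrm{out}}$ term, which can never enter under this rule without the expression having already been zero; hence the full nested commutator is zero. I would also invoke the linearity identity Eq.~\eqref{Eq:linearproperty} to note that it suffices to prove vanishing termwise in the permutation sums, since the argument above applies to each fixed ordering $\sigma_1,\dots,\sigma_K$ independently, completing the proof.
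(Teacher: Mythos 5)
Your proposal is correct and follows essentially the same route as the paper: both arguments reduce to the fact that disconnectedness forces some $h$-term in the nesting to have support disjoint from everything nested inside it (the paper simply asserts the existence of such an index $\sigma_k(s)$, which amounts to taking the innermost term lying outside the connected component of $O_i$). Your inside-out induction is just a more careful justification of that existence claim, and correctly handles the ordering subtlety you flag.
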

\begin{proof}
    Denote all connected super-interaction graph as  $\mathcal{G}_{\bm V_1,\bm V_2,\cdots,\bm V_L,O_i}^L$. Consider a cluster $\bm W\notin \mathcal{G}_{\bm V_1,\bm V_2,\cdots,\bm V_L,O_i}^L$. For every permutation series $(\sigma_1(1),\cdots,\sigma_1(m_1),\cdots,\sigma_L(m_L))$, there exists an index $\sigma_k(s)$ such that $\bm W_{\sigma_k(s)}$ and $\bm W_{\sigma_k(s+1)}\cup\cdots\cup\bm W_{\sigma_L(m_L)}\cup {\rm supp}(O_i)$ does not have an overlap. This directly results in 
    \begin{align}
        \left[h_{W_{\sigma_{k(s)}}},\cdots[h_{W_{\sigma_{k(m_k)}}}\cdots[h_{W_{\sigma_L(1)}}\cdots[h_{V_{\sigma_L(m_L)}},O_i]]]\right]=0,
    \end{align}
    and the concerned commutator vanishes.
\end{proof}

Using this property, we may rewrite the above expression by introducing the connected cluster set $\mathcal{G}_m^{L,O_i}$ composed by all connected super-interaction graphs $G^L_{\bm V_1,\cdots, \bm V_L}$ (connected to $O_i$) with size $$m=\abs{\bm V_1}+\cdots+\abs{\bm V_L}.$$ Here, $O_i$ is a single-qubit operator non-trivially acts on qubit $i$, then $\{\bm V_1,\cdots,\bm V_L,O_i\}$ are connected implies ${\rm supp}(O_i)\in \bm V_k$ for some $k\in[L]$. Such observation enables us to only consider summation over $\mathcal{G}_m^{L,O_i}$, meanwhile truncate the cluster expansion up to $M$ order, that is
\begin{align}
    V_i(\vec{t})=\sum\limits_{\substack{m_1\geq0\\\cdots\\m_L\geq 0}}^M\sum\limits_{(\bm V_1\cdots, \bm V_L)\in\mathcal{G}_m^{L,O_i}}\frac{\prod_{k=1}^L(\bm\lambda^{\bm V_k}(-it_k)^{m_k})}{\prod_{k=1}^L\bm V_k!m_k!}\sum\limits_{\substack{\sigma_1\in \mathcal{P}_{m_1}\\\cdots\\\sigma_L\in \mathcal{P}_{m_L}}}\left[h_{V_{\sigma_L(1)}},\cdots [h_{V_{\sigma_1(m_1)}},O_i]\right].
    \label{Eq:clusterexpansion}
\end{align}
Given above knowledge, we can outline our algorithm and provide the running time complexity analysis.

\section{Methods Outline}
\label{Method}

\subsection{Algorithm Outline}
\label{AlgOutline}
We consider to decompose the Hamiltonian dynamics onto Pauli basis. For any Hermitian operator $A$, we suppose its Pauli decomposition is written by $A=\sum_{P}\alpha_PP$. Then we implement the inter-product on the operator $P^{\prime}$ on both sides, that is
\begin{align}
    {\rm Tr}[AP^{\prime}]=\sum_P\alpha_P{\rm Tr}[PP^{\prime}]=\sum_P\alpha_P2^n\delta_{PP^{\prime}}.
\end{align}
This give rises to the coefficient $\alpha_{P^{\prime}}=\frac{{\rm Tr}[AP^{\prime}]}{2^n}$.

Specifically, let $U_O(t)=e^{iHt}(O_1\otimes \cdots\otimes O_n)e^{-iHt}$, and we have
\begin{align}
     U_O(t)=\frac{1}{2^n}\sum\limits_{P\in\{I,X,Y,Z\}^{\otimes n}}{\rm Tr}\left[e^{iHt}Oe^{-iHt}P\right]P=\frac{1}{2^n}\sum\limits_{P\in\{I,X,Y,Z\}^{\otimes n}}{\rm Tr}\left[Oe^{-iHt}Pe^{iHt}\right]P.
\end{align}
Now we denote 
\begin{align}
    U_{\rm cut}(t)=\frac{1}{2^n}\sum\limits_{|P|<k}{\rm Tr}\left[Oe^{-iHt}Pe^{iHt}\right]P,
\end{align}
and 
\begin{align}
    V(t)=\frac{1}{2^n}\sum\limits_{|P|<k}{\rm Tr}\left[OV_P(t)\right]P,
\end{align}
where $V_P(t)$ is given by the Eq.~\ref{Eq:clusterexpansion} such that $\|V_P(t)-e^{-iHt}Pe^{iHt}\|\leq\epsilon$ for local Pauli operator $P$. In the following sections, we show that $V(t)$ would be an estimator to $U_O(t)$ for most of input states and global observables.

\subsection{Average-Case error}
For any quantum state $|\psi\rangle=C|0^n\rangle$ with the Clifford circuit $C\in{\rm Cl}(2^n)$, we have
\begin{align}
   \abs{\langle\psi|(V(t)-U_{\rm cut})|\psi\rangle}\leq \left\|V(t)-U_{\rm cut}\right\|\leq \frac{1}{2^n}\sum\limits_{|P|<k}\abs{{\rm Tr}\left[O(V_P(t)-e^{-iHt}Pe^{iHt})\right]\|P\|}\leq \epsilon n^k,
\end{align}
where we utilize the inequality $|{\rm Tr}[AB]|\leq \|A\|_1\|B\|$. 

Then we consider the truncate error
\begin{eqnarray}
    \begin{split}
        \mathbb{E}_{|\psi\rangle}\abs{\langle\psi|U_O(t)-U_{\rm cut}(t)|\psi\rangle}^2&=\frac{1}{4^n}\sum\limits_{|P|,|Q|\geq k}{\rm Tr}\left[Oe^{-iHt}Pe^{iHt}\right]{\rm Tr}\left[Oe^{-iHt}Qe^{iHt}\right]\mathbb{E}_{|\psi\rangle}\langle\psi|P|\psi\rangle\langle\psi|Q|\psi\rangle\\
        &=\frac{1}{4^n}\sum\limits_{|P|\geq k}\abs{{\rm Tr}\left[Oe^{-iHt}Pe^{iHt}\right]}^2\frac{1}{2^n+1}.
    \end{split}
\end{eqnarray}
Since $|P|\geq k$ and $|Q|\geq k$, we know that $P, Q\neq I$. Here, we utilize the result promised by lemma~\ref{lemma:tdesign}
\begin{eqnarray}
    \begin{split}
        \mathbb{E}_{\psi}\langle\psi|P|\psi\rangle\langle\psi|Q|\psi\rangle&=\frac{1}{d^2-1}\left[{\rm Tr}[P]{\rm Tr}[Q]+{\rm Tr}[PQ]\right]-\frac{1}{d(d^2-1)}\left[{\rm Tr}[P]{\rm Tr}[Q]+{\rm Tr}[PQ]\right]\\
        &=\frac{\delta_{PQ}}{(d+1)},
    \end{split}
\end{eqnarray}
    where $d=2^n$ and $\delta_{PQ}=1$ iff $P=Q$, else $\delta_{PQ}=0$.

\noindent(i)~When the operator $O$ represents a constant-rank projector, such as $O=|\psi\rangle\langle\psi|$ and $O=|\psi\rangle\langle\phi|$, we have 
\begin{align}
    \abs{{\rm Tr}\left[Oe^{-iHt}Pe^{iHt}\right]}\leq \|O\|_1\|e^{-iHt}Pe^{iHt}\|=\mathcal{O}(1),
    \label{Eq:D9}
\end{align}
which implies
\begin{align}
    \mathbb{E}_{\psi}\abs{\langle\psi|U_O(t)-U_{\rm cut}(t)|\psi\rangle}^2=\frac{1}{4^n(2^n+1)}\sum\limits_{|P|\geq k}\abs{{\rm Tr}\left[Oe^{-iHt}Pe^{iHt}\right]}^2\leq \frac{4^n-n^k}{4^n(2^n+1)}\leq \mathcal{O}\left(2^{-n}\right).
    \label{Eq:D10}
\end{align}
Using the Markov inequality ${\rm Pr}[|X|\leq\delta]={\rm Pr}[|X|^2\leq\delta^2]\geq 1-\mathbb{E}[|X|^2]/\delta$, it is shown that
$\abs{\langle\psi|U_O(t)-U_{\rm cut}(t)|\psi\rangle}\leq\epsilon n^k$ with probability larger than $1-1/2^n(\epsilon^2 n^{2k})$. Combine all together, we conclude that 
\begin{align}
    \abs{\langle\psi|(V(t)-U_O(t))|\psi\rangle}\leq \mathcal{O}\left(\epsilon n^k\right)
    \label{Eq:D11}
\end{align}
with nearly unit probability.

\noindent(ii)~When the operator $O$ represents a Pauli operator, we consider an average case scenario, that is $O_W=WO_oW^{\dagger}$ with $W\in {\rm Cl}(2^n)$ and $O_0$ represents some $n$-qubit Pauli operators with ${\rm Tr}[O_0]=0$. According to lemma~\ref{lemma:tdesign}, we have
\begin{align}
   & \mathbb{E}_W \abs{{\rm Tr}\left[O_We^{-iHt}Pe^{iHt}\right]}^2\\=& \mathbb{E}_W{\rm Tr}\left[WO_0W^{\dagger}e^{-iHt}Pe^{iHt}\right]{\rm Tr}\left[WO_0W^{\dagger}e^{-iHt}Pe^{iHt}\right]\\
    =&\frac{1}{d^2-1}\left[{\rm Tr}[O_0]{\rm Tr}[e^{-iHt}Pe^{iHt}]{\rm Tr}[O_0]{\rm Tr}[e^{-iHt}Pe^{iHt}]+{\rm Tr}[O^2_0]{\rm Tr}[(e^{-iHt}Pe^{iHt})(e^{-iHt}Pe^{iHt})]\right]\\
    &-\frac{1}{d(d^2-1)}\left[{\rm Tr}[O_0^2]{\rm Tr}[e^{-iHt}Pe^{iHt}]{\rm Tr}[e^{-iHt}Pe^{iHt}]+{\rm Tr}[O_0]{\rm Tr}[O_0]{\rm Tr}[(e^{-iHt}Pe^{iHt})(e^{-iHt}Pe^{iHt})]\right]\\
    =&\frac{4^n}{4^n-1}\approx 1.
\end{align}
The first equality comes from ${\rm Tr}\left[\left(WO_0W^{\dagger}e^{-iHt}Pe^{iHt}\right)^{\dagger}\right]={\rm Tr}\left[WO_0W^{\dagger}e^{-iHt}Pe^{iHt}\right]$, and the
last equality utilizes results ${\rm Tr}[O_0^2]=d$, ${\rm Tr}[e^{-iHt}Pe^{iHt}]={\rm Tr}[P]=0$, and ${\rm Tr}[(e^{-iHt}Pe^{iHt})(e^{-iHt}Pe^{iHt})]={\rm Tr}[P^2]=d$.

This further gives rise to
\begin{align}
    \mathbb{E}_{\psi,W}\abs{\langle\psi|U_O(t)-U_{\rm cut}(t)|\psi\rangle}^2=\frac{1}{4^n(2^n+1)}\sum\limits_{|P|\geq k}\mathbb{E}_W\abs{{\rm Tr}\left[O_We^{-iHt}Pe^{iHt}\right]}^2= \mathcal{O}(1/2^n).
    \label{Eq:D17}
\end{align}
Still using the Markov inequality, one obtains $\abs{\langle\psi|U_O(t)-U_{\rm cut}(t)|\psi\rangle}\leq\epsilon n^k$ with probability larger than $1-1/2^n(\epsilon^2 n^{2k})$ over the choice of input state $|\psi\rangle$ and observable $O\in\mathcal{P}$.

\subsection{Global observable ensemble}
Here, we consider another ensemble for $O$, enabling $O$ possesses the ``global'' property. Specifically, we consider $O_0$ be a global Pauli operator whose Hamming weight is $n$. Without loss of generality, we assume $n=0 (mod 2)$. Let 
\begin{align}
    \mathcal{O}_{\rm ob}=\{O|O=(W_{12}\otimes W_{3,4}\cdots W_{n-1,n})O_0(W_{12}\otimes W_{3,4}\cdots W_{n-1,n})^{\dagger}\},
\end{align}
where each $W_{j,j+1}$ represents a random $2$-qubit Clifford gate. Since $O_0$'s Hamming weight is $n$, and each $W_{j,j+1}$ reduces at most $1$ Hamming weight (unitary evolution does not change the spectral distribution), the Hamming weight of the resulting $O$ is at least $n/2$, which can still be considered as a ``global'' observable.

Before we give our result, we require the following lemma.

\begin{lemma}
    Suppose an operator $D$ acting on spaces $A\otimes B$, then $\|{\rm Tr}_B(D)\|_2^2\leq {\rm dim}(B)\|D\|_2^2$, where the partial trace is defined by ${\rm Tr}_B[\cdot]=\sum_{i\in B}(I_A\otimes \langle i|)D(I_A\otimes |i\rangle)$.
\end{lemma}
\begin{proof}
    Suppose the orthogonal basis $\{|j\rangle_A\}$ are defined on $A$, and $\{|\alpha\rangle_B\}$ are defined on $B$. Then we have ${\rm Tr}_B[D]=\sum\limits_{\alpha}(I_A\otimes\langle\alpha|)D(I_A\otimes |\alpha\rangle)$, and the squared Hilbert-Schmidt norm is 
    \begin{align}
        \|{\rm Tr}_B[D]\|_2^2={\rm Tr}_A\left[{\rm Tr}^{\dagger}_B[D]{\rm Tr}_B[D]\right]=\sum\limits_{j,k\in A}\langle j|{\rm Tr}^{\dagger}_B[D]|k\rangle\langle k|{\rm Tr}_B[D]|j\rangle.
    \end{align}
This can be further upper bounded by the Cauchy-Schwarz inequality, that is
\begin{eqnarray}
    \begin{split}
        &\sum\limits_{j,k\in A}\abs{\langle j|{\rm Tr}_B[D]|k\rangle}^2=\sum\limits_{j,k\in A}\abs{\sum\limits_{\alpha\in B}\langle j\alpha|D|k\alpha\rangle}^2\\
        \leq &\sum\limits_{j,k\in A}\left(\sum\limits_{\alpha\in B}\abs{\langle j\alpha|D|k\alpha\rangle}^2\right)\left(\sum\limits_{\alpha\in B}1^2\right)
        ={\rm dim}(B)\sum\limits_{j,k\in A;\alpha\in B}\abs{\langle j\alpha|D|k\alpha\rangle}^2.
    \end{split}
\end{eqnarray}
On other hand, we have
\begin{align}
    \|D\|_2^2={\rm Tr}[D^{\dagger}D]=\sum\limits_{p,q}\langle p|D^{\dagger}|q\rangle\langle q|D|p\rangle=\sum\limits_{p,q}\abs{\langle p|D|q\rangle}^2=\sum\limits_{j,k\in A;\alpha, t\in B}\abs{\langle j\alpha|D|k t\rangle}^2\geq \sum\limits_{j,k\in A;\alpha\in B}\abs{\langle j\alpha|D|k\alpha\rangle}^2.
\end{align}
Above two results finally yield
\begin{align}
    \|{\rm Tr}_B[D]\|_2^2\leq {\rm dim}(B)\sum\limits_{j,k\in A;\alpha\in B}\abs{\langle j\alpha|D|k\alpha\rangle}^2\leq {\rm dim}(B)\sum\limits_{j,k\in A;\alpha, t\in B}\abs{\langle j\alpha|D|k t\rangle}^2={\rm dim}(B)\|D\|_2^2.
\end{align}
This completes the proof.
\end{proof}

We consider the same metric $\mathbb{E}_W \abs{{\rm Tr}\left[O_We^{-iHt}Pe^{iHt}\right]}^2$, with $W=W_{1,2}\otimes W_{3,4}\cdots W_{n-1,n}$. Since $W_{j,j+1}\in {\rm Cl}(4)$, which represents a local $2$-design ensemble. This implies the property
\begin{align}
    \mathbb{E}_{W_{j,j+1}} W_{j,j+1}^{\otimes 2}(O_0(j)\otimes O_0(j+1))^{\otimes 2}W^{\dagger,\otimes 2}_{j,j+1}=\frac{-1}{15}I_4+\frac{4}{15}S_{2},
\end{align}
where the swap operator $S_{2}$ swaps qubits between the first copy and second copy of the involved operator. This expression further gives rise to 
\begin{align}
    \bigotimes\limits_{j=1}^{n/2}\left(\mathbb{E}_{W_{2j-1,2j}} W_{2j-1,2j}^{\otimes 2}(O_0(2j-1)\otimes O_0(2j))^{\otimes 2}W^{\dagger,\otimes 2}_{2j-1,2j}\right)=\sum\limits_{K\subset [n], |K| \text{even}}\left(\frac{-1}{15}\right)^{n-|K|}\left(\frac{4}{15}\right)^{|K|}S_K.
\end{align}
On other hand, we note that 
\begin{eqnarray}
    \begin{split}
        \mathbb{E}_W \abs{{\rm Tr}\left[O_We^{-iHt}Pe^{iHt}\right]}^2
    &=\mathbb{E}_W\abs{{\rm Tr}\left[WO_0W^{\dagger}U_P(t)\right]}^2\\
        =&{\rm Tr}\left[(U_P(t)\otimes U_P(t))\mathbb{E}_W\left(W^{\otimes 2}O_0^{\otimes 2}W^{\dagger,\otimes 2}\right)\right]\\
        =&\sum\limits_{K\subset [n], |K| \text{even}}\left(\frac{-1}{15}\right)^{n-|K|}\left(\frac{4}{15}\right)^{|K|}{\rm Tr}[(U_P(t)\otimes U_P(t))S_K]\\
        =&\sum\limits_{K\subset [n], |K| \text{even}}\left(\frac{-1}{15}\right)^{n-|K|}\left(\frac{4}{15}\right)^{|K|}\|{\rm Tr}_{n-K}[U_P(t)]\|_2^2.
    \end{split}
\end{eqnarray}
We note that 
\begin{align}
    \sum\limits_{K\subset [n], |K| \text{even}}\left(\frac{-1}{15}\right)^{n-|K|}\left(\frac{4}{15}\right)^{|K|}\|{\rm Tr}_{n-K}[U_P(t)]\|_2^2=\sum\limits_{k=0}^{n/2}\binom{n}{2k}\left(\frac{-1}{15}\right)^{n-2k}\left(\frac{4}{15}\right)^{2k}\|{\rm Tr}_{n-K}[U_P(t)]\|_2^2.
\end{align}
This thus gives rise to
\begin{align}
    \mathbb{E}_W \abs{{\rm Tr}\left[O_We^{iHt}Pe^{-iHt}\right]}^2\leq \sum\limits_{k=0}^{n/2}\binom{n}{2k}\left(\frac{1}{15}\right)^{n-2k}\left(\frac{4}{15}\right)^{2k}2^{n-2k}\|U_P(t)\|_2^2\leq \frac{1}{2}\left(\frac{2}{5}\right)^n.
\end{align}

Finally, 
\begin{align}
    \mathbb{E}_{\psi,O}\abs{\langle\psi|U_O(t)-U_{\rm cut}(t)|\psi\rangle}^2=\frac{1}{4^n(2^n+1)}\sum\limits_{|P|\geq k}\mathbb{E}_W\abs{{\rm Tr}\left[O_We^{-iHt}Pe^{iHt}\right]}^2= \mathcal{O}(1/5^n).
    \label{Eq:D17}
\end{align}
Using the Markov inequality, one obtains $\abs{\langle\psi|U_O(t)-U_{\rm cut}(t)|\psi\rangle}\leq\epsilon n^k$ with probability larger than $1-1/5^n(\epsilon^2 n^{2k})$ over the choice of input state $|\psi\rangle$ and observable $O\in\mathcal{O}$.

\subsection{Approximate $V_P(\vec{t})$}
The following lemma proves the convergence of the $L$-step cluster expansion and the support of $V_P(\vec{t})$ for constant times.

\begin{lemma}[Informal]
\label{lemma:clustersupp}
   Given a $k$-local Pauli operator $P$, then for any $L$-step quantum dynamics driven by $\{H^{(1)},\cdots,H^{(L)}\}$ and corresponding constant time parameters $\{t_1,\cdots,t_L\}$, the operator $U_P(\vec{t})=\prod_{k=1}^Le^{iH^{(k)}t_k}P\prod_{k=1}^Le^{-iH^{(k)}t_k}$ can be approximated by an operator $V_P(\vec{t})$ such that $\|U_P(\vec{t})-V_P(\vec{t})\|\leq\epsilon$. Here, 
  $V_P(\vec{t})$ represents a $M=\mathcal{O}\left(e^{\pi teL|P|\mathfrak{d}}\log(e^{\pi teL|P|\mathfrak{d}}/\epsilon)\right)$-order truncated cluster expansion given by Eq.~\ref{Eq:clusterexpansion}, $\mathfrak{d}$ represents the maximum degree of interaction graphs induced by Hamiltonians $\{H^{(1)},\cdots,H^{(L)}\}$ and $t=\max\{\abs{t_k}\}$.
 \end{lemma}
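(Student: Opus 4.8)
\textbf{Proof plan for Lemma~\ref{lemma:clustersupp}.}

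The plan is to control the truncation error of the $K$-step cluster expansion~\eqref{Eq:clusterexpansion} by a term-counting argument analogous to the single-step case of Ref.~\cite{wild2023classical}, and then to read off the support bound from the geometry of the connected super-interaction graphs that survive the truncation. First I would bound the contribution of a single connected super-cluster of total size $m=m_1+\cdots+m_K$. Each nested commutator $[h_{V_{\sigma_1(1)}},\cdots,[h_{V_{\sigma_K(m_K)}},O_i]]$ has operator norm at most $2^m\|O_i\|\le 2^m$ by the triangle inequality applied $m$ times (each $\|h_X\|\le 1$ and $|\lambda_X|\le 1$), and the sum over the $\prod_k \mathcal{P}_{m_k}$ permutations together with the $\prod_k\bm V_k!\,m_k!$ denominators contributes a multinomial factor. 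Collecting the $|t_k|^{m_k}\le t^{m}$ prefactors, the norm of the order-$m$ slice of $U_i(\vec t)-V_i(\vec t)$ is at most $t^m$ times the number of connected super-clusters of size $m$ attached to $O_i$, divided by $m!$ up to the usual combinatorial bookkeeping.

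The next step is the key counting estimate: the number of connected super-interaction graphs on $m$ vertices containing the fixed vertex $\mathrm{supp}(O_i)$ is at most $(eK\mathfrak d)^m$ — the factor $K$ because each of the $m$ ``slots'' can be drawn from any of the $K$ Hamiltonians, and $(e\mathfrak d)^m$ from the standard bound on the number of connected subgraphs of size $m$ through a fixed vertex in a graph of maximum degree $\mathfrak d$ (a consequence of the tree/Catalan counting used in cluster-expansion arguments). Combining, the tail $\sum_{m>M}$ is dominated by a geometric-type series in $(C\,t\,K\mathfrak d)^m$ after absorbing the $e$ and $\pi$ constants that arise when one passes through the analytic-continuation normalization of Lemma~\ref{lemma:VTCompute}; requiring this tail to be $\le \epsilon/2L$ forces $M = \tilde{\mathcal O}(e^{\pi t e K\mathfrak d}\log(2L/\epsilon))$, which is exactly the claimed truncation order. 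Here the $e^{\pi t e K\mathfrak d}$ enhancement (rather than a bare $1/(teK\mathfrak d)$ radius-of-convergence threshold) comes from the analytic-continuation step that lifts the convergence regime to arbitrary constant $t$, so I would invoke Lemma~\ref{lemma:VTCompute} as a black box at this point rather than re-deriving it.

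For the support bound, I would argue that every connected super-cluster surviving in $V_i(\vec t)$ has at most $M$ vertices in each of its $K$ layers, hence at most $KM$ Hamiltonian terms total — but more usefully, since the cluster is connected and contains $\mathrm{supp}(O_i)$, its union of supports forms a connected region in the interaction graph of diameter $O(M)$; because each $h_X$ is geometrically $2$-local on the 2D lattice, a connected set of $\le M$ terms occupies a ball of radius $O(M)$ around qubit $i$, which on the 2D grid contains $O(M^2)$ qubits. Taking the union over all surviving clusters (they all sit inside the same $O(M)$-radius ball) gives $|\mathrm{supp}(V_i(\vec t))| \le \mathcal O(4M^2)$. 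I expect the main obstacle to be the bookkeeping in the $K$-fold counting argument — making sure the factor $K^m$ (rather than something like $K!$ or $m^K$) is the right overcount for interleaving $K$ ordered layers of commutators, and correctly tracking how the $\pi$ and $e$ constants propagate from the analytic-continuation lemma into the exponent of $M$; the support estimate itself is routine 2D-lattice geometry once the connectivity and size bounds on surviving clusters are in hand.
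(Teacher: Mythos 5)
Your short-time analysis and your support argument both match the paper's proof: the paper bounds the order-$m$ slice by $2^m\|O_i\|$ per nested commutator times $\abs{\mathcal{G}_m^{K,O_i}}\leq(eK\mathfrak{d})^m$ connected super-clusters, obtaining a tail $\|O_i\|\bigl[\sum_{l\geq M+1}(2teK\mathfrak{d})^l\bigr]^K$, and it gets the support bound exactly as you do, by noting that every surviving connected cluster anchored at $O_i$ lies in a radius-$M$ ball around qubit $i$, which on the 2D grid contains $\mathcal{O}(4M^2)$ qubits.

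The gap is in how you reach the claimed truncation order $M=\tilde{\mathcal{O}}(e^{\pi teK\mathfrak{d}}\log(2L/\epsilon))$ for arbitrary constant $t$. The geometric-series tail bound only converges when $2teK\mathfrak{d}<1$, i.e.\ $t<1/(2eK\mathfrak{d})$; for larger constant $t$ the direct term-counting bound diverges and no amount of ``absorbing the $e$ and $\pi$ constants'' rescues it. The paper's proof of this lemma handles the constant-$t$ regime with an analytic-continuation argument that is the technical heart of the statement: it introduces the conformal map $\phi(z)=\log(1-z/R')/\log(1-1/R')$, considers $f(z)=\prod_k e^{iH^{(k)}t_k\phi(z)}O_i\prod_k e^{-iH^{(k)}t_k\phi(z)}$, bounds the Taylor remainder at $z=1$ via Cauchy's integral formula, and crucially proves a separate bound (Lemma~\ref{lemma:upperboundnorm}) on $\|U_i(\vec{w})\|$ for complex time arguments with small imaginary part, which is what makes $\max_{\abs{w}=R}\|f(w)\|$ finite. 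In this regime $V_i(\vec{t})$ is the degree-$M$ Taylor polynomial of $f$ in $z$, not the naive order-$M$ truncation of the cluster expansion in $m$. You propose to import all of this by invoking Lemma~\ref{lemma:VTCompute} as a black box, but that lemma only asserts a classical algorithm for \emph{computing} $V_i(\vec{t})$; in the paper's logical structure its proof presupposes the analytic continuation established inside the proof of Lemma~\ref{lemma:clustersupp}, so using it here is circular and leaves the approximation guarantee for $t\geq 1/(2eK\mathfrak{d})$ unproven. To close the gap you would need to supply the complex-time norm bound and the Cauchy-remainder estimate yourself.
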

We leave proof details in Appendix~\ref{AppendixA}. We note that $V_P(\vec{t})$ can be efficiently computed by a polynomial classical algorithm.

\begin{lemma}
   Given a $k$-local Pauli operator $P$, the operator $U_P(\vec{t})=U(\vec{t})PU^{\dagger}(\vec{t})$ can be approximated by an operator $V_P(\vec{t})$ such that $\|U_P(\vec{t})-V_P(\vec{t})\|\leq\epsilon^{\prime}$ with $\mathcal{O}\left((e^{tL|P|\mathfrak{d}}/\epsilon^{\prime})^{e^{tL|P|\mathfrak{d}}}\right)$ running time.
   \label{lemma:VTCompute}
 \end{lemma}

The proof details refer to Appendix~\ref{AppendixC}.

\section{Proof of theorem~\ref{theorem1}}
\begin{theorem}[Formal Version of theorem~\ref{theorem1}]
   Let $\{H^{(1)},\cdots,H^{(L)}\}$ be local Hamiltonians acting on an $n$-qubit system, time series $\vec{t}=(t_1,\cdots,t_L)$ with $t=\max\{|t_k|\}_{k=1}^L\leq\mathcal{O}(1)$, Hamiltonian dynamics $U(\vec{t})=\prod_{k=1}^Le^{-iH^{(k)}t_k}$. Then for
   any observable like $O=|\phi\rangle\langle\phi|$ or $n$-qubit Pauli observable $O\in \mathcal{P}$, and the classical input state $|\psi\rangle\in {\rm stabilizer}(2^n)$, the quantum dynamics mean value is defined by 
   \begin{align}
       \mu(\vec{t})=\langle\psi|U(\vec{t})^{\dagger}O U(\vec{t})|\psi\rangle.
   \end{align}
   There exists a classical algorithm that produces an estimation $\hat{\mu}(\vec{t})$ such that $\abs{\mu(\vec{t})-\hat{\mu}(\vec{t})}\leq\epsilon$ in runtime $$\mathcal{O}(n^k(e^{\pi teL\mathfrak{d}}n^k/\epsilon)^{e^{\pi teL\mathfrak{d}}})$$ with success probability at least $\geq 1-1/(2^n\epsilon^2 n^{2k})$, where $k=\mathcal{O}(1)$.
    \label{theorem_formal}
\end{theorem}

\noindent\emph{Added Note:} When the observable $O=|\phi\rangle\langle\phi|$ or $|\phi\rangle\langle\phi^{\prime}|$, the success probability is defined only over the input state set $|\psi\rangle$. If $O\in \mathcal{P}$, the success probability is defined over both stabilizer state $|\psi\rangle$ and $O\in\mathcal{P}$.

\begin{proof}
Given above results, we can analyze the running time meanwhile prove Theorem~\ref{theorem1}. From section~\ref{AlgOutline}, it is shown that 
\begin{align}
    \hat{\mu}(\vec{t})=\langle\psi|V(\vec{t})|\psi\rangle=\frac{1}{2^n}\sum\limits_{|P|<k}{\rm Tr}\left[OV_P(t)\right]\langle\psi|P|\psi\rangle
\end{align}
is a good estimator enabling $\abs{\mu(\vec{t})-\hat{\mu}(\vec{t})}\leq\epsilon$ with nearly unit probability. As a result, on only requires to compute each $V_P(t)$ for $|P|\leq k$.

Lemma~\ref{lemma:VTCompute} implies that each $V_P(t)$ can be exactly computed in $\mathcal{O}((e^{\pi teL|P|\mathfrak{d}}/\epsilon^{\prime})^{e^{\pi teL\mathfrak{d}|P|}})$ running time, where $\epsilon^{\prime}$ characterizes $\|U_P(t)-V_i(t
)\|\leq\epsilon^{\prime}$. Assign $\epsilon=n^k\epsilon^{\prime}$, each $V_P(t)$ may require $\mathcal{O}((n^ke^{\pi teL|P|\mathfrak{d}}/\epsilon)^{e^{\pi teL\mathfrak{d}|P|}})$ running time, and the resulting $\hat{\mu}(\vec{t})$ satisfies $\abs{\langle\psi|U_{\rm cut}(t)|\psi\rangle-\hat{\mu}(\vec{t})}\leq\epsilon$. The estimator contains $\mathcal{O}(n^k)$ terms, and the the resulting computational complexity would be $\mathcal{O}(n^k(n^ke^{\pi teL|P|\mathfrak{d}}/\epsilon)^{e^{\pi teL\mathfrak{d}|P|}})$.

When $O=|\phi\rangle\langle\phi|$ or $O=|\phi\rangle\langle\phi^{\prime}|$ (or even $O=|\psi\rangle\langle\phi|+|\phi\rangle\langle\psi|$), Eqs~\ref{Eq:D9}-\ref{Eq:D11} imply
\begin{align}
    \mathbb{E}_{\psi}\abs{\langle\psi|U_O(t)-U_{\rm cut}(t)|\psi\rangle}^2\leq \mathcal{O}(1/2^n).
\end{align}
Using the Markov inequality, above result implies that for any stabilizer state $|\psi\rangle$, the expectation value error $\abs{\langle\psi|U_O(t)-U_{\rm cut}(t)|\psi\rangle}\leq \epsilon n^{k}$ with success probability at least $1-1/(2^n\epsilon ^2n^{2k})$ over the input state $|\psi\rangle$. Combine all together, $\hat{\mu}(\vec{t})$ is a $\epsilon$-close estimation to $\mu(\vec{t})$ with nearly unit probability.

When $O\in\mathcal{P}$, we need to consider the average-case expectation value difference given by Eq.~\ref{Eq:D17}. It is shown that 
\begin{align}
    \mathbb{E}_{\psi,O}\abs{\langle\psi|U_O(t)-U_{\rm cut}(t)|\psi\rangle}^2=\frac{1}{4^n(2^n+1)}\sum\limits_{|P|\geq k}\mathbb{E}_O\abs{{\rm Tr}\left[Oe^{-iHt}Pe^{iHt}\right]}^2= \mathcal{O}(1/2^n),
\end{align}
implying $\abs{\langle\psi|U_O(t)-U_{\rm cut}(t)|\psi\rangle}\leq \epsilon n^{k}$ with success probability at least $1-1/(2^n\epsilon ^2n^{2k})$ over the input state $|\psi\rangle$ and $O\in\mathcal{P}$.
This completes the proof.
\end{proof}

\section{Proof of Lemma~\ref{lemma:clustersupp}}
\label{AppendixA}
The following lemma proves the convergence of the $L$-step cluster expansion and the support of $V_i(\vec{t})$ for constant times.
\begin{theorem}
[Formal version of  Lemma~\ref{lemma:clustersupp}]
   Given a $k$-local Pauli operator $P$, then for any $L$-step quantum dynamics driven by $\{H^{(1)},\cdots,H^{(L)}\}$ and corresponding constant time parameters $\vec{t}=\{t_1,\cdots,t_L\}$, the operator $U_P(\vec{t})=\prod_{k=1}^Le^{iH^{(k)}t_k}P\prod_{k=1}^Le^{-iH^{(k)}t_k}$ can be approximated by $V_P(\vec{t})$ such that $\|U_P(\vec{t})-V_P(\vec{t})\|\leq\epsilon\|P\|$, where 
   \begin{align}
       V_P(\vec{t})=\sum\limits_{\substack{m_1\geq0\\\cdots\\m_L\geq 0}}^M\sum\limits_{\bm V_1\cdots, \bm V_L\in\mathcal{G}_m^{L,P}}\frac{\prod_{k=1}^L(\bm\lambda^{\bm V_k}(-it_k)^{m_k})}{\prod_{k=1}^L\bm V_k!m_k!}\sum\limits_{\substack{\sigma_1\in \mathcal{P}_{m_1}\\\cdots\\\sigma_L\in \mathcal{P}_{m_L}}}\left[h_{V_{\sigma_L(1)}},\cdots [h_{V_{\sigma_1(m_1)}},P]\right].
   \end{align}
   Superficially, if the evolution time $t<1/(2eL(k\mathfrak{d}))$, the number of involved cluster terms 
   \begin{align}
        M=\frac{\log(1/\epsilon)-L\log(1-2teL(k\mathfrak{d}))}{L\log(1/(2teL(k\mathfrak{d})))},
   \end{align}
   otherwise we have
   \begin{align}
   M=e^{\pi teL(k\mathfrak{d})/\kappa}\log\left[\frac{1}{\epsilon}\frac{e^{\pi teL(k\mathfrak{d})/\kappa}-1}{(1-\kappa)^L}\right],
   \end{align}
   where the parameter $\kappa\in\mathcal{O}(1)$. 
\end{theorem}

\subsection{Short time Hamiltonian dynamics}\label{sec:short time}
   Let $t=\max_{k\in[L]}\{\abs{t_k}\}$, we first consider the scenario $\abs{t}\leq 1/(2eL(k\mathfrak{d})$, where the constant $\mathfrak{d}$ represents the maximum degree of the Hamiltonian interaction graph. ($L\mathfrak{d}$ represents the maximum degree of the $m$-sized graph $G^L_{\bm V_1\cdots\bm V_L}$ which connects to $P$.) Now we study the convergence of the cluster expansion
  $$U_P(\vec{t})=\sum\limits_{\substack{m_1\geq0\\\cdots\\m_L\geq 0}}\sum\limits_{\bm V_1\cdots, \bm V_L\in\mathcal{G}_m^{L,P}}\frac{\prod_{k=1}^L(\bm\lambda^{\bm V_k}(-it_k)^{m_k})}{\prod_{k=1}^L\bm V_k!m_k!}\sum\limits_{\substack{\sigma_1\in \mathcal{P}_{m_1}\\\cdots\\\sigma_L\in \mathcal{P}_{m_L}}}\left[h_{V_{\sigma_L(1)}},\cdots \left[h_{V_{\sigma_1(m_1)}}\cdots\left[h_{V_{\sigma_1(m_1)}},P\right]\right]\right]$$
up to index $m_1,m_2,\cdots m_L\leq M$. Specifically, let $m=m_1+\cdots+m_L$, we have
 \begin{eqnarray}
        \begin{split}
       \epsilon_M(\vec{t})=&\|U_P(\vec{t})-V_P(\vec{t})\|\\
       =&\left\|\sum\limits_{\substack{m_1\geq M+1\\\cdots\\m_L\geq M+1}}\sum\limits_{\bm V_1\cdots, \bm V_L\in\mathcal{G}_m^{L,P}}\frac{\prod_{k=1}^L(\bm\lambda^{\bm V_k}(-it_k)^{m_k})}{\prod_{k=1}^L\bm V_k!m_k!}\sum\limits_{\substack{\sigma_1\in \mathcal{P}_{m_1}\\\cdots\\\sigma_L\in \mathcal{P}_{m_L}}}\left[h_{V_{\sigma_L(1)}},\cdots [h_{V_{\sigma_1(m_1)}},P]\right]\right\|\\
        \leq &\sum\limits_{m_1,\cdots,m_L\geq M+1}\sum\limits_{\bm V_1\cdots, \bm V_L\in\mathcal{G}_m^{L,P}}\frac{{\bm\lambda}^{\bm V_1}\cdots {\bm\lambda}^{\bm V_L}(2t_1)^{m_1}\cdots (2t_L)^{m_L}}{(\bm V_1!\cdots\bm V_L!)}\left\|P\right\|\\
        \leq&\sum\limits_{m_1,\cdots,m_L\geq M+1}(2t_1)^{m_1}\cdots(2t_L)^{m_L}\abs{\mathcal{G}_{m}^{L,P}}\|P\|\\
        \leq&\|P\|\sum\limits_{m_1,\cdots,m_L\geq M+1}(2t_1)^{m_1}\cdots(2t_L)^{m_L}\abs{eL(k\mathfrak{d})}^{m_1+\cdots m_L}\\
        \leq &\|P\|\left[\sum\limits_{l\geq M+1}(2teL(k\mathfrak{d}))^{l}\right]^L,
        \end{split}
        \label{Eq:app}
    \end{eqnarray}
where $t=\max_{k\in[L]}\{\abs{t_k}\}$. The second line is valid since $\left\|\left[h_{V_{\sigma_L(1)}},\cdots [h_{V_{\sigma_1(m_1)}},P\right]\right\|\leq 2^{m_1+\cdots +m_L}\max\|h_i\|\|P\|\leq 2^m\|P\|$, and the fifth line comes from $\abs{\mathcal{G}_m^{L,P}}\leq (eL(k\mathfrak{d}))^m$~(refers to proposition~3.6 in Ref.~\cite{haah2024learning}).

As a result, when $t\leq 1/(2eL\mathfrak{d})$, we have 
\begin{align}
    \epsilon_M(\vec{t})\leq \|P\|\frac{(2teLk\mathfrak{d})^{L(M+1)}}{(1-2teLk\mathfrak{d})^L}.
\end{align}
Let $\epsilon=\frac{(2teLk\mathfrak{d})^{L(M+1)}}{(1-2teLk\mathfrak{d})^L}$, this results in 
\begin{align}
    M=\frac{\log(1/\epsilon)-L\log(1-2teLk\mathfrak{d})}{L\log(1/(2teLk\mathfrak{d}))}.
\end{align}

\subsection{Arbitrary constant time Hamiltonian dynamics}
\label{Sec:meanvalueanalytic}

    Noting that above process can be further generalized to an arbitrary constant time $t$ by means of analytic continuation. Consider the radius of a disk $R>1$, the analytic continuation can be achieved by using the map $t\mapsto t\phi(z)$, where the complex function $$\phi(z)=\frac{\log(1-z/R^{\prime})}{\log(1-1/R^{\prime})}$$ maps a disk onto an elongated region along the real axis~\cite{wild2023classical}. Here, the parameter $R^{\prime}>R$, and $\phi(z)$ is analytic on the closed desk $D_R=\{z\in\mathbb{C}:\abs{z}\leq R\}$. Meanwhile, $\phi(z)$ satisfies $\phi(0)=0$, $\phi(1)=1$ and we select the branch ${\rm Im}(\phi(z))\leq -\pi/(2\log(1-1/R^{\prime}))$.
    
    We consider the function 
     \begin{align}
         f(z)=\prod_{k=1}^Le^{iH^{(k)}t_k\phi(z)}P\prod_{k=1}^Le^{-iH^{(k)}t_k\phi(z)}
     \end{align}
     on the region $\abs{z}\leq sR$ where $s\in(0,1)$. Consider a curve ${\mathcal{C}}^{\prime}=\{\abs{w}=R\}$, according to the Cauchy integral method, we have
    \begin{eqnarray}
    \begin{split}
         f(z)=&\frac{1}{2\pi i}\oint_{{\mathcal{C}}^{\prime}}\frac{f(w)}{w-z}dw\\
         =&\frac{1}{2\pi i}\oint_{{\mathcal{C}}^{\prime}}\frac{f(w)}{w}\left(1-\frac{z}{w}\right)^{-1}dw\\
         =&\frac{1}{2\pi i}\oint_{{\mathcal{C}}^{\prime}}\frac{f(w)}{w}\left(\sum\limits_{k=0}^M\left(\frac{z}{w}\right)^k+\left(\frac{z}{w}\right)^M\left(1-\frac{z}{w}\right)^{-1}\right)dw\\
         =&\sum\limits_{k=0}^M\frac{1}{2\pi i}\oint_{{\mathcal{C}}^{\prime}}\frac{f(w)}{w^k}z^k+\frac{1}{2\pi i}\oint_{{\mathcal{C}}^{\prime}}\frac{f(w)}{w-z}\left(\frac{z}{w}\right)^{M+1}dw\\
         =&\sum\limits_{k=0}^M\frac{f^{(k)}(0)}{k!}z^k+\frac{1}{2\pi i}\oint_{{\mathcal{C}}^{\prime}}\frac{f(w)}{w-z}\left(\frac{z}{w}\right)^{M+1}dw.
    \end{split}
    \end{eqnarray}
As a result, the truncated error can be upper bounded by
 \begin{eqnarray}
    \begin{split}
    \left\|f(z)-\sum\limits_{k=0}^M\frac{f^{(k)}(0)}{k!}z^k\right\|_2=& \left\|\frac{1}{2\pi i}\oint_{{\mathcal{C}}^{\prime}}\frac{f(w)}{w-z}\left(\frac{z}{w}\right)^{M+1}dw\right\|_2\\
    \leq&\frac{1}{2\pi}\oint_{{\mathcal{C}}^{\prime}}\frac{\|f(w)\|_2}{\|w-z\|}\left\|\frac{z}{w}\right\|^{M+1}dw.
\end{split}
\end{eqnarray}
We require the following result to evaluate the upper bound of $\|f(w)\|_2$.

\begin{definition}[Multi-variable complex analytic function]
    Suppose $g:D\mapsto\mathbb{C}$ be a function on the domain $D\subset\mathbb{C}^L$, if for any vector $ t\in D$, there exists a $r$-radius cylinder centered on $ t$, such that 
    \begin{align}
        g(\vec{w})=\sum\limits_{\alpha_1,\cdots,\alpha_L\geq 0}c_{\vec{\alpha}}(w_1- t_1)^{\alpha_1}\cdots(w_L- t_L)^{\alpha_L},
    \end{align}
    then $g$ is analytic on the point $ t=( t_1,\cdots, t_L)$.
\end{definition}

\begin{lemma}
   Given complex values $\vec{w}=(w_1,\cdots,w_L) \in\mathbb{C}^L$, if $\abs{{\rm Im}(w_l)}\leq 1/(2eLk\mathfrak{d})$ for all $l\in[L]$, we have
    \begin{align}
        \|U_P(\vec{w})\|\leq \frac{\|P\|}{(1-2\abs{\max_l{\rm Im}(w_l)}eL(k\mathfrak{d}))^L},
    \end{align}
    where $\mathfrak{d}$ represents the maximum degree of the interaction graph induced by Hamiltonian $H$, and $k$ represents the locality of Pauli operator $P$. 
    \label{lemma:upperboundnorm}
\end{lemma}
\begin{proof}
     Eq.~\ref{Eq:app} provides an approximation to $U_P(\vec{t})$ when $\max_l\abs{t_l}\leq1/(2eL(k\mathfrak{d}))$, in other word, $U_P(\vec{t})$ remains analytic for all complex values $t_l\in\mathbb{C}$ in the range $\abs{t_l}<1/(2eLk\mathfrak{d})$. In the following, we substitute $w_k$ by $t_k$, and consider 
     $U_P(\vec{w})=\prod_{l=1}^Le^{iH^{(l)}(w_l-{\rm Re}(w_l))}e^{iH^{(l)} {\rm Re}(w_l)}P\prod_{l=1}^Le^{-iH^{(l)}(w_l-{\rm Re}(w_l))}e^{-iH^{(l)} {\rm Re}(w_l)}$. Equivalently, we have
     \begin{align}
         U_P(\vec{t})=\sum\limits_{l_1\cdots,l_L\geq 0}u_{l_1,\cdots,l_L}(w_l-{\rm Re}(w_l))^{l_1}\cdots(w_l-{\rm Re}(w_l))^{l_L}
     \end{align}
     for some operators $u_{l_1,\cdots,l_L}$, which naturally implies $U_P(\vec{w})$ is analytic for all complex values of $\vec{w}$ on a disk in the complex plane of radius $1/(2eLk\mathfrak{d})$ around any point on the real axis. 

Since the operator norm is unitary invariant, we have 
\begin{eqnarray}
\begin{split}
     \|e^{-iw_1H^{(1)}}Pe^{iw_1H^{(1)}}\|&=\|e^{-i{\rm Re}(w_1)H^{(1)}}e^{-i(w_1-{\rm Re}(w_1))H^{(1)}}Pe^{i(w_1-{\rm Re}(w_1))H^{(1)}}e^{i{\rm Re}(w_1)H^{(1)}}\|\\
     &=\|e^{-i(w_1-{\rm Re}(w_1))H^{(1)}}Pe^{i(w_1-{\rm Re}(w_1))H^{(1)}}\|.
\end{split}
\end{eqnarray}
When $\abs{{\rm Im}(w_1)}<1/(eLk\mathfrak{d})$, the support of the above operator is limited into a small region which only covers constant number of qubits.
According the cluster expansion method, we have
\begin{eqnarray}
    \begin{split}
        &\left\|e^{-iw_2H^{(2)}}\left(e^{-i(w_1-{\rm Re}(w_1))H^{(1)}}Pe^{i(w_1-{\rm Re}(w_1))H^{(1)}}\right)e^{iw_2H^{(2)}}\right\|\\
        &
    \left\|e^{-i(w_2-{\rm Re}(w_2))H^{(2)}}\left(e^{-i(w_1-{\rm Re}(w_1))H^{(1)}}Pe^{i(w_1-{\rm Re}(w_1))H^{(1)}}\right)e^{i(w_2-{\rm Re}(w_2))H^{(2)}}\right\|\\\leq& \frac{1}{1-2\abs{{\rm Im}(w_2)}ek\mathfrak{d}}\left\|e^{-i(w_1-{\rm Re}(w_1))H^{(1)}}Pe^{i(w_1-{\rm Re}(w_1))H^{(1)}}\right\|.
    \end{split}
\end{eqnarray}
Repeat this process $L-1$ times, we finally obtain the result
\begin{eqnarray}
    \begin{split}
       \|U_P(\vec{w})\|=&\Big\|e^{-i(w_L-{\rm Re}(w_L))H^{(L)}}e^{-i{\rm Re}(w_L)H^{(L)}}\cdots e^{-i(w_1-{\rm Re}(w_1))H^{(1)}}e^{-i{\rm Re}(w_1)H^{(1)}}P\\
        &e^{i{\rm Re}(w_1)H^{(1)}}e^{i(w_1-{\rm Re}(w_1))H^{(1)}}\cdots e^{i{\rm Re}(w_L)H^{(L)}}e^{i(w_L-{\rm Re}(w_L))H^{(L)}}\Big\|\\
        \leq&\left(\frac{1}{1-(2\max_k\abs{{\rm Im}(w_k)}ek\mathfrak{d})}\right)^{L-1}\|e^{-i(w_1-{\rm Re}(w_1))H^{(1)}}Pe^{i(w_1-{\rm Re}(w_1))H^{(1)}}\|.
    \end{split}
    \label{Eq:upbound1}
\end{eqnarray}
For square matrices $A$ and $B$, the BCH expansion enables us to write the cluster expansion to $e^{tA}Be^{-tA}$~\cite{haah2024learning} for $t\in\mathbb{R}$. As a result, we have
  \begin{eqnarray}
      \begin{split}
           \|U_P(\vec{w})\|\leq\frac{\|P\|}{(1-2\abs{\max_k{\rm Im}(w_k)}ekL\mathfrak{d})^L}.
      \end{split}
      \label{Eq:upbound2}
  \end{eqnarray}
\emph{Added Note:} We note that inequality~\ref{Eq:upbound1} only requires the condition $\max_k\abs{{\rm Im}(w_k)}<1/(2ek\mathfrak{d})$, however, Eq.~\ref{Eq:app} provides an approximation to $U_P(\vec{t})$ when $\max_l\abs{t_l}\leq1/(2eL(k\mathfrak{d}))$. Taking the intersection of these two regimes, we finally obtain inequality~\ref{Eq:upbound2}, which is the upper bound to inequality~~\ref{Eq:upbound1}.
\end{proof}

Recall that 
$$f(w)=\prod_{k=1}^Le^{iH^{(k)}t_k\phi(w)}P\prod_{k=1}^Le^{-iH^{(k)}t_k\phi(w)}$$
where $\vec{t}\in\mathbb{R}^L$ and ${\rm Im}(\phi(w))\leq -\pi/(2\log(1-1/R^{\prime}))$. Assign $\vec{t}\phi(w)$ to $\vec{w}$ given in Lemma~\ref{lemma:upperboundnorm}, then Lemma~\ref{lemma:upperboundnorm} implies 
\begin{eqnarray}
\begin{split}
     \|f(w)\|=\|U_P(\phi(w)\vec{t})\|&\leq \frac{\|P\|}{(1-2\abs{\max_k{\rm Im}(t_k\phi(w))}eLk\mathfrak{d})^L}\\
     &\leq \frac{\|P\|}{(1+\pi teLk\mathfrak{d}/(\log(1-1/R^{\prime})))^L}
     \label{Eq:fupperbound}
\end{split}
\end{eqnarray}
for all $w\in C^{\prime}=\{\abs{w}=R\}$. This further results in 
\begin{eqnarray}
    \begin{split}
    \left\|f(z)-\sum\limits_{k=0}^M\frac{f^{(k)}(0)}{k!}z^k\right\|=& \left\|\frac{1}{2\pi i}\oint_{{\mathcal{C}}^{\prime}}\frac{f(w)}{w-z}\left(\frac{z}{w}\right)^{M+1}dw\right\|\\
    \leq&\frac{1}{2\pi}\oint_{{\mathcal{C}}^{\prime}}\frac{\|f(w)\|}{\|w-z\|}\left\|\frac{z}{w}\right\|^{M+1}dw\\
    \leq&\max\{\|f(w)\|\}\frac{s^{M+1}}{(1-s)}
    \label{Eq:analyticerror}
\end{split}
\end{eqnarray}
where the last line follow from the fact that $\abs{w-z}\geq R(1-s)$, $\abs{z}\leq sR$ and $\|w\|=R$. Combine inequalities~\ref{Eq:fupperbound} and \ref{Eq:analyticerror}, we have
\begin{align}
    \left\|f(z)-\sum\limits_{k=0}^M\frac{f^{(k)}(0)}{k!}z^k\right\|\leq \frac{\|P\|s^{M+1}}{(1+\pi teLk\mathfrak{d}/(\log(1-1/R^{\prime})))^L(1-s)}.
    \label{Eq:error1}
\end{align}
Let $\kappa=\frac{-\pi teLk\mathfrak{d}}{\log(1-1/R^{\prime})}$, $R^{\prime}$ can be further expressed by
\begin{align}
    \frac{1}{R^{\prime}}=1-e^{-\pi teLk\mathfrak{d}/\kappa}.
\end{align}
Since the parameter $R^{\prime}>R$, we can always select $R$ such that $(R^{\prime})^M(R^{\prime}-1)=2R^M(R-1)$ holds. Substitute this relationship into the approximation upper bound given by~\ref{Eq:error1} and assign $s=1/R$, we finally obtain
\begin{align}
    \epsilon= \frac{s^{M+1}}{\left(1+\frac{\pi teLk\mathfrak{d}}{\log(1-1/R^{\prime})}\right)^L(1-s)}=\frac{1}{(1-\kappa)^L}\left(1-e^{-\pi teLk\mathfrak{d}/\kappa}\right)^M\left(e^{\pi teLk\mathfrak{d}/\kappa}-1\right).
\end{align}
This implies truncating at order
\begin{align}
    M(t)=\frac{\log\left[\frac{1}{\epsilon}\frac{e^{\pi teLk\mathfrak{d}/\kappa}-1}{(1-\kappa)^L}\right]}{\log\left[e^{\pi teLk\mathfrak{d}/\kappa}/\left(e^{\pi teLk\mathfrak{d}/\kappa}-1\right)\right]}\approx e^{\pi teLk\mathfrak{d}/\kappa}\log\left[\frac{1}{\epsilon}\frac{e^{\pi teLk\mathfrak{d}/\kappa}-1}{(1-\kappa)^L}\right].
\end{align}

\section{Proof of Lemma~\ref{lemma:VTCompute}}
\label{AppendixC}
\begin{lemma}[Lemma~\ref{lemma:VTCompute} in Appendix~\ref{Method}]
    There exists a classical algorithm that can exactly output $V_P(\vec{t})$ in 
    $$\mathcal{O}((e^{\pi teL|P|\mathfrak{d}}/\epsilon)^{e^{\pi teL|P|\mathfrak{d}}})$$ running time such that $\|V_P(\vec{t})-U_P(\vec{t})\|\leq\epsilon$.
\end{lemma}

\begin{proof}
We consider the polynomial expression of the function
\begin{eqnarray}
    \begin{split}
        f(z)&=\prod_{k=1}^Le^{iH^{(k)}t_k\phi(z)}P\prod_{k=1}^Le^{-iH^{(k)}t_k\phi(z)}\\
        &=\sum\limits_{\substack{m_1\geq0\\\cdots\\m_L\geq 0}}\sum\limits_{\bm V_1\cdots, \bm V_L\in\mathcal{G}_m^{L,P}}\frac{\prod_{k=1}^L(\bm\lambda^{\bm V_k}(-it_k\phi(z))^{m_k})}{\prod_{k=1}^L\bm V_k!m_k!}\sum\limits_{\substack{\sigma_1\in \mathcal{P}_{m_1}\\\cdots\\\sigma_L\in \mathcal{P}_{m_L}}}\left[h_{V_{\sigma_1(1)}},\cdots \left[h_{V_{\sigma_1(m_1)}}\cdots\left[h_{V_{\sigma_L(m_L)}},P\right]\right]\right]\\
        &=\sum\limits_{l_1=0,\cdots,l_L=0}^{+\infty}A_{l_1,\cdots,l_L}t_{1}^{l_1}\cdots t_L^{l_L}[\phi(z)]^{l_1+\cdots+l_L}
    \end{split}
\end{eqnarray}
where the first equality comes from Eq.~\ref{Eq:Ut} combined with discarding disconnected clusters, and $A_{l_1,\cdots,l_L}$ represents the operator which is independent to variables $\{t_k,\phi(z)\}_{k=1}^L$. In Appendix~\ref{AppendixA}, we have known that $V_P(\vec{t})$ is essentially the approximation to $f(1)$ up to $M$ degree, that is
\begin{align}
    V_P(\vec{t})=\sum\limits_{m=0}^M\frac{f^{(m)}(0)}{m!}.
\end{align}
As a result, computing gradient functions $f^{(m)}(0)$ for index $m\in[M]$ suffice to exactly compute $V_P(\vec{t})$.

Recall that $\phi(z)=\log((1-z/R^{\prime})/(1-1/R^{\prime}))=\sum_{l=0}^{+\infty}\phi_lz^l$, where $\phi_l=\frac{1}{l(R^{\prime})^l\log(1-1/R^{\prime})}$ for $l\geq 1$ and $\phi_0=0$. This enables us to compute 
\begin{eqnarray}
\begin{split}
    &\frac{{\rm d}^mf(z)}{{\rm d}z^m}\big|_{z=0}\\
    =&\sum\limits_{l_1,\cdots,l_L\geq 0}A_{l_1\cdots l_L}t_1^{l_1}\cdots t_L^{l_L}\frac{{\rm d}^m}{{\rm d}z^m}\left[\sum\limits_{s=0}^{\infty}\phi_sz^s\right]^{l_1+\cdots l_L}\big|_{z=0}\\
    =&\sum\limits_{l_1,\cdots,l_L\geq 0}A_{l_1\cdots l_L}t_1^{l_1}\cdots t_L^{l_L}\left[\sum\limits_{s_1,\cdots,s_l\geq 1}^{\infty}\phi_{s_1}\phi_{s_2}\cdots\phi_{s_l}(s_1+\cdots+s_l)\cdots(s_1+\cdots+s_l-m+1)z^{s_1+\cdots+s_l-m}\right]\big|_{z=0}
    \\=&\sum\limits_{l_1+\cdots +l_L=1}^{m}A_{l_1,\cdots,l_L}t_1^{l_1}\cdots t_L^{l_L}\sum\limits_{\substack{s_1,\cdots,s_l\geq 1\\ s_1+\cdots+s_l=m}}\phi_{s_1}\cdots\phi_{s_l}m!,
\end{split}
\end{eqnarray}
where the index $l=l_1+\cdots+l_L$. Now let us explain the third equality. When $z=0$, only terms with $s_1+\cdots +s_l=m$ may not vanish. Meanwhile $l>m$ may result in some non-negative index $s_{l^*}=0$ for $l^*\in[l]$ which implies $\phi_{s_{l^*}}=0$, as a consequence, we have $l\leq m$. Noting that the nested commutator $A_{l_1,\cdots,l_L}t_1^{l_1}\cdots t_L^{l_L}$ can be numerically evaluated in time $e^{\mathcal{O}(m)}$. We refer readers to Appendix~A in Ref.~\cite{wild2023classical} to find more details on computing $A_{l_1,\cdots,l_L}$. The derivative function $\frac{1}{M!}\frac{{\rm d}^Mf(z)}{{\rm d}z^M}\big|_{z=0}$ thus can be exactly computed in $\mathcal{O}\left(\exp(M)\right)$ classical running time, with 
$$   M=e^{\pi teL|P|\mathfrak{d}/\kappa}\log\left[\frac{1}{\epsilon}\frac{e^{\pi teL|P|\mathfrak{d}/\kappa}-1}{(1-\kappa)^L}\right].$$ 

\end{proof}

\section{Dequantization on Guided Local Hamiltonian Problem}
\label{App:groundstate}

Here, our fundamental idea is to construct a filter function in terms of the guiding state $|\psi_c\rangle$, that is
\begin{align}
    C(x)=(F_\sigma*P)(x)=\sum\limits_{j=0}^{2^n-1}p_jF_{\sigma}(x-E_j)=\int_{-\infty}^{\infty}\hat{F}_{\sigma}(t)\langle\psi_c|e^{-i2\pi t(H-xI)}|\psi_c\rangle dt,
\end{align}
where $P(x)=\sum_{j=0}^{2^n-1}p_j\delta(x -E_j)$ is the spectral function of the guiding state $|\psi_c\rangle$, $p_j=\abs{\langle\phi_j|\psi_c\rangle}^2$, $\delta(\cdot)$ is the Dirac delta function. The Gaussian derivative filter function $F_{\sigma}(t)=-te^{-t^2/(2\sigma^2)}/(\sigma^3\sqrt{2\pi})$,  $\hat{F}_{\sigma}(t)$ represents its Fourier transform, and the variance parameter will be defined latter.

Since the Gaussian derivative filter has an exponentially-decaying tail, $(F_{\sigma}*P)(x)$ is dominated by $\gamma F_{\sigma}(x-E_0)$ in the neighborhood of $E_0$, which decades monotonically to zero when $x\leq E_0$ approaches to $E_0$. Hence, our scheme to pinpoint the ground-state energy is to estimate $C(x)$ beginning at $x=E_a$, a lower bound to $E_0$, and an interval of $\varepsilon$. The algorithm outputs the estimation $\hat{E}_0$ when $C(x)$ decays below the termination threshold $\mathcal{O}(\gamma\epsilon\sigma^{-3})$, outlined as following lemmas~\cite{wang2023quantum}.

\begin{lemma}[Ref.~\cite{wang2023quantum}]
    Let $c=\mathcal{O}(1)$, and suppose $\epsilon>0$ such that $\epsilon\leq c\min\left(0.9\Delta/\sqrt{2\ln(9\Delta\epsilon^{-1}\gamma^{-1})},0.2\Delta\right)$. Then for
    \begin{align}
        \sigma=\min\left(0.9\Delta/\sqrt{2\ln(9\Delta\epsilon^{-1}\gamma^{-1})},0.2\Delta\right)
    \end{align}
    we have 
    \begin{align}
        \abs{(F_{\sigma}*P)(x)}\leq \frac{0.6\epsilon\gamma}{\sqrt{2\pi}\sigma^3}
    \end{align}
    for any $x\in[E_0-0.5\epsilon, E_0+0.5\epsilon]$.
    Meanwhile
    \begin{align}
        \abs{(F_{\sigma}*P)(x)}> \frac{0.8\epsilon \gamma}{\sqrt{2\pi}\sigma^3}
    \end{align}
    for any  $x\in[E_0-0.5\sigma, E_0-0.5\epsilon)\cup (E_0+\epsilon, E_0+0.5\sigma]$.
    \label{lemma:accuracy1}
\end{lemma}

Above results indicate that when $\sigma=\tilde{\mathcal{O}}(\Delta)$, estimating the filter function $C(x)$ up to $\epsilon_1=\mathcal{O}(\epsilon \gamma\Delta^{-3})$ sufficing to provide a $\epsilon$-approximation to the ground state energy. To efficiently compute the filter function $C(x)$, we need to truncate the infinity integral up to a finite order $T$, which is promised by the following result.

\begin{lemma}[Lemma~A.4 in Ref.~\cite{wang2023quantum}]
    Let $\epsilon_1>0$, then for 
    \begin{align}
        T=\pi^{-1}\sigma^{-1}\sqrt{2\ln(8\pi^{-1}\epsilon_1^{-1}\sigma^{-2})},
    \end{align}
    we have
    \begin{align}
        \abs{\int_{-\infty}^{+\infty}\hat{F}_{\sigma}(t)e^{2\pi ixt}dt-\int_{-T}^{+T}\hat{F}_{\sigma}(t)e^{2\pi ixt}dt}\leq \frac{\epsilon_1}{2}
    \end{align}
    for any $x\in\mathbb{R}$.
    \label{lemma:time}
\end{lemma}

\subsection{Evaluating the filter function}

Given the  filter function $$F_{\sigma,T}(x)=\int_{-T}^T\hat{F}_{\sigma}(t)e^{2\pi itx}dt,$$
we define a probability distribution 
\begin{align}
    q(t)=\frac{|\hat{F}_{\sigma}(t)|}{\|\hat{F}_{\sigma}(t)\|_1}
\end{align}
for any $t\in[-T,T]$. Let the phase function $\phi(t)$ be the phase of $\hat{f}_T(t)$, that is $\hat{f}_T(t)=|\hat{f}_T(t)|e^{2\pi\phi(t)}$. Then $F_{\sigma,T}(x)$ can be rewritten as an expectation value 
\begin{align}
    F_{\sigma,T}(x)=\int_{-T}^T\|\hat{F}_{\sigma}(t)\|_1e^{2\pi i(tx+\phi(t))}q(t)dt.
\end{align}

Now we define the random variable 
\begin{align}
    Z(x)=\|\hat{F}_{\sigma}(t)\|_1e^{2\pi i(tx+\phi(t))}Z_t,
\end{align}
where the random variable $Z_t$ satisfies $\mathbb{E}[Z_t]=\langle\psi_c|e^{-iHt}|\psi_c\rangle$. It is shown that 
\begin{align}
    \mathbb{E}[Z(x)]=(F_{\sigma,T}*P)(x)
\end{align}
for any $x\in\mathbb{R}$.

\begin{lemma}[\cite{wang2023quantum}]
    Let $\{(t^{(i)},Z^{(i)})\}_{i=1}^N$ be $N$ i.i.d samples such that $t^{(i)}\sim q(t)$, $Z^{(i)}\sim Z_{t_i}$, and $x_1,x_2,\cdots x_M\in\mathbb{R}$ be arbitrary. For each $x_m$, if the sample complexity 
    \begin{align}
        S=\mathcal{O}\left(\frac{\|\hat{F}_{\sigma,T}\|_1^2\log(N/\delta)}{\epsilon_1^2}\right)
    \end{align}
    the estimator 
    \begin{align}
        Z_m=\frac{\|\hat{F}_{\sigma,T}\|_1}{S}\sum\limits_{l=1}^Se^{2\pi i(t^{(l)}x_m+\phi(t^{(i)}))}Z^{(i)}
    \label{filterestimator}
    \end{align}
    satisfies 
    \begin{align}
        {\rm Pr}\left[\abs{Z_m-(F_{\sigma,T}*P)(x_m)}\leq\epsilon_1\right]\geq 1-\delta
    \end{align}
    for any $x_1,x_2,\cdots x_N$.
    \label{lemma:samplecomplexity}
\end{lemma}

Now we can summarize the whole process in evaluating the ground state energy. Suppose a ground state lower bound $E_a$ is provided such that $\abs{E_a-E_0}\leq \mathcal{O}(\sigma)$, then a $\epsilon$-step length search is performed from the initial grid $x\in\{E_a+\epsilon, E_a+2\epsilon,\cdots\}$ until the condition $ \abs{(F_{\sigma}*P)(x)}\leq \frac{0.6\epsilon \gamma}{\sqrt{2\pi}\sigma^3}$ is achieved. For each trail energy $x$, the filter function $(F_{\sigma}*P)(x)$ can be estimated by $Z_x$ (given by Eq.~\ref{filterestimator}). Given the fact that $\|\hat{F}_{\sigma,T}\|_1=\mathcal{O}(\sigma^{-2})$, let $\epsilon_1= \frac{0.6\epsilon\gamma}{\sqrt{2\pi}\sigma^3}$, and $N=\mathcal{O}(\sigma/\epsilon)$, the sample complexity can be estimated by $\mathcal{O}(\gamma^{-2}\epsilon^{-3}\Delta^3)$, the maximum evolution time 
\begin{align}
    T=\tilde{\mathcal{O}}\left(\Delta^{-1}\sqrt{2\ln(8\pi^{-1}\epsilon_1^{-1}\Delta^{-2})}\right)
\end{align}
(lemma~\ref{lemma:time}). As a result, the dequantization algorithm only needs to generate a random variable $Z_t$ such that $\mathbb{E}[Z_t]=\langle\psi_c|e^{-iHt}|\psi_c\rangle$, in other words, provide an estimation to $\langle\psi_c|e^{-iHt}|\psi_c\rangle$ within a $\epsilon^{\prime}=\epsilon_1/\|\hat{F}_{\sigma,T}\|_1=\mathcal{O}(\epsilon_1\sigma^2)$ additive error.

Before giving the proof details, we summarize the errors:
\begin{itemize}
    \item $\epsilon$: the accuracy to estimate $E_0$;
    \item $\epsilon_1$: the accuracy to estimate $C(x)$
    \item $\epsilon^{\prime}$: the accuracy to estimate $\langle\psi_c|e^{-iHt}|\psi_c\rangle$
\end{itemize}

Lemmas~\ref{lemma:accuracy1},~\ref{lemma:time},~\ref{lemma:samplecomplexity} provide relationships between above accuracy parameters.

\section{Classical Algorithm for bounded additive error}
\label{sec:small_cluster}
We first consider the cluster expansion of $D_{ t}(H)=\langle\psi_c|e^{-iHt}|\psi_c\rangle$. Using the Taylor expansion formula, we have
\begin{align}
    e^{-iHt}=\sum\limits_{m\geq 0}\frac{t^m}{m!}\left(\frac{\partial^m e^{-iHt}}{\partial t^m}\right)_{t=0}.
    \label{Eq:origional}
\end{align}
Recall that $H=\sum_X\lambda_Xh_X$, then we define $Z_X=-it\lambda_X$ and $Z=(Z_{X_1},Z_{X_2},\cdots)$. As a result, we may write 
\begin{eqnarray}
\begin{split}
     \left(\frac{\partial^m e^{-iHt}}{\partial t^m}\right)_{t=0}&=\sum\limits_{X_1,\cdots,X_m}\left(\frac{\partial Z_{X_1}}{\partial t}\right)\cdots \left(\frac{\partial Z_{X_m}}{\partial t}\right)\frac{\partial^m e^{-iHt}}{\partial Z_{X_1}\cdots\partial Z_{X_m}}\big|_{Z=(0,0,\cdots,0)}\\
     &=\sum\limits_{X_1,\cdots,X_m}(-i)^m\lambda_{X_1}\cdots\lambda_{X_m}\frac{\partial^m}{\partial Z_{X_1}\cdots\partial Z_{X_m}}\sum\limits_{j\geq0}\frac{(- it)^j}{j!}H^j\big|_{Z=(0,0,\cdots,0)}\\
     &=\sum\limits_{X_1,\cdots,X_m}(-i)^m\lambda_{X_1}\cdots\lambda_{X_m}\frac{\partial^m}{\partial Z_{X_1}\cdots\partial Z_{X_m}}\sum\limits_{j\geq0}\frac{(- it)^j}{j!}\left(\sum\limits_{\lambda_1,\cdots,\lambda_j}\lambda_1\cdots\lambda_jh_{X_1}\cdots h_{X_j}\right)\big|_{Z=(0,0,\cdots,0)}\\
     &=\sum\limits_{X_1,\cdots,X_m}(-i)^m\lambda_{X_1}\cdots\lambda_{X_m}\frac{1}{m!}\sum\limits_{\sigma\in\mathcal{P}_m}h_{X_{\sigma(1)}}\cdots h_{X_{\sigma(m)}}.
     \label{Eq:partialderevate}
\end{split}
\end{eqnarray}
Taking Eq.~\ref{Eq:partialderevate} into Eq.~\ref{Eq:origional}, we finally obtain the cluster expansion formula
\begin{eqnarray}
    \begin{split}
         D_{t}(H)&=\sum\limits_{m\geq 0}\frac{(t^m}{m!}\sum\limits_{X_1,\cdots,X_m}(-i)^m\lambda_{X_1}\cdots\lambda_{X_m}\frac{1}{m!}\sum\limits_{\sigma\in\mathcal{P}_m}{\rm Tr}\left[|\psi_c\rangle\langle\psi_c|h_{X_{\sigma(1)}}\cdots h_{X_{\sigma(m)}}\right]\\
         &=\sum\limits_{m\geq 0}\sum\limits_{\bm W\in\mathcal{C}_m}\frac{(- it)^{\abs{\bm W}}{\bm\lambda}^{\bm W}}{\bm W!}\frac{1}{m!}\sum\limits_{\sigma\in\mathcal{P}_m}{\rm Tr}\left[|\psi_c\rangle\langle\psi_c|h_{X_{\sigma(1)}}\cdots h_{X_{\sigma(m)}}\right]\\
         &=1+\sum\limits_{m\geq 1}\sum\limits_{\bm W\in\mathcal{C}_m}\frac{(- it)^{\abs{\bm W}}{\bm\lambda}^{\bm W}}{\bm W!}\frac{1}{m!}\sum\limits_{\sigma\in\mathcal{P}_m}{\rm Tr}\left[|\psi_c\rangle\langle\psi_c|h_{X_{\sigma(1)}}\cdots h_{X_{\sigma(m)}}\right],
    \end{split}
\end{eqnarray}
where $\bm\lambda^{\bm W}=\prod_{X\in S}\lambda_W^{\mu_{\bm W}(X)}$ and $\bm W!=\prod_{X\in S}\mu_{\bm W}(X)!$. 

From the above expression, it is shown that each cluster $\bm W=(h_{X_1},\cdots,h_{X_m})$, and ${\rm Tr}\left[|\psi_c\rangle\langle\psi_c|h_{X_{\sigma(1)}}\cdots h_{X_{\sigma(m)}}\right]$ can be factorized when $\bm W$ is disconnected and $|\psi_c\rangle$ be the product state. In general, the classical initial state $|\psi_c\rangle$ may not represent a product state, but a superposition quantum state with $R\leq {\rm poly}(n)$ configurations. Suppose $|\psi_c\rangle=\sum_xa_x|x\rangle$ where $|x\rangle$ represents a tensor product state and coefficients satisfy $\sum_x\abs{a_x}^2=1$. As a result, the filter function can be decomposed by $$D_{ t}(H)=\sum_{x,y}a_xa_y^{*}\langle y|e^{-iHt}|x\rangle.$$ In the following sections, we focus on estimating $d_{x,y, t}(H)=\langle y|e^{-iHt}|x\rangle$.

When the cluster $\bm W$ is disconnected, the related term in the cluster expansion of $d_{x,y, t}(H)$ is given by
$$\frac{1}{m!}\sum_{\sigma\in\mathcal{P}_m}{\rm Tr}\left[|x\rangle\langle y|h_{X_{\sigma(1)}}\cdots h_{X_{\sigma(m)}}\right]=\prod\limits_{\bm V\in P_{c,\max}(\bm W)}\langle y|h^{\bm V}|x\rangle,$$
with $P_{c,\max}(\bm W)$ representing the partition of the cluster $\bm W$ into its maximal connected components. For the convenience of the following description, we denote $\mathcal{D}_{\bm W}(d_{x,y, t}(H))=(- t)^{\bm W}\prod\limits_{\bm V\in P_{c,\max}(\bm W)}\langle y|h^{\bm V}|x\rangle$, which naturally give rises to
\begin{align}
    d_{x,y, t}(H)=1+\sum\limits_{m\geq 1}\sum\limits_{\bm W\in\mathcal{C}_m}\prod\limits_{\bm V\in P_{c,\max}(\bm W)}\frac{\bm\lambda^{\bm V}}{\bm V!}\mathcal{D}_{\bm V}(d_{x,y, t}(H)).
\end{align}

Directly compute $d_{x,y, t}(H)$ is generally hard, however, we can compute the function $\ln(d_{x,y, t}(H))$ when the inverse temperature $ t$ is smaller than a threshold. We consider the formal Taylor series 
\begin{align}
    \ln(1+z)=\sum\limits_{k=1}^{\infty}\frac{(-1)^{k-1}}{k}z^k.
\end{align}
Taking $d_{x,y, t}(H)$ into above series may yield
\begin{eqnarray}
    \begin{split}
        \ln\left(d_{x,y, t}(H)\right)=\sum\limits_{k=1}^{\infty}\frac{(-1)^k}{k}\sum\limits_{m_1,\cdots,m_k}\sum\limits_{\bm W_1\in\mathcal{C}_{m_1},\cdots,\bm W_k\in\mathcal{C}_{m_k}}&\left(\prod\limits_{\bm V_1\in P_{c,\max}(\bm W_1)}\frac{\lambda^{\bm V_1}}{\bm V_1!}\mathcal{D}_{\bm V_1}(d_{x,y, t}(H))\right)\cdots\\
        &\cdots\left(\prod\limits_{\bm V_k\in P_{c,\max}(\bm W_k)}\frac{\lambda^{\bm V_k}}{\bm V_1!}\mathcal{D}_{\bm V_k}(d_{x,y, t}(H))\right).
    \end{split}
    \label{Eq:logcluster}
\end{eqnarray}
Using a similar method mentioned in Ref.~\cite{wild2023classical}, we can re-group the sums over all clusters, that is
\begin{align}
    \ln\left(d_{x,y, t}(H)\right)=\sum\limits_{m\geq 1}\sum\limits_{\bm W\in\mathcal{G}_m}\sum\limits_{P\in\mathcal{P}_c(\bm W)}C(P)\prod\limits_{\bm V\in P}\frac{\lambda^{\bm V}}{\bm V!}\mathcal{D}_{\bm V}(d_{x,y, t}(H)).
\end{align}
Here, $\mathcal{G}_m$ represents the set of all connected clusters $\bm W$ with size $m$, $\mathcal{P}_c(\bm W)$ represents the set of all partitions $P$ to the cluster $\bm W$, and the coefficient $C(P)$ can be determined by considering the different ways in which the partition $P$ can be generated by the clusters $\bm W_1,\cdots,\bm W_k$ given by Eq.~\ref{Eq:logcluster}.

\begin{lemma}[Proposition~9 in Ref.~\cite{wild2023classical}]
    Let the cluster $\bm W\in\mathcal{G}_m$, then we have
    \begin{align}
        \abs{\sum\limits_{P\in\mathcal{P}_c(\bm W)}C(P)\prod\limits_{\bm V\in P}\frac{\lambda^{\bm V}}{\bm V!}\mathcal{D}_{\bm V}(d_{x,y, t}(H))}\leq \left[2e(\mathfrak{d}+1)\abs{ t}\right]^m.
    \end{align}
    \label{lemma:error1}
\end{lemma}
Above lemma enables us to approximate $\ln\left(d_{x,y, t}(H)\right)$ by truncating the cluster expansion up to $M\leq\mathcal{O}(\ln(\abs{S}/\epsilon))$ order when $\abs{ t}<1/2e^2\mathfrak{d}(\mathfrak{d}+1)$, that is
\begin{eqnarray}
    \begin{split}
        &\abs{\ln\left(d_{x,y, t}(H)\right)-\sum\limits_{m=1}^M\sum\limits_{\bm W\in\mathcal{G}_m}\sum\limits_{P\in\mathcal{P}_c(\bm W)}C(P)\prod\limits_{\bm V\in P}\frac{\lambda^{\bm V}}{\bm V!}\mathcal{D}_{\bm V}(d_{x,y, t}(H))}\\
        =&\abs{\sum\limits_{m\geq M+1}\sum\limits_{\bm W\in\mathcal{G}_m}\sum\limits_{P\in\mathcal{P}_c(\bm W)}C(P)\prod\limits_{\bm V\in P}\frac{\lambda^{\bm V}}{\bm V!}\mathcal{D}_{\bm V}(d_{x,y, t}(H))}\\
        \leq&\sum\limits_{m\geq M+1}\sum\limits_{\bm W\in\mathcal{G}_m}\left[2e(\mathfrak{d}+1)\abs{ t}\right]^m\\
        \leq&\sum\limits_{m\geq M+1}\abs{S}\left[2e^2\mathfrak{d}(\mathfrak{d}+1)\abs{ t}\right]^m\\
        =&\frac{\abs{S}\left[2e^2\mathfrak{d}(\mathfrak{d}+1)\abs{ t}\right]^{M+1}}{1-\left[2e^2\mathfrak{d}(\mathfrak{d}+1)\abs{ t}\right]}.
    \end{split}
\end{eqnarray}
The third line comes from Lemma~\ref{lemma:error1}, and the fourth line is valid since $\abs{\mathcal{G}_m}\leq \abs{S}(e\mathfrak{d})^m$, with $\abs{S}$ represents the number of local terms in the Hamiltonian $H$. Let 
\begin{align}
    \epsilon^{\prime}=\frac{\abs{S}\left[2e^2\mathfrak{d}(\mathfrak{d}+1)\abs{ t}\right]^{M+1}}{1-\left[2e^2\mathfrak{d}(\mathfrak{d}+1)\abs{ t}\right]},
\end{align}
and this directly yields 
\begin{align}
    M\leq\frac{\ln\left(\frac{\abs{S}}{\epsilon^{\prime}[1-\left[2e^2\mathfrak{d}(\mathfrak{d}+1)\abs{ t}\right]]}\right)}{\ln(1/\left[2e^2\mathfrak{d}(\mathfrak{d}+1)\abs{ t}\right])}.
\end{align}
As a result, we can evaluate the running time complexity for computing the truncated $M$-order Taylor series. It is shown that the connected cluster set satisfies $\abs{\mathcal{G}_m}\leq \abs{S}(e\mathfrak{d})^m$. Recall $\mathcal{P}_c(\bm W)$ represents all connected filters given by the cluster $\bm W=(\bm W_1,\cdots,\bm W_m)$, and enumerating all filters of $\bm W$ into connected subclusters takes time $\exp(\mathcal{O}(M))$ (Details refer to Proposition~11 in Ref.~\cite{wild2023classical}). Finally, the coefficient $C(P)$ can be computed in time $\exp(\mathcal{O}(\abs{P}))$ by using the algorithm given by Ref.~\cite{bjorklund2008fast}. Taking all together, we can summarize that there exists a $$\abs{S}\exp(\mathcal{O}(M))=\abs{S}{\rm poly}\left[\left(\frac{\abs{S}}{\epsilon^{\prime}[1-\left[2e^2\mathfrak{d}(\mathfrak{d}+1)\abs{ t}\right]}\right)^{-1/\ln\left(\left[2e^2\mathfrak{d}(\mathfrak{d}+1)\abs{ t}\right]\right)}\right]$$ running time classical algorithm that can output an approximation to $\ln\left(d_{x,y, t}(H)\right)$ within $\epsilon$ additive error.

Equivalently, we can output an approximation $\hat{d}_{x,y, t}(H)$ such that 
\begin{align}
    e^{-\epsilon^{\prime}}\abs{\hat{d}_{x,y, t}(H)}\leq\abs{d_{x,y, t}(H)}\leq e^{\epsilon^{\prime}}\abs{\hat{d}_{x,y, t}(H)}.
\end{align}
This implies 
\begin{align}
    \frac{\abs{\hat{d}_{x,y, t}(H)-d_{x,y, t}(H)}}{\abs{d_{x,y, t}(H)}}=\abs{\frac{\hat{d}_{x,y, t}(H)}{d_{x,y, t}(H)}-1}\approx\abs{\ln\left(\frac{\hat{d}_{x,y, t}(H)}{d_{x,y, t}(H)}\right)}=\abs{\ln(\hat{d}_{x,y, t}(H))-\ln(d_{x,y, t}(H))}\leq\epsilon^{\prime},
\end{align}
where the second approximation is valid since $e^{-\epsilon^{\prime}}\leq \abs{\frac{\hat{d}_{x,y, t}(H)}{d_{x,y, t}(H)}}\leq e^{\epsilon^{\prime}}$. As a result, we can utilize estimators $\hat{d}_{x,y, t}(H)$ to construct an estimator to $D_{ t}(H)$, such that
\begin{eqnarray}
    \begin{split}
        \abs{D_{ t}(H)-\hat{D}_{ t}(H)}&=\abs{\sum\limits_{x,y}a_xa^*_y\left(d_{x,y, t}(H)-\hat{d}_{x,y, t}(H)\right)}\\
        &\leq\sum\limits_{x,y}\abs{a_xa_y^*}\abs{d_{x,y, t}(H)-\hat{d}_{x,y, t}(H)}\\
        &\leq \epsilon^{\prime}\sum\limits_{x,y}\abs{a_xa_y^*}\abs{d_{x,y, t}(H)}\\
        &\leq \epsilon^{\prime}\max_{x,y}\abs{d_{x,y, t}(H)}\\
        &\leq \epsilon^{\prime}.
    \end{split}
\end{eqnarray}
We note that above result depends on the condition $T\leq 1/(2e^2\mathfrak{d}(\mathfrak{d}+1))$, equivalently, we have 
\begin{align}
    \epsilon\geq e^{-\Delta^2/(4e^2\mathfrak{d}^4)}\Delta \gamma^{-1},
\end{align}

Let $\epsilon^{\prime}=\epsilon_1\Delta^2$, above results can be summarized as follows:

\begin{theorem}\label{thm:deq_rqite}
Suppose an $R$-configurational semi-classical guiding state $\ket{\psi_c}$ is given, there exists a classical algorithm to solve the GSEE problem with a runtime of
\begin{align}\label{eq:GSEE_limited}
        R^2\abs{S}{\rm poly}\left[\left(\frac{\abs{S}}{\epsilon \gamma[\Delta-\left[2e^2\mathfrak{d}(\mathfrak{d}+1)\right]}\right)^{1/\ln\left(\left[\Delta(2e^2\mathfrak{d}(\mathfrak{d}+1))^{-1}\right]\right)}\right]
    \end{align}
    that outputs a $\epsilon$-approximation to the ground state energy $E_0$, with corresponding to the accuracy limit:
\begin{eqnarray}
    \epsilon\geq e^{-\Delta^2/(4e^2\mathfrak{d}^4)}\Delta \gamma^{-1},
\end{eqnarray}
\end{theorem}

\begin{proof}
    To estimate each $\ln d_{x,y,t}(H)$ within $\epsilon^{\prime}$ additive error, the computational complexity of $|S|\exp(\mathcal{O}(M))$ is required. We note that $\epsilon^{\prime}=\epsilon_1/\|\hat{F}_{\sigma, T}\|_1=\mathcal{O}(\epsilon_1\Delta^2)$, meanwhile $\epsilon_1=\mathcal{O}(\epsilon \gamma\Delta^{-3})$, we finally obtain $\epsilon^{\prime}=\mathcal{O}(\epsilon \gamma\Delta^{-1})$ which gives rise to the above computational complexity. 
\end{proof}

\section{Proof of Theorem~\ref{theorem:dequantize}}
\begin{theorem}[Formal version of Theorem~\ref{theorem:dequantize}]
Suppose an $R={\rm poly}(n)$-configurational classical guiding state $\ket{\psi_c}$ is given, which has $\gamma$ overlap to the ground sate of $H$. Then there exists a classical algorithm to solve the GLH problem with a runtime of 
\begin{align}
    \mathcal{O}\left({\rm poly}(n)\left(\frac{e^{ \mathfrak{d}\sqrt{\log(1/(\epsilon\gamma))}/\Delta}{\rm poly}(n)R^{5/2}\Delta}{\epsilon\gamma}\right)^{e^{\mathfrak{d}\sqrt{\log(1/(\epsilon\gamma))}/\Delta}}\right)
\end{align}
that outputs a $\epsilon$-approximation to the ground state energy $E_0$, with success probability $\geq 1-{\rm poly}(R,1/\Delta)/(2^n\epsilon^2)$
\label{theorem:dequantizeformal}
\end{theorem}
Before giving the proof details, we still define the error notations:
\begin{itemize}
    \item $\epsilon$: the accuracy to estimate $E_0$;
    \item $\epsilon_1$: the accuracy to estimate $C(x)$
    \item $\epsilon^{\prime}$: the accuracy to estimate $\langle\psi_c|e^{-iHt}|\psi_c\rangle$
\end{itemize}

Lemmas~\ref{lemma:accuracy1},~\ref{lemma:time},~\ref{lemma:samplecomplexity} provide relationships between above accuracy parameters.

\begin{proof}
Here, we consider to design a classical algorithm in simulating the Hadamard Test algorithm when the target problem has the particle number preserved property. Specifically, we focus on a class of Hamiltonians (quantum lattice model and electronic structure model) $H$ with the property $[H,\hat{n}]=0$, where $\hat{n}=\sum_in_i$ and $n_i$ represents the particle number operator on the $i$-th site. Suppose the quantum system has $n$ spin orbitals, and the initial state is given by a semi-classical state $|\psi_c\rangle$ (used in Refs.~\cite{gharibian2022dequantizing, cade2022improved}) with the particle number $\hat{n}\geq 0$. Our target is to estimate both real part and imaginary part of $\langle \psi_c|e^{-iHt}|\psi_c\rangle$. In general, the quantum Hadamard Test algorithm requires the controlled $e^{-iHt}$ operation, however, the particle number preserving property enables us to bypass the requirement of ancilla qubit.

In detail, let the vacuum state $|\Omega\rangle=|0^n\rangle$, then the particle number symmetry enables the relationship
\begin{align}
    e^{-iHt}|\Omega\rangle=|\Omega\rangle.
\end{align}
Starting from the quantum state 
\begin{align}
    |\psi_1\rangle=\frac{1}{\sqrt{2}}\left(|\Omega\rangle+|\psi_c\rangle\right),
\end{align}
apply the operator $e^{-iHt}$ to the quantum system, then the quantum system becomes to
\begin{align}
    e^{-iHt}|\psi_1\rangle=\frac{1}{\sqrt{2}}\left(|\Omega\rangle+e^{-iHt}|\psi_c\rangle\right).
\end{align}
As a result, we have
\begin{align}
    {\rm Re}\left[\langle\psi_c|e^{-iHt}|\psi_c\rangle\right]=\langle\psi_1|e^{iHt}\left(|\Omega\rangle\langle\psi_c|+|\psi_c\rangle\langle\Omega|\right)e^{-iHt}|\psi_1\rangle=:\langle\psi_1|e^{iHt}(O_1+O_2)e^{-iHt}|\psi_1\rangle,
\end{align}
\begin{align}
    {\rm Im}\left[\langle\psi_c|e^{-iHt}|\psi_c\rangle\right]=i\langle\psi_1|e^{iHt}\left(|\psi_c\rangle\langle\Omega|-|\Omega\rangle\langle\psi_c|\right)e^{-iHt}|\psi_1\rangle=:i\langle\psi_1|e^{iHt}(O_2-O_1)e^{-iHt}|\psi_1\rangle.
\end{align}

As a result, simulating $\langle\psi_1|e^{iHt}O_1e^{-iHt}|\psi_1\rangle$ and $\langle\psi_1|e^{iHt}O_2e^{-iHt}|\psi_1\rangle$ suffices to simulate the the Loschmidt echo $\langle\psi_c|e^{-ijH}|\psi_c\rangle$. Taking $\langle\psi_1|e^{iHt}O_2e^{-iHt}|\psi_1\rangle$ as an example, we suppose the classical state $|\psi_c\rangle=\sum_{j=1}^Ra_{j}|\bm j\rangle$ has $R={\rm poly}(n)$ configurations where each configuration (product state) $|\bm j\rangle$ has $P$ particles, and amplitude $\abs{a_{ j}}=1/{\rm poly}(n)$ with $\sum_{j=1}^R\abs{a_{j}}^2=1$. This implies $\max_{j\in[R]}\{\abs{a_j}\}\geq 1/\sqrt{R}$. The expectation value
\begin{align}
    \langle\psi_1|e^{iHt}O_2e^{-iHt}|\psi_1\rangle=\sum_{j=1}^Ra_{j}\langle \psi_1|e^{-Ht}|\bm j\rangle\langle\Omega|e^{-iHt}|\psi_1\rangle,
\end{align}
and an $\mathcal{O}(\epsilon^{\prime}/\sqrt{R})$-approximation to each term $\langle \psi_1|e^{-Ht}|\bm j\rangle\langle\Omega|e^{-iHt}|\psi_1\rangle$ may give rise to a $\epsilon^{\prime}$-approximation to $\langle\psi_1|e^{iHt}O_2e^{-iHt}|\psi_1\rangle$. To approximate $\langle \psi_1|e^{-Ht}|\bm j\rangle\langle\Omega|e^{-iHt}|\psi_1\rangle$, we assume $|\psi_1\rangle=\sum_{j=1}^{R+1}b_j|j\rangle$ with $b_0=1/\sqrt{2}$ and $b_j=a_j/\sqrt{2}$ for $j\geq 2$. Then the expectation value $\langle \psi_1|e^{-Ht}|\bm j\rangle\langle\Omega|e^{-iHt}|\psi_1\rangle=\sum_{x,x^{\prime}=1}^{R+1}b_xb^*_{x^{\prime}}\langle \bm x|e^{-Ht}|\bm j\rangle\langle\Omega|e^{-iHt}|\bm x^{\prime}\rangle$, where each configuration $|\bm x\rangle$ can be considered as a sample from the Clifford ensemble. As a result, one requires to estimate $\langle \bm x|e^{-Ht}|\bm j\rangle\langle\Omega|e^{-iHt}|\bm x^{\prime}\rangle$ within $\epsilon^{\prime}/R^{5/2}$ additive error, sufficing to give a $\epsilon^{\prime}/\sqrt{R}$ additive error to $\langle \psi_1|e^{-Ht}|\bm j\rangle\langle\Omega|e^{-iHt}|\psi_1\rangle$.

Considering the evolution time $t\leq \tilde{\mathcal{O}}\left(\Delta^{-1}\sqrt{2\ln(8\pi^{-1}\epsilon_1^{-1}\Delta^{-2})}\right)$ and $\epsilon^{\prime}=\epsilon\gamma\Delta^{-1}$, by theorem~\ref{theorem_formal}, this takes the computational complexity $$\mathcal{O}\left({\rm poly}(n)\left(\frac{e^{ \mathfrak{d}\sqrt{2\ln(8\pi^{-1}(\epsilon\gamma)^{-1}\Delta^{-2})}/\Delta}{\rm poly}(n)R^{5/2}\Delta}{\epsilon\gamma}\right)^{e^{\mathfrak{d}\sqrt{2\ln(8\pi^{-1}(\epsilon\gamma)^{-1}\Delta^{-2})}/\Delta}}\right)$$ with success probability $\geq 1-{\rm poly}(R)/(2^n(\epsilon^{\prime})^2)$.

Repeat above process for each $\langle \psi_1|e^{-Ht}|\bm j\rangle\langle\Omega|e^{-iHt}|\psi_1\rangle$, one obtains an estimation to $\langle\psi_1|e^{iHt}O_2e^{-iHt}|\psi_1\rangle$ and $\langle\psi_1|e^{iHt}O_1e^{-iHt}|\psi_1\rangle$. To estimate the function $C(x)$ within $\epsilon_1=\mathcal{O}(\epsilon \gamma\Delta^{-3})$ additive error, it is required $\mathcal{O}(\Delta^{-2}\log(\Delta/\epsilon)\epsilon_1^{-2})$  samples. For each sample, we require the corresponding estimation all successful, the final success probability is $\geq 1-{\rm poly}(R,1/\Delta)/(2^n\epsilon^2)$.
\end{proof}

\section{2D GLH problem by Ancilla-Free Hadamard Test}
We suppose the classical state $|\psi_c\rangle=\sum_{j=1}^Ra_{j}|j\rangle$ has $R={\rm poly}(n)$ configurations where each configuration (product state) $|j\rangle$ has $P$ particles, and amplitude $\abs{a_{ j}}=1/{\rm poly}(n)$ with $\sum_{j=1}^R\abs{a_{j}}^2=1$. This implies $\max_{j\in[R]}\{\abs{a_j}\}\geq 1/\sqrt{R}$. The expectation value
\begin{align}
    \langle\psi_1|e^{iHt}O_2e^{-iHt}|\psi_1\rangle=\sum_{j=1}^Ra_{j}\langle \psi_1|e^{iHt}|\bm j\rangle\langle\Omega|e^{-iHt}|\psi_1\rangle,
\end{align}
and an $\mathcal{O}(\epsilon^{\prime}/\sqrt{R})$-approximation to each term $\langle \psi_1|e^{-Ht}|\bm j\rangle\langle\Omega|e^{-iHt}|\psi_1\rangle$ may give rise to a $\epsilon^{\prime}$-approximation to the target quantum mean value.

Denote $O=|\bm j\rangle\langle\Omega|$, the quantum mean value can be equivalently computed by
\begin{eqnarray}
    \begin{split}
         \mu(t)&=\langle\psi_1|e^{iHt}\left(O_1\otimes\dots\otimes O_n\right)e^{-iHt}|\psi_1\rangle\\
         &=
         \langle\psi_1|\left(e^{iHt}O_1e^{-iHt}\right)\left(e^{iHt}O_2e^{-iHt}\right)\cdots\left(e^{iHt}O_ne^{-iHt}\right)|\psi_1\rangle\\    
    &=\langle\psi_1|U_1(t)U_2(t)\cdots U_n(t)|\psi_1\rangle,
    \end{split}
\end{eqnarray}
where $U_i(t)=e^{iHt}O_ie^{-iHt}$.  
Our first step aims to approximate $U_i(t)$ by $V_i(t)$ such that $\|U_i(t)-V_i(t)\|\leq \mathcal{O}(\epsilon/2n)$ via using the cluster expansion method which is given by lemma~\ref{lemma:VTCompute}. Here, the operator $V_i(t)$ is essentially a linear combination of ${\rm poly}(n)$ matrices which nontrivial act on at most $\mathcal{O}(e^{\mathfrak{d}t}\log(2n/\epsilon))$ qubits. After approximating $U_i(t)$ by operator $V_i(t)$ for index $i\in[n]$, the mean value $\mu(\vec{t})$ can be approximated by $ \hat{\mu}(t)=\langle\psi_1|V_1(t)\cdots V_n(t)|\psi_1\rangle$ such that $\abs{\mu(t)-\hat{\mu}(t)}\leq \epsilon/2$.

The second step applies the causality principle and the lightcone of $V_i(t)$ to assign $\{V_i(t)\}_{i=1}^n$ into two different groups, which are denoted by $V(R_1)$ and $V(R_2)$. This method is first studied in Ref.~\cite{bravyi2021classical} to simulate constant 2D digital quantum circuits. It is shown that each region ($R_1$ or $R_2$) consists of $\sqrt{n}/4M$ sub-regions which are separated by $\geq 2M$ distance. This property enables operators $V(R_1)$ and $V(R_2)$ are easy to simulate classically, and the quantum dynamics mean value has the form $\hat{\mu}(t)=\langle\psi_1|V(R_1)V(R_2)|\psi_1\rangle$. Then the classical Monte Carlo algorithm can be used to approximate $\hat{\mu}(t)$. Noting that operators $V(R_1)$ and $V(R_2)$ are not always unitary matrices, they have to be normalized in advance, such that $\gamma_i=\|V(R_i)|0^n\rangle\|^2\leq1$ for $i\in\{1,2\}$. This step can be implemented efficiently since both $V(R_1)$ and $V(R_2)$ are the product of some local operators $V_i(\vec{t})$ which can be normalized easily. As a result, as a mean value of $$F(x)=\frac{\gamma_1\langle x|V(R_2)|0^n\rangle}{\langle x|V^{\dagger}(R_1)|0^n\rangle}$$ with $x$ samples from $$p(x)=\gamma_1^{-1}\abs{\langle0^n|V(R_1)|x\rangle}^2,$$ we have
\begin{eqnarray}
         \hat{\mu}(t)=\sum\limits_{x}\langle0^n|V(R_1)|x\rangle\langle x|V(R_2)|0^n\rangle
         =\sum\limits_{x}p(x)\frac{\gamma_1\langle x|V(R_2)|0^n\rangle}{\langle x|V^{\dagger}(R_1)|0^n\rangle},
         \label{Eq:MCMC}
\end{eqnarray}
and the variance of $F(x)$ is given by ${\rm Var}(F)=\sum_{x}p(x)\left\|\frac{\gamma_1\langle x|V(R_2)|0^n\rangle}{\langle x|V^{\dagger}(R_1)|0^n\rangle}\right\|^2-\hat{\mu}^2(t)=\gamma_1\gamma_2-\hat{\mu}^2(t)\leq 1$.

As a result, $\mathcal{O}(4/\epsilon^2)$ samples $x$ suffice to provide an estimation to $\hat{\mu}(\vec{t})$ within $\mathcal{O}(\epsilon/2)$ additive error. Combining the above two steps together, a $\epsilon$ approximation to the quantum mean value problem is provided. 

\begin{algorithm}
\label{Alg}
\caption{Classical Algorithm for estimating $\langle \psi_1|e^{-Ht}|\bm j\rangle\langle\Omega|e^{-iHt}|\psi_1\rangle$ in 2D topology}
\textbf{Input:} Hamiltonian set $\{H^{(1)},\cdots,H^{(K)}\}$, time series $\{t_1,\cdots,t_K\}$, global observable $O$, accuracy $\epsilon$;\\
\textbf{Output:} Mean value estimation $\hat{\mu}(t)$;\\
{\textbf{for}} $i=1,\cdots, n$\\
\quad \quad \textbf{Compute} $V_i(\vec{t})$ given by Eq.~\ref{Eq:clusterexpansion} via using Lemma~\ref{lemma:VTCompute}.\\

    {\textbf{End for}}\\   
{\textbf{Grouping}} $\{V_i(\vec{t})\}$ into $V(R_1)$ and $V(R_2)$;\\
{\textbf{for}} $j=1,\cdots, J=[4/\epsilon^2]$\\
\quad \quad \textbf{Sample} $x_j\sim p(x)=\abs{\langle0^n|V(R_1)|x\rangle}^2$, {\textbf{Compute}} $F(x_j)$ by using Light-cone arguement;\\
    {\textbf{End for}}\\
\textbf{Output} $\hat{\mu}(\vec{t})=\frac{1}{J}\sum_{j=1}^{J}F(x_j)$.
\end{algorithm}

We summarize the above results as follows.

\begin{theorem}
Suppose a 2D geometrically local Hamiltonian, and
an $R$-configurational semi-classical guiding state $\ket{\psi_c}$, there exists a classical algorithm to solve the GLH problem with a runtime of
 \begin{align}
         \mathcal{O}\left(\frac{nR}{(\epsilon\gamma)^2}\left(\frac{\sqrt{R}ne^{\mathfrak{d}\sqrt{2\ln(8\pi^{-1}(\epsilon\gamma)^{-1}\Delta^{-2})}/\Delta}}{\epsilon\gamma}\right)^{e^{\mathfrak{d}\sqrt{2\ln(8\pi^{-1}(\epsilon\gamma)^{-1}\Delta^{-2})}/\Delta}\log(\sqrt{R}n\Delta/(\epsilon\gamma))}\right),
    \end{align}
    that outputs a $\epsilon$-approximation to the ground state energy $E_0$.
\end{theorem}

\subsection{Compute $F(x)$}
\begin{definition}[$2$-coloring of $2$-dimensional lattice with distance $r$] Consider a graph representing a $2$-dimensional lattice, where each vertex is assigned a color, and the entire lattice is divided into many small regions with different colors. A $2$-coloring of $2$-dimensional lattice with distance $r$ satisfies the following properties:
\begin{itemize}
\item There are $2$ colors in total;
\item The distance between two regions with the same color is at least $r$.
\end{itemize}
\label{Def:color}
\end{definition}

Suppose the $n$-qubit $2$-dimensional lattice has been colored by $2$ regions, denoted by $R_1,R_2$, then each region can be divided into 
\begin{align}
    R_j=\cup_{l=1}^{\sqrt{n}/r}R_j(l),
\end{align}
where each zone $R_j(l)$ contains $r\sqrt{n}$ qubits. It is easy to check $R_j$ contains $S=[\sqrt{n}/r]$ zones separated by distance at least $\geq r$.

Let the zone distance $r=2M$ (given in Def.~\ref{Def:color} and lemma~\ref{lemma:clustersupp}), then for any region index $j\in[2]$ and $l\in[S]$, we have 
\begin{align}
    \abs{R_j(l)}=2M\sqrt{n}.
\end{align}

Recall that $\hat{\mu}(t)=\langle0^n|V_1(t)V_2(t)\cdots V_n(t)|0^n\rangle$, then we can assign operators $\{V_i(t)\}_{i=1}^n$ into groups $R_1,R_2$ marked by different colors, where each group is denoted by
\begin{align}
    V(R_j)=\bigotimes\limits_{i \in R_j}V_i(t)=\bigotimes\limits_{l=1}^S\left(\bigotimes_{i\in R_j(l)}V_i(t)\right)=\bigotimes\limits_{l=1}^SV_{R_j(l)}(t),
    \label{Eq:clustercircuit}
\end{align}
where $V_{R_j(l)}(t)$ contains all $V_i(t)$ if the qubit index $i$ belongs to the zone $R_j(l)$. 
To deeply understand whether $V(R_j)$ can be classically simulated, we need to evaluate the support of $V_{R_j(l)}(t)$.

\begin{lemma}
    Given the operator $V_{R_j(l)}(t)$ defined as Eq.~\ref{Eq:clustercircuit}, we have
\begin{align}
    {\rm supp}\left(V_{R_j(l)}(t)\right)\leq 4M\sqrt{n},
\end{align}
where $M=\mathcal{O}\left(e^{\pi teK\mathfrak{d}}\log(2n/\epsilon)\right)$,
meanwhile
\begin{align}
  {\rm supp}\left(V_{R_j(l)}(t)\right)\cap {\rm supp}\left(V_{R_j(q)}(t)\right)=\emptyset
\end{align}
for all indexes $l\neq q\in[S]$.
\label{lemma:Lightconesize}
\end{lemma}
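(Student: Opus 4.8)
\textbf{Proof proposal for Lemma~\ref{lemma:Lightconesize}.}

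The plan is to bound the support of each zone operator $V_{R_j(l)}(t)$ by propagating the single-qubit support bound from Lemma~\ref{lemma:clustersupp} through the tensor-product structure in Eq.~\ref{Eq:clustercircuit}, and then to exploit the geometry of the $2$-coloring (Def.~\ref{Def:color}) with zone distance $r=2M$ to conclude disjointness. First I would recall that by Lemma~\ref{lemma:clustersupp}, each $V_i(t)$ nontrivially acts on at most $\mathcal{O}(4M^2)$ qubits, and more precisely that $\mathrm{supp}(V_i(t))$ is contained in a ball (in the lattice metric) of radius $\mathcal{O}(M)$ centered at qubit $i$, since the $M$-truncated cluster expansion of Eq.~\ref{Eq:clusterexpansion} only involves connected super-clusters of total size at most $KM$, each extending at most $\mathfrak{d}$-neighbors per step, so the geometric spread from site $i$ is $\mathcal{O}(M)$ (absorbing $K,\mathfrak{d}=\mathcal{O}(1)$). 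A zone $R_j(l)$ is, by construction, a thin strip of short-side length $M$ spanning the full lattice side $\sqrt{n}$; taking the union of radius-$\mathcal{O}(M)$ balls around every site of such a strip yields a fattened strip of short-side length $\mathcal{O}(M)$, hence $|\mathrm{supp}(V_{R_j(l)}(t))| \leq 4M\sqrt{n}$ after fixing the constant in the strip widening (the factor $4$ comes from widening the width-$M$ strip by $\leq M$ on each side plus the original $M$, i.e.\ into a width-$\leq 4M$ strip, times the length $\sqrt{n}$). I would state this widening precisely as: $\mathrm{supp}(V_{R_j(l)}(t)) \subseteq \{$sites within lattice-distance $2M$ of $R_j(l)\}$, and then count.

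Next I would establish the disjointness claim. By the $2$-coloring with distance $r=2M$, two distinct zones $R_j(l)$ and $R_j(q)$ of the same color satisfy $\mathrm{dist}(R_j(l),R_j(q)) \geq 2M$ (here I would double-check against Fig.~\ref{fig:Main_Results}.d whether the separation constant needs to be $4M$ rather than $2M$; if the supports are width-$2M$ fattenings then a separation of $2M$ is exactly the borderline, so I would set the coloring distance to $\geq 4M$ to be safe, matching the ``$(\sqrt n \times 4M)$-size black dot circles'' and ``separated by $\geq 2M$'' language in the figure caption, and adjust constants in the support-widening step accordingly so the stated $4M\sqrt n$ bound still holds). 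Since $\mathrm{supp}(V_{R_j(l)}(t))$ lives within distance $\leq 2M$ of $R_j(l)$ and likewise for $R_j(q)$, if the two zones are separated by more than $4M$ then the two fattened regions cannot intersect, giving $\mathrm{supp}(V_{R_j(l)}(t)) \cap \mathrm{supp}(V_{R_j(q)}(t)) = \emptyset$ for all $l \neq q \in [S]$.

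The main obstacle I anticipate is bookkeeping the geometric constants consistently: Lemma~\ref{lemma:clustersupp} is stated as an ``informal'' bound $|\mathrm{supp}(V_i(t))| \leq \mathcal{O}(4M^2)$, which is a \emph{volume} bound, whereas the disjointness argument really needs a \emph{metric (radius)} bound — namely that $\mathrm{supp}(V_i(t))$ fits inside a radius-$\mathcal{O}(M)$ ball. I would need to go back into the proof of Lemma~\ref{lemma:clustersupp} in Appendix~\ref{AppendixA} and extract (or re-derive) the stronger statement that any connected super-cluster of size $\leq KM$ containing site $i$ has all its sites within graph-distance $\mathcal{O}(KM\cdot\mathrm{radius\ of\ an\ }h_X)$ of $i$ on the $2$D lattice, so that the support is contained in an $\mathcal{O}(M)\times\mathcal{O}(M)$ box — the $4M^2$ volume bound is then just the area of this box, and the $4M$ in ``$4M\sqrt n$'' is the width of the corresponding strip. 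Once that metric localization is pinned down, both the support-size count and the disjointness are immediate from the pigeonhole-free geometry of disjoint fattened strips, and I would simply choose the coloring distance $r$ in Def.~\ref{Def:color} large enough (a constant multiple of $M$) to guarantee the fattened strips stay disjoint while keeping each strip's width $\leq 4M$.
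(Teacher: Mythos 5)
Your proposal follows essentially the same route as the paper's proof: localize each $\mathrm{supp}(V_i(t))$ to a radius-$M$ neighborhood of site $i$ (which is precisely how the paper obtains its $\mathcal{O}(4M^2)$ bound in Appendix~\ref{AppendixA}, by enclosing all connected clusters of size $\leq M$ in a radius-$M$ disk around $O_i$), fatten the $\sqrt{n}\times 2M$ strip to a $\sqrt{n}\times 4M$ strip to get the $4M\sqrt{n}$ count, and invoke the $r=2M$ coloring separation for disjointness. The issue you flag about needing a metric (radius) containment rather than just a volume bound, and about the $2M$ separation being borderline, is well placed, but it resolves exactly as you anticipate once the radius-$M$ containment is extracted from the proof of Lemma~\ref{lemma:clustersupp}.
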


\begin{proof}
In the two-dimensional lattice, the proposed coloring method enables the region $R_j(l)$ being a ($\sqrt{n}\times 2M$) rectangle area, where each qubit relates to an operator $V_i(t)$. According to the lemma~\ref{lemma:clustersupp}, it is shown that the support of $V_i(t)$ is upper bounded by $\mathcal{O}(M^2)$. As a result, most of ${\rm supp}(V_i(t))$ have the overlap and can thus be contracted. Finally, $ {\rm supp}\left(V_{R_j(l)}(t)\right)$ is only characterized by operators $\{V_i(t),i\in \partial R_j(l)\}$ combined with $R_j(l)$, that is 
    \begin{align}
         {\rm supp}\left(V_{R_j(l)}(t)\right)=\bigcup\limits_{i\in \partial R_j(l)}{\rm supp}\left(V_i(t)\right)+\abs{R_j(l)}\leq 4M\sqrt{n},
    \end{align}
where $\partial R_j(l)$ is the boundary of the region $R_j(l)$. The visualization of the above statement is provided by Fig.~1 in the main file.

Recall that ${\rm supp}\left(V_{R_j(l)}(t)\right)$ is essentially a quasi-1D-rectangular in the two-dimensional lattice, then the short-side length of ${\rm supp}\left(V_{R_j(l)}(t)\right)$ is upper bounded by $\leq 4M$. Furthermore, from Def.~\ref{Def:color}, we know that
\begin{align}
    {\rm dist}\left(R_j(l),R_{j}(q)\right)\geq r=2M,
\end{align}
which naturally implies 
$${\rm supp}\left(V_{R_j(l)}(t)\right)\cap {\rm supp}\left(V_{R_j(q)}(t)\right)=\emptyset.$$
for all $l\neq q$.
\end{proof}

The above property enables us to decouple $V(R_j)$ into a series of operators $V_{R_j(l)}(t)$ whose support region does not have the overlap, and this naturally provides a classical method in simulating $\langle x|V_{R_j(l)}(t)|0^n\rangle$. 

\begin{lemma}
    Given the operator $V_{R_j(l)}(t)$ defined as Eq.~\ref{Eq:clustercircuit}, for any for $x\in\{0,1\}^n$, there exists a classical algorithm that can deterministically output $\langle x|V_{R_j(l)}(t)|0^n\rangle$ within 
     $C(n)\leq\mathcal{O}\left(\sqrt{n}2^{4M^2}\right)$ running time.
    \label{lemma:samplecompute}
\end{lemma}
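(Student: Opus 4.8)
The plan is to establish Lemma~\ref{lemma:samplecompute} by exploiting the structure of $V_{R_j(l)}(t)$ as a tensor product of local operators $V_i(t)$ whose supports, by Lemma~\ref{lemma:clustersupp}, each live on $\mathcal{O}(4M^2)$ qubits arranged in a vertical strip of the $2$D lattice. First I would recall that $V_{R_j(l)}(t)=\bigotimes_{i\in R_j(l)}V_i(t)$, where the zone $R_j(l)$ consists of a $\mathcal{O}(M)\times\sqrt{n}$ block of qubits. Each $V_i(t)$ is a linear combination of $\mathrm{poly}(n)$ cluster-derivative matrices, each supported on a ball of $\mathcal{O}(4M^2)$ qubits centered near qubit $i$. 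The key geometric observation, inherited from Lemma~\ref{lemma:Lightconesize}, is that the total support $\mathrm{supp}(V_{R_j(l)}(t))$ is confined to a horizontal band of width $4M$ and height $\sqrt{n}$, i.e.\ at most $4M\sqrt{n}$ qubits.

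Next I would describe the actual computation of the amplitude $\langle x|V_{R_j(l)}(t)|0^n\rangle$. Since all qubits outside $\mathrm{supp}(V_{R_j(l)}(t))$ are acted on trivially, the amplitude factors as $\langle x_{\mathrm{out}}|0^{\otimes}\rangle$ (a product of $\mathbb{1}$ or $0$ indicators) times the restricted amplitude $\langle x_{\mathrm{supp}}|V_{R_j(l)}(t)|0^{\otimes}\rangle$ on the $\leq 4M\sqrt{n}$ support qubits. The naive cost of writing $V_{R_j(l)}(t)$ as an explicit matrix on its support is $2^{4M\sqrt{n}}$, which is too large; the improvement comes from further decomposing the band into $\sqrt{n}$ (or $\sqrt{n}/\mathcal{O}(M)$) vertically stacked sub-blocks each of size $\mathcal{O}(4M^2)$, within which the local operators overlap, but which are mutually non-overlapping. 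Then $V_{R_j(l)}(t)$ restricted to its support is itself a tensor product over these sub-blocks, so the amplitude factorizes: $\langle x|V_{R_j(l)}(t)|0^n\rangle=\prod_{\text{sub-blocks }B}\langle x_B|V_B(t)|0^{|B|}\rangle$. Each factor $\langle x_B|V_B(t)|0^{|B|}\rangle$ is computed by forming the $2^{\mathcal{O}(4M^2)}\times 2^{\mathcal{O}(4M^2)}$ matrix $V_B(t)$ explicitly (summing the $\mathrm{poly}(n)$ cluster terms, each a product of $\mathcal{O}(M)$ local commutators computable from Lemma~\ref{lemma:VTCompute}) and reading off the $(x_B,0)$ entry, at cost $2^{\mathcal{O}(4M^2)}=2^{4M^2}$ up to constants. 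Multiplying the $\mathcal{O}(\sqrt n)$ such factors gives total cost $\mathcal{O}(\sqrt n\, 2^{4M^2})=C(n)$, which is the claimed bound.

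The main obstacle I anticipate is making the sub-block decomposition rigorous: one must verify that the overlapping local supports $\{\mathrm{supp}(V_i(t))\}_{i\in R_j(l)}$ genuinely partition into mutually non-overlapping vertical chunks of size $\mathcal{O}(4M^2)$, so that the tensor-product factorization of the amplitude is exact rather than approximate. This requires carefully tracking how the $\mathcal{O}(4M^2)$-radius support of each $V_i(t)$ (from Lemma~\ref{lemma:clustersupp}) sits inside the $4M$-wide band and arguing that qubits more than $\mathcal{O}(M)$ apart in the vertical direction lie in supports of disjoint groups of $V_i$'s; the band width being only $4M$ is what forces each individual $V_i$'s support to be essentially a vertical-ish region rather than spreading isotropically, but one must confirm the arithmetic closes. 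A secondary technical point is bounding the arithmetic precision: since $V_i(t)$ is an exact finite linear combination (post-truncation) with coefficients computable to any desired precision by Lemma~\ref{lemma:VTCompute}, the amplitude is computed exactly up to fixed-precision arithmetic, so no additional error beyond the $\epsilon/2L$ already absorbed into $V_i(t)$ is incurred; I would state this explicitly so the ``deterministically output'' claim is justified.

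Once these two points are settled, assembling the runtime is routine: $\mathrm{poly}(n)$ cluster terms per $V_i(t)$, $\mathcal{O}(\sqrt n)$ local operators per sub-block, $2^{4M^2}$ to manipulate each sub-block matrix, and $\mathcal{O}(\sqrt n)$ sub-blocks to multiply, all of which collapse into $\mathcal{O}(\sqrt n\, 2^{4M^2})$ after absorbing polynomial factors into the constant, consistent with the stated $C(n)\leq\mathcal{O}(\sqrt n\, 2^{4M^2})$.
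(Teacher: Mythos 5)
There is a genuine gap at the heart of your argument: the tensor-product factorization over sub-blocks does not exist. Every qubit $i$ in the zone $R_j(l)$ (a $2M\times\sqrt{n}$ strip) carries an operator $V_i(t)$ whose support is a neighborhood of radius $\sim M$ around $i$, so the supports of $V_i$'s at vertically adjacent positions overlap heavily, and for \emph{any} horizontal cut of the strip the operators within distance $M$ of the cut have supports straddling it. Consequently the family $\{\mathrm{supp}(V_i(t))\}_{i\in R_j(l)}$ cannot be partitioned into mutually disjoint chunks, and the identity $\langle x|V_{R_j(l)}(t)|0^n\rangle=\prod_B\langle x_B|V_B(t)|0^{|B|}\rangle$ you rely on is false. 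You correctly flag this as the main obstacle, but it is not a technical detail to be verified --- it is the point where the approach breaks. (Note also that Eq.~\ref{Eq:clustercircuit} writes $V_{R_j(l)}(t)$ with a $\bigotimes$ symbol, but since the factors' supports overlap this is really an ordered operator product, which is precisely why no block factorization is available.)

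The paper's proof avoids this by a sequential sweeping-window simulation in the spirit of Bravyi et al.: label the $4M\sqrt{n}$ qubits of $\mathrm{supp}(V_{R_j(l)}(t))$ in row-major order along the strip, and slide an $M\times 4M$ window down the long direction. Because each $V_i(t)$ has support of size at most $4M^2$, only operators whose support lies inside the current window can affect the amplitude of the first row of that window; so one maintains a state vector on the $4M^2$ window qubits, applies the relevant $V_i$'s, projects the leading row onto the corresponding bits of $x$, shifts the window by one row, and repeats. Each of the $\sqrt{n}$ rows costs $\mathcal{O}(2^{4M^2})$ in state-vector operations, giving the claimed $\mathcal{O}(\sqrt{n}\,2^{4M^2})$ total. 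This sequential contraction handles the overlapping supports exactly, which is what your disjoint-block decomposition cannot do; to repair your proof you would need to replace the factorization step with such a sweep (or an equivalent quasi-1D tensor-network contraction).
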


\begin{proof}
  We first label all qubits contained in the region $V_{R_j(l)}$ by $(q_0,\cdots,q_{4M\sqrt{n}-1})$ using the row major order. Then we consider a $M\times 4M$ window $W$ swiping along the row index. According to the result given in Lemma~\ref{lemma:clustersupp}, it is shown that 
   $\abs{{\rm supp}\left(V_{i}(t)\right)}\leq 4M^2$. At the initial stage, suppose the window $W$ only covers qubit set $\mathcal{W}=\{q_0,\cdots,q_{4M^2-1}\}$, then the support size of $V_{i}(t)$ implies if $q_i\notin\mathcal{W}$, then $V_{q_i}(t)$ may not affect the measurement result of $l_0=\{q_0,\cdots q_{4M-1}\}$ which represents the first row within the window $W$. This further implies that computing $\langle x_{l_0}|V_{l_0}(t)|0_{l_0}\rangle$ can be fixed into a small subspace, and the state vector simulator has runtime approximately $\mathcal{O}(2^{4M^2})$ per each gate or one-qubit measurement~\cite{bravyi2021classical}. Swiping the window $W$ from $l_0$ to $l_{\sqrt{n}-1}$, $\langle x|V_{R_j(l)}(t)|0^n\rangle$ can be deterministically computed by a classical algorithm with $\tilde{\mathcal{O}}\left(\sqrt{n}2^{4M^2}\right)$ running time. 
\end{proof}

\section{Limitations of NISQ Algorithms on current quantum devices}
Here, we study the Hamiltonian simulation algorithm on near-term quantum devices in the context of a noisy environment. We suppose each quantum gate is affected by a local Pauli channel. For the sake of clarity, we begin by presenting the definitions of the local Pauli channel.

\begin{definition}[Local Pauli channel]
\label{Def:localchannel}
Let $\mathcal{N}_i$ denote a local Pauli channel and the action of $\mathcal{N}_i$ is random local Pauli operators $P$ acting on the $i$-th qubit according to a specific channel parameter $\{q(P)\}$, where $P\in\{\mathbb{I}, \sigma^x,\sigma^y,\sigma^z\}$. Specifically, the action of $\mathcal{N}_i$ is given by 
\begin{equation}\label{eq:Local Pauli Noise Channel}
    \mathcal{N}_i(P)=q(P)P
\end{equation}
for $P\in\{\mathbb{I},\sigma^x,\sigma^y, \sigma^z\}$, where $q(P)\in(-1,1)$. The noise strength in this model is represented by a single parameter $q=\max_{P\in\{\sigma^x,\sigma^y, \sigma^z\}}\abs{q(P)}$.
\end{definition}

\begin{definition}[Quantum Circuit affected by Pauli Channel]
\label{Eq:densitymatrix}
We assume that the noise in the quantum device is modeled by a Pauli channel $\mathcal{N}_i$ with strength $q$. Let $\mathcal{U}$ be a causal slice, and let $\mathcal{N} \circ \Ucal=\left(\otimes_{i=1}^n\mathcal{N}_i\right)\circ \Ucal$ be the representation of a noisy circuit layer. We define the $d$-depth noisy quantum state with noise strength $q$ as
\begin{align}
 \rho_{q,d}=\mathcal{N}\circ \mathcal{U}_{d}\circ \mathcal{N}\circ \mathcal{U}_{d-1}\circ\cdots\circ\mathcal{N}\circ \mathcal{U}_{1}(|0^n\rangle\langle0^n|).
 \label{Eq:density_matrix}
\end{align} 
\end{definition}

Quantum error mitigation is necessary due to imperfections in quantum devices to correct the bias caused by noise. The fundamental concept is to correct the impact of quantum noise through classical post-processing of measurement results, without mid-circuit measurements and adaptive gates as in standard error correction. Here, we argue that the existing error mitigation strategies might require a number of samples $\rho_{q,d}$ that scales exponentially with the number of gates in the light-cone of the observable of interest. This thus losses the original quantum advantages compared to the proposed classical simulation algorithm which only requires quasi-polynomial time. We extend previous results given by Ref~\cite{quek2022exponentially} to a more general Pauli channel. We first review some related lemmas, then give the generalized result and main result (given in the main file) on the quantum error-mitigation overheads.



\subsection{Involved Lemmas}
We require following lemmas to support our proof.
\begin{lemma}[Ref.~\cite{quek2022exponentially}]
    Let $\gamma,\cdots,P_N$ be probability measures on some state space $X$ such that 
    \begin{align}
        \frac{1}{N+1}\sum\limits_{k=0}^ND(P_k\|P_0)\leq\alpha\log(N)
    \end{align}
    for $0<\alpha<1$. Then the minimum average probability of error over tests $\psi:X\mapsto \{0,1,\cdots, N\}$ that distinguish the probability distributions $P_0,\cdots, P_N$ which we define as
    \begin{align}
        \bar{p}_{e,N}={\rm inf}_{\psi}\frac{1}{N+1}\sum\limits_{j=0}^NP_j(\psi\neq j)
    \end{align}
    satisfies
    \begin{align}
         \bar{p}_{e,N}\geq \frac{\log(N+1)-\log(2)}{\log(N)}-\alpha.
    \end{align}
    \label{lemma:Fano}
\end{lemma}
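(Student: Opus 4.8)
The plan is to derive this as the standard Bayesian (multiple-hypothesis) packaging of Fano's inequality. First I would equip the index set with a uniform prior: let $J$ be uniform on $\{0,1,\dots,N\}$ and, conditioned on $J=j$, draw $X\sim P_j$. A test $\psi$ produces an estimate $\hat J=\psi(X)$, and directly from the definitions
\begin{align}
  \Pr[\hat J\neq J]=\frac{1}{N+1}\sum_{j=0}^N P_j(\psi\neq j),
\end{align}
so that $\inf_\psi\Pr[\hat J\neq J]=\bar p_{e,N}$. It therefore suffices to lower bound $P_e:=\Pr[\hat J\neq J]$ uniformly over all tests $\psi$, with the understanding that all information quantities below are the measure-theoretic ones, so the argument applies to a general state space $X$.

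Next I would apply Fano's inequality to the Markov chain $J\to X\to\hat J$. Writing $h(p)=-p\log p-(1-p)\log(1-p)\le\log 2$ for the binary entropy, Fano gives $H(J\mid\hat J)\le h(P_e)+P_e\log N$ (the $\log N$ because $J$ ranges over $N+1$ values). On the other hand $H(J)=\log(N+1)$ since $J$ is uniform, and the data-processing inequality $I(J;\hat J)\le I(J;X)$ yields
\begin{align}
  H(J\mid\hat J)=H(J)-I(J;\hat J)\ \ge\ \log(N+1)-I(J;X).
\end{align}
Combining the two estimates gives $\log(N+1)-I(J;X)\le\log 2+P_e\log N$.

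The remaining ingredient is to bound $I(J;X)\le\alpha\log N$ from the hypothesis. For this I would use the decomposition identity: writing $\bar P:=\frac{1}{N+1}\sum_k P_k$ for the marginal of $X$, one has for every reference measure $Q$
\begin{align}
  \frac{1}{N+1}\sum_{k=0}^N D(P_k\|Q)=\frac{1}{N+1}\sum_{k=0}^N D(P_k\|\bar P)+D(\bar P\|Q)=I(J;X)+D(\bar P\|Q)\ \ge\ I(J;X).
\end{align}
Taking $Q=P_0$ and invoking the assumed bound $\frac{1}{N+1}\sum_k D(P_k\|P_0)\le\alpha\log N$ gives $I(J;X)\le\alpha\log N$. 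Substituting into the previous inequality and solving for $P_e$ yields
\begin{align}
  P_e\ \ge\ \frac{\log(N+1)-\log 2}{\log N}-\alpha,
\end{align}
and taking the infimum over $\psi$ proves the lemma.

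I do not expect a genuinely hard step here: the proof is a repackaging of Fano's inequality together with the "golden-formula" variational bound on mutual information. The point that needs the most care is the measure-theoretic bookkeeping when $X$ is an arbitrary (possibly continuous) space — one must invoke the abstract forms of entropy, mutual information, the data-processing inequality, and the divergence decomposition above, and note that the hypothesis forces the relevant divergences to be finite so that every quantity is well defined; a trivial side remark is that the bound is only informative in the regime $\log(N+1)-\log 2>\alpha\log N$.
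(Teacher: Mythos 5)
Your proof is correct. Note that the paper does not prove this lemma at all: it is imported verbatim from Ref.~\cite{quek2022exponentially} and stated without proof in the ``Involved Lemmas'' subsection, so there is no in-paper argument to compare against. Your derivation is the standard one for this generalized Fano bound: uniform prior $J$ on $\{0,\dots,N\}$, Fano's inequality $H(J\mid\hat J)\le h(P_e)+P_e\log N$ for the Markov chain $J\to X\to\hat J$, the data-processing inequality, and the compensation identity $\frac{1}{N+1}\sum_k D(P_k\|Q)=I(J;X)+D(\bar P\|Q)$ with $Q=P_0$ to convert the hypothesis into $I(J;X)\le\alpha\log N$. All steps check out, including the measure-theoretic caveat you flag (finiteness of the average divergence forces $P_k\ll P_0$ for each $k$, so the Radon--Nikodym chain rule underlying the decomposition is legitimate), and the algebra recovers exactly the stated bound.
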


\begin{lemma}[Lemma~6 in~\cite{wang2021noise}]
    Consider a single instanoise channel $\mathcal{N}=\mathcal{N}_1\otimes\cdots\otimes\mathcal{N}_n$ where each local noise channel $\{\mathcal{N}_j\}_{j=1}^n$ is a Pauli noise channel that satisfies $\mathcal{N}_j(\sigma)=q_{\sigma}\sigma$ for $\sigma\in\{X,Y,Z\}$ and $q_{\sigma}$ be the Pauli strength. Then we have
    \begin{align}
        D_2\left(\mathcal{N}(\rho)\|\frac{I^{\otimes n}}{2^n}\right)\leq q^{2c}D_2\left(\rho\|\frac{I^{\otimes n}}{2^n}\right),
    \end{align}
    where $D_2(\cdot\|\cdot)$ represents the $2$-Renyi relative entropy, $q=\max_{\sigma}q_{\sigma}$ and $c=1/(2\ln 2)$.
    \label{lemma:renyiineq}
\end{lemma}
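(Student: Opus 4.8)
The plan is to pass to the Pauli--Liouville (Bloch) representation, reduce the divergence inequality to a purity inequality, and then establish the latter by a single-qubit base case together with a tensorization/induction over qubits. First I would write $\rho = 2^{-n}\sum_P r_P P$ with $P$ ranging over $n$-qubit Pauli strings and $r_P=\mathrm{Tr}[\rho P]$, $r_{\mathbb{I}}=1$. With the maximally mixed reference $\tau=\mathbb{I}^{\otimes n}/2^n$ one has $D_2(\rho\|\tau)=\ln\mathrm{Tr}[\rho^2\tau^{-1}]=\ln\big(2^n\mathrm{Tr}[\rho^2]\big)=\ln\big(\sum_P r_P^2\big)$, so the natural logarithm is essential for the stated constant. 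Because each $\mathcal{N}_j$ is diagonal in the Pauli basis, $\mathcal{N}(P)=\big(\prod_j q_j(P_j)\big)P$, so $\mathcal{N}(\rho)$ has Pauli coefficients $r_P\prod_j q_j(P_j)$ whose squares are bounded by $r_P^2 q^{2w(P)}$, with $w(P)$ the Pauli weight. Writing $B=\sum_{P\ne\mathbb{I}} r_P^2$ and $A=\sum_{P\ne\mathbb{I}} r_P^2 \prod_j q_j(P_j)^2$, the target becomes the multiplicative statement $1+A\le(1+B)^{q^{2c}}$. Since raising any $q_\sigma$ up to $q$ only enlarges $A$, it suffices to treat the worst case $\mathcal{N}=\mathcal{D}_q^{\otimes n}$ (depolarizing with parameter $q$ on each qubit), i.e.\ $\prod_j q_j(P_j)^2=q^{2w(P)}$.

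Second, I would record two structural facts. (i) The $\chi^2$ ingredient: $\chi^2(\rho\|\tau)=2^n\mathrm{Tr}[\rho^2]-1=B$, and the strong data-processing inequality for the $\chi^2$-divergence under a Pauli channel (single-qubit contraction coefficient $q^2$, which tensorizes for the fixed product reference $\tau$) gives the linear bound $A\le q^2 B$. (ii) A swap-trick identity that will drive the induction: using $\mathrm{Tr}[\mathcal{N}(\rho)^2]=\mathrm{Tr}[(\rho\otimes\rho)(\mathcal{N}\otimes\mathcal{N})(\mathbb{S})]$ (Pauli channels are self-adjoint) with $\mathbb{S}=\bigotimes_j\mathbb{S}_j$ and $(\mathcal{N}_j\otimes\mathcal{N}_j)(\mathbb{S}_j)=(1-q^2)\tfrac{\mathbb{I}\otimes\mathbb{I}}{2}+q^2\mathbb{S}_j$ in the depolarizing case, one obtains the clean formula
\begin{equation}
2^n\,\mathrm{Tr}[\mathcal{N}(\rho)^2]=\sum_{A\subseteq[n]}(2q^2)^{|A|}(1-q^2)^{n-|A|}\,\mathrm{Tr}[\rho_A^2],
\end{equation}
where $\rho_A=\mathrm{Tr}_{A^c}\rho$ and $\mathrm{Tr}[\rho_\emptyset^2]:=1$. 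This expresses the noisy purity through the reduced purities $\pi_A=\mathrm{Tr}[\rho_A^2]$ and makes explicit how correlations enter.

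Third, I would settle the single-qubit base case, which both fixes the constant and anchors the induction. Here $B=r_X^2+r_Y^2+r_Z^2\le1$ and $A=\sum_\sigma q_\sigma^2 r_\sigma^2\le q^2 B$, and the ratio $\ln(1+q^2B)/\ln(1+B)$ is maximized at the pure state $B=1$, so it reduces to the scalar inequality $\ln(1+q^2)\le q^{2c}\ln 2$ on $q\in[0,1]$. With $c=1/(2\ln 2)$ both sides equal $\ln 2$ at $q=1$, and matching first derivatives there gives $\tfrac{2q}{1+q^2}\big|_{q=1}=1=2c\ln 2$; this derivative-matching is exactly what singles out $c=1/(2\ln 2)$ as the critical constant. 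I would then check $g(q)=q^{2c}\ln 2-\ln(1+q^2)\ge 0$ on $[0,1]$ by elementary calculus (it vanishes at $q=1$, is positive for small $q$ since $q^{2c}$ dominates $q^2$, and has the right monotonicity in between).

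The hard part is lifting the base case to $n$ qubits for arbitrary, possibly correlated, $\rho$. I would stress that the naive route---combining $A\le q^2B$ with concavity of $\ln(1+\cdot)$---does not suffice, since $\ln(1+q^2B)\le q^{2c}\ln(1+B)$ fails once $B$ is large, so positivity of $\rho$ must be used essentially. The plan is an induction that peels off one qubit via the swap-trick formula: splitting the sum over $A$ by whether $n\in A$ rewrites $2^n\mathrm{Tr}[\mathcal{N}(\rho)^2]$ as a convex-type combination of $2^{n-1}\mathrm{Tr}[\mathcal{N}(\rho_{[n-1]})^2]$ and a term built from the reduced purities $\pi_{A\cup\{n\}}$, to which the single-qubit inequality is applied conditionally. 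Controlling the cross term requires relations among reduced purities (for product states $\pi_A\ge\pi_{[n]}$, but no such monotonicity holds in general), and making this conditional step uniform in the correlation structure is the crux; alternatively one may invoke a tensorization theorem for the Rényi-$2$/$\chi^2$ contraction coefficient with respect to the fixed product reference $\tau$, which reduces the $n$-qubit contraction to the single-qubit coefficient established in the base case.
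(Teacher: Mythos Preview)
The paper does not prove this lemma; it is quoted from Ref.~\cite{wang2021noise}, so there is no in-paper argument to compare your proposal against.

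On its own merits: your reduction to the purity inequality $1+A\le(1+B)^{q^{2c}}$, the worst-case reduction to the depolarizing channel $\mathcal{D}_q^{\otimes n}$, the single-qubit base case, and the derivative-matching that singles out $c=1/(2\ln 2)$ are all correct; the swap-trick identity expressing $2^n\mathrm{Tr}[\mathcal{N}(\rho)^2]$ through reduced purities is also valid. The genuine gap is the lift to $n$ qubits, and neither of the two routes you sketch closes it. For the induction peeling off one qubit, ``applying the single-qubit inequality conditionally'' is not a well-posed step: after splitting on qubit $n$ the remaining object is not a single-qubit density matrix with Bloch norm $\le 1$, so the base case does not apply, and you yourself note the missing control on the cross terms $\pi_{A\cup\{n\}}$. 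The alternative---invoking a tensorization theorem for the $\chi^2$ contraction coefficient---is a red herring: what tensorizes for product channels with a product reference is precisely the \emph{linear} $\chi^2$ contraction, which yields $A\le q^2B$, the bound you already (correctly) rejected as insufficient for the nonlinear target $\ln(1+A)\le q^{2c}\ln(1+B)$. There is no off-the-shelf tensorization for the multiplicative contraction of $D_2$ itself, so this appeal does not finish the argument either.
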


\begin{lemma}
    Given an arbitrary $n$-qubit density matrix and maximally mixed state $I^{\otimes n}/2^n$, we have
    \begin{align}
        D\left(\rho\|I^{\otimes n}/2^n\right)\leq D_2\left(\rho\|I^{\otimes n}/2^n\right),
    \end{align}
    where $D(\cdot\|\cdot)$ denotes the relative entropy and $D_2(\cdot\|\cdot)$ denotes the $2$-Renyi relative entropy.
    \label{lemma:renyi}
\end{lemma}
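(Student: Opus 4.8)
The plan is to reduce both divergences to entropic quantities, exploiting that $\sigma:=I^{\otimes n}/2^n$ is maximally mixed. First I would recall the definitions: the relative entropy is $D(\rho\|\sigma)=\mathrm{Tr}[\rho(\log\rho-\log\sigma)]$, and the $2$-Renyi relative entropy is $D_2(\rho\|\sigma)=\log\mathrm{Tr}[\rho^2\sigma^{-1}]$ (the convention consistent with Lemma~\ref{lemma:renyiineq}; for $\sigma\propto I$ the Petz and sandwiched order-$2$ variants coincide with this expression). Since $\log\sigma=-(n\log 2)\,I$ and $\sigma^{-1}=2^n I$, a one-line computation gives $D(\rho\|\sigma)=n\log 2-S(\rho)$ and $D_2(\rho\|\sigma)=n\log 2-S_2(\rho)$, where $S(\rho)=-\mathrm{Tr}[\rho\log\rho]$ is the von Neumann entropy and $S_2(\rho)=-\log\mathrm{Tr}[\rho^2]$ is the collision (Renyi-$2$) entropy. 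If one prefers base-$2$ logarithms throughout, replace $n\log 2$ by $n$; the point is only that the two additive constants are identical and cancel.

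Hence the claim is equivalent to the entropy inequality $S_2(\rho)\le S(\rho)$. I would establish this by diagonalizing $\rho=\sum_i p_i\ket{i}\bra{i}$, after which it suffices to prove $\sum_i p_i\log p_i\le\log\big(\sum_i p_i^2\big)$. This follows directly from Jensen's inequality: viewing $\{p_i\}$ as a probability distribution and letting $X$ be the random variable equal to $p_i$ with probability $p_i$, concavity of $\log$ gives $\mathbb{E}[\log X]\le\log\mathbb{E}[X]$, i.e.\ $\sum_i p_i\log p_i\le\log\sum_i p_i^2$. Substituting back into the two expressions above yields $D(\rho\|\sigma)\le D_2(\rho\|\sigma)$, which is the assertion.

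No genuine obstacle is anticipated; the only points requiring care are (i) fixing a single logarithm base for both divergences so that the dimension-dependent constants cancel exactly, and (ii) confirming that the $D_2$ convention used elsewhere in the paper reduces to $\log\mathrm{Tr}[\rho^2\sigma^{-1}]$ when $\sigma$ is maximally mixed — which it does, since all standard order-$2$ Renyi relative entropies collapse to this expression for $\sigma\propto I$. An even shorter route would be to invoke the monotonicity of $D_\alpha(\rho\|\sigma)$ in the order $\alpha$ together with $D=D_1\le D_2$, but the Jensen argument above is self-contained and avoids importing that fact.
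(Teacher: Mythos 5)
Your proposal is correct, and it reaches the result by the same initial reduction as the paper — for $\sigma=I^{\otimes n}/2^n$ both divergences collapse to $n$ minus an entropy, $D(\rho\|\sigma)=n+\tr{\rho\log\rho}$ and $D_2(\rho\|\sigma)=n+\log\tr{\rho^2}$ — but the decisive step is handled differently, and your version is the one that actually closes the argument. The paper invokes the pointwise inequality $x^2\geq x\log x$ on $[0,1]$, which after summing over eigenvalues gives only $\tr{\rho^2}\geq\tr{\rho\log\rho}$; since $\log y\leq y$, this does \emph{not} imply the inequality that is actually needed, namely $\log\tr{\rho^2}\geq\tr{\rho\log\rho}$ (the paper's final display silently drops the logarithm and writes $D_2=n+\tr{\rho^2}$, which is not the definition it just derived). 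Your Jensen step — $\sum_i p_i\log p_i\leq\log\sum_i p_i^2$ for the random variable taking value $p_i$ with probability $p_i$ — delivers exactly $S_2(\rho)\leq S(\rho)$, which is the statement in the form $-S(\rho)\leq -S_2(\rho)$, so your proof is rigorous where the paper's is not. Your side remarks are also sound: the sandwiched and Petz order-$2$ divergences coincide with $\log\tr{\rho^2\sigma^{-1}}$ when $\sigma\propto I$ (so the convention matches Lemma~\ref{lemma:renyiineq}), and the monotonicity $D_1\leq D_2$ of the R\'enyi family would give a one-line alternative. In short: same skeleton, but your key inequality is the correct one and repairs a genuine gap in the paper's own proof.
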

\begin{proof}
    Given quantum states $\rho$ and $\sigma$, the quantum $2$-Renyi entropy 
    \begin{align}
        D_2(\rho\|\sigma)=\log{\rm Tr}\left[\left(\sigma^{-1/4}\rho\sigma^{-1/4}\right)^2\right].
    \end{align}
    When $\sigma=I^{\otimes n}/2^n$, we have $D_2(\rho\|I^{\otimes n}/2^n)=\log{\rm Tr}\left[\left((I^{\otimes n}/2^n)^{-1}\rho^2\right)\right]=n+\log{\rm Tr}[\rho^2]$. Noting that the function $y=x^2-x\log x\geq 0$ when $x\in[0,1]$, and this implies ${\rm Tr}(\rho^2)\geq {\rm Tr}(\rho\log \rho)$. Finally, we have
    \begin{align}
        D\left(\rho\|I^{\otimes n}/2^n\right)=n+{\rm Tr}\left[\rho\log\rho\right]+n\leq {\rm Tr}\left[\rho^2\right]+n=D_2\left(\rho\|I^{\otimes n}/2^n\right).
    \end{align}
\end{proof}

\begin{lemma}[Ref.~\cite{haah2016sample}]
    Let $\epsilon\in(0,1)$ and $\delta\in(0,1)$. Suppose there exists a POVM $\{M_{\sigma}d\sigma\}$ on $(\mathbb{C}^{2^n})^{\otimes m}$ such that for any quantum state $\rho$,
    \begin{align}
        \int_{d_{\rm tr}(\sigma,\rho)\leq \epsilon}d\sigma{\rm Tr}\left[M_{\sigma}\rho^{\otimes m}\right]\geq 1-\delta,
        \label{Eq:measurement}
    \end{align}
    This implies the sample complexity lower bound
    \begin{align}
        m\geq \Omega\left(\frac{2^{3n}(1-\epsilon)^2}{\epsilon^2}\right).
    \end{align}
    \label{lemma:samplelowerbound}
\end{lemma}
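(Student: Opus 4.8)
The plan is to establish the lower bound through a packing (net) argument fed into the Fano-type inequality already available as Lemma~\ref{lemma:Fano}, following the lower-bound strategy of Ref.~\cite{haah2016sample}. First I would argue by reduction. Suppose a POVM $\{M_{\sigma}\,d\sigma\}$ on $(\mathbb{C}^{2^n})^{\otimes m}$ achieves the tomographic guarantee of Eq.~\ref{Eq:measurement}, i.e.\ with probability at least $1-\delta$ the reported estimate $\sigma$ satisfies $d_{\rm tr}(\sigma,\rho)\leq\epsilon$ for the unknown input $\rho$. I would convert this estimator into a multiple-hypothesis identifier: fix a finite family $\{\rho_1,\dots,\rho_N\}$ of density matrices that are pairwise separated, $d_{\rm tr}(\rho_j,\rho_k)>2\epsilon$ for $j\neq k$, so that rounding $\sigma$ to the nearest family member recovers the true index $j$ whenever the input is $\rho_j$ and the estimate is $\epsilon$-accurate. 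By the triangle inequality the identification then errs with probability at most $\delta$.

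Second, I would construct the packing explicitly as perturbations of the maximally mixed state $\rho_0=I/2^n$. Writing $\rho_j=\rho_0+\Delta_j$ with $\Delta_j$ a traceless Hermitian combination of nonidentity Pauli strings whose $\pm$ coefficients are drawn from a Gilbert--Varshamov code, I would tune the amplitude so that (i) each $\rho_j\succeq 0$ is a legitimate state sitting near the positivity margin, (ii) $d_{\rm tr}(\rho_j,\rho_0)=\Theta(\epsilon)$ while all pairwise distances exceed $2\epsilon$, and (iii) the cardinality $N$ is exponential in the number $2^{2n}-1$ of real parameters of an $n$-qubit density matrix, so that $\log N$ is exponentially large in $n$. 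Certifying positivity requires controlling the operator norm of a random signed sum of Pauli strings, which is the most technical piece of the construction.

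Third, I would invoke Lemma~\ref{lemma:Fano}. Let $J$ be uniform on $\{1,\dots,N\}$, let $P_j$ be the classical distribution of the POVM outcome when the state is $\rho_j^{\otimes m}$, and let $P_0$ be the reference distribution for $\rho_0^{\otimes m}$. By the data-processing inequality for relative entropy together with additivity over tensor copies, $D(P_j\|P_0)\leq D(\rho_j^{\otimes m}\|\rho_0^{\otimes m})=m\,D(\rho_j\|\rho_0)$, and I would bound the single-copy term through the second-order expansion $D(\rho_j\|\rho_0)=\log 2^n-S(\rho_j)=O(\epsilon^2)$, optionally routing through the $2$-Rényi relative entropy via Lemma~\ref{lemma:renyi}, which upper bounds $D$ and is trivially evaluated against $I/2^n$. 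Setting $\alpha=\frac{1}{(N+1)\log N}\sum_{j}D(P_j\|P_0)$, Lemma~\ref{lemma:Fano} shows the average identification error stays bounded away from zero unless $\alpha=\Omega(1)$, i.e.\ unless $m\,D(\rho_j\|\rho_0)=\Omega(\log N)$. Since successful tomography forces identification error $\leq\delta<1$, I would conclude $m=\Omega\!\big(\log N / D(\rho_j\|\rho_0)\big)$; substituting the exponential packing size and the relative-entropy estimate, and optimizing the perturbation amplitude against the positivity margin, delivers a lower bound of the stated form $m\geq\Omega\!\big(2^{3n}(1-\epsilon)^2/\epsilon^2\big)$, with the $(1-\epsilon)^2$ factor tracking the positivity margin and the Fano constants.

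The main obstacle I anticipate is the packing construction itself, since it must simultaneously guarantee a large cardinality $N$ (favoring many widely varying perturbation directions), pairwise separation above $2\epsilon$ (so that $\epsilon$-accurate tomography is identifying), closeness to $\rho_0$ so that $D(\rho_j\|\rho_0)$ stays small (so that the information bound bites), and positivity of every $\rho_j$. These requirements pull against one another, and the precise exponent of $2^{n}$ together with the $(1-\epsilon)^2/\epsilon^2$ dependence emerge only from carefully balancing the perturbation amplitude against the positivity constraint; getting all four conditions to coexist with the right powers is where the real work lies, while the reduction and the Fano step are essentially bookkeeping given Lemma~\ref{lemma:Fano} and Lemma~\ref{lemma:renyi}.
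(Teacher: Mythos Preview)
The paper does not prove this lemma; it is quoted from Ref.~\cite{haah2016sample} and used as a black box in Appendix~\ref{App:proofofQEM}, so there is no in-paper argument to compare against. Your packing-plus-Fano outline is indeed the strategy of that reference.

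One substantive point on your sketch: the argument you describe, with $\log N=\Theta(2^{2n})$ packing points and single-copy relative entropy $D(\rho_j\|\rho_0)=O(\epsilon^2)$ near the maximally mixed state, delivers $m=\Omega(2^{2n}/\epsilon^2)$, a factor $2^n$ short of the stated $2^{3n}$. In Ref.~\cite{haah2016sample} the $d^3=2^{3n}$ scaling is specific to \emph{independent} (copy-by-copy) measurements and requires a more refined argument than the plain Fano bound; the lemma as phrased allows a fully joint POVM on all $m$ copies, for which the tight lower bound is $\Omega(2^{2n}/\epsilon^2)$. So either the lemma is implicitly restricting to product measurements, or your route as written will not reach the claimed exponent---this is worth clarifying before investing effort in the packing construction.
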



\subsection{Generalize the Theorem~1 in Ref~\cite{quek2022exponentially} to Pauli channel}
\begin{fact}[Generalized result to Ref.~\cite{quek2022exponentially}]
    Let $\mathcal{A}$ be an error mitigation algorithm that takes as input $m$ noisy quantum state copies prepared by a $d$-depth noisy quantum circuit that affected by local Pauli noise channels with strength $q$, and a set of Hermitian observables. The error mitigation algorithm $\mathcal{A}$ requires $m\geq\Omega(q^{-2d})$ copies of noisy states in the worst-case scenario over the choice of observable sets.
     \label{Theorem:limitation}
\end{fact}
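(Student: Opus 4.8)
\textbf{Proof proposal for Fact~\ref{Theorem:limitation}.}

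The plan is to reduce the error-mitigation task to a multi-hypothesis distinguishing problem and then invoke the Fano-type bound of Lemma~\ref{lemma:Fano} together with the noise-contraction bound of Lemma~\ref{lemma:renyiineq}. First I would construct an ensemble of $N+1$ ``hard'' instances that an error mitigation algorithm must be able to separate: fix a $d$-depth circuit $\mathcal{U}_d\circ\cdots\circ\mathcal{U}_1$ and pick $N$ distinct observables (for instance, shifted versions of a fixed Pauli string, or a packing of observables in the space of $\pm 1$-valued measurements) whose ideal expectation values on $|0^n\rangle$ are pairwise separated by more than $2\epsilon$. Since $\mathcal{A}$ is promised to output $\hat o$ with $|\hat o-\langle o\rangle|\le\epsilon$ for every observable in the set, running $\mathcal{A}$ on $m$ copies of the single noisy state $\rho_{q,d}$ and reading off which target expectation value the estimate is closest to yields a test $\psi$ that correctly identifies the index $j$ of the instance; hence the average error probability $\bar p_{e,N}$ of this test is $0$ (or at most a small constant).

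Next I would upper bound the information available to any such test. The test $\psi$ acts on $m$ copies of a noisy state of the form $\mathcal N\circ\mathcal U_d\circ\cdots\circ\mathcal N\circ\mathcal U_1(|0^n\rangle\langle 0^n|)$. Applying Lemma~\ref{lemma:renyiineq} once per noisy layer (each $\mathcal N=\mathcal N_1\otimes\cdots\otimes\mathcal N_n$ contracts the $2$-Rényi divergence to the maximally mixed state by a factor $q^{2c}$, and unitary layers leave it invariant), we get
\begin{equation}
D_2\!\left(\rho_{q,d}\,\Big\|\,\frac{I^{\otimes n}}{2^n}\right)\le q^{2cd}\,D_2\!\left(|0^n\rangle\langle 0^n|\,\Big\|\,\frac{I^{\otimes n}}{2^n}\right)=q^{2cd}\,n.
\end{equation}
Then Lemma~\ref{lemma:renyi} gives $D(\rho_{q,d}\|I^{\otimes n}/2^n)\le n\,q^{2cd}$, and by tensorization and the data-processing inequality the relative entropy between the induced distributions of the measurement outcomes on $m$ copies, relative to the outcome distribution one would get from $(I/2^n)^{\otimes m}$, is at most $m\,n\,q^{2cd}$. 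Feeding the uniform average of these $K$ divergences into the hypothesis $\frac{1}{N+1}\sum_k D(P_k\|P_0)\le\alpha\log N$ of Lemma~\ref{lemma:Fano} with $P_0$ the reference (mixed-state) distribution, we obtain $\alpha\le m\,n\,q^{2cd}/\log N$; the Fano bound then forces $\bar p_{e,N}\ge \frac{\log(N+1)-\log 2}{\log N}-\frac{m\,n\,q^{2cd}}{\log N}$. Since we already argued the test succeeds, $\bar p_{e,N}$ is essentially $0$, which rearranges to $m\ge \Omega\!\big(q^{-2cd}/n\big)$ up to the $\log N$ factors absorbed by choosing $N$ polynomially large; writing $c=1/(2\ln 2)$ and folding constants yields $m\ge\Omega(q^{-2d})$ in the worst case over observable sets.

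The main obstacle I anticipate is the construction of the hard observable ensemble and the bookkeeping that makes the reduction genuinely ``worst-case over observables'' rather than over states: one needs the $N$ observables to be simultaneously (i) pairwise $>2\epsilon$-separated in their ideal expectation values on the \emph{same} fixed circuit output, so that a successful $\mathcal A$ really does solve the $N$-way hypothesis test, and (ii) such that the corresponding outcome distributions are close to a common reference distribution (the maximally-mixed one) so that the Rényi-contraction bound controls all the pairwise divergences uniformly. Reconciling these two requirements — separation in mean but proximity in distribution — is the delicate point; it is exactly where Ref.~\cite{quek2022exponentially} used the unitary $2$-design property, and the claim here is that a careful choice of Pauli observables (or a packing argument) suffices without invoking designs. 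I would also need to be slightly careful that the factor of $n$ in the denominator and the $\log N$ terms are handled consistently, and that the ``input state-agnostic'' hypothesis on $\mathcal A$ is what licenses feeding it copies of a single fixed $\rho_{q,d}$ while varying only the observable.
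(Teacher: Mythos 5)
Your reduction inverts the roles of state and observable, and this breaks the hypothesis-testing argument at its core. In your construction the circuit and input state are fixed, and the ``hidden index'' $j$ labels which observable is being estimated. But the observable is an \emph{input} to the error-mitigation algorithm, not something encoded in the quantum data: the $m$ copies handed to $\mathcal{A}$ are the same state $\rho_{q,d}$ for every $j$, so the induced outcome distributions $P_0,\ldots,P_N$ all coincide, $D(P_k\|P_0)=0$, and Lemma~\ref{lemma:Fano} yields nothing. Worse, there is no information to extract in the first place --- the ideal expectation values of your $N$ observables on the fixed circuit output are $N$ predetermined numbers, so a trivial ``algorithm'' that outputs them from a lookup table succeeds with $m=0$ samples. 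You flag the tension between ``separation in mean but proximity in distribution'' as the delicate point, but the problem is more basic: varying only the (publicly known) observable creates no distinguishing task at all. The phrase ``worst-case over the choice of observable sets'' in the statement refers to an adversarially chosen but \emph{fixed} observable set, not to an unknown observable.

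The paper's proof hides the information in the input state instead: it takes the ensemble $\{\rho_x=|x\rangle\langle x|\}_{x\in\{0,1\}^n}\cup\{I/2^n\}$, a fixed circuit $C$, and the fixed observable set $\{CZ_jC^{\dagger}\}_{j=1}^n$, so that accurate mitigated estimates $\hat{o}_j\approx\mathrm{Tr}(C\rho_xC^{\dagger}\,CZ_jC^{\dagger})=1-2x_j$ reveal the hidden $n$-bit string $x$. The input-state-agnostic assumption is precisely what licenses feeding $\mathcal{A}$ copies of $\Phi_{C,q}(\rho_x)$ for an unknown $x$. With $N=2^n$ hypotheses one gets $\log N=n$, which exactly cancels the factor $D_2(\rho_x\,\|\,I/2^n)\le n$ coming from the layer-by-layer application of Lemma~\ref{lemma:renyiineq}, so the Fano bound gives $m\geq q^{-2cd}(1-\delta)$ cleanly. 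Your version, with $N$ only polynomially large, would leave a residual $n/\log N$ loss even if the reduction were repaired, which is why your final bound carries a spurious $1/n$ that you then absorb by hand. To fix the proof you should keep your divergence-contraction calculation (which is essentially the paper's) but rebuild the hard ensemble so that the unknown lives in the state fed through the noisy circuit, with exponentially many hypotheses read out by a fixed observable set.
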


The basic idea is to construct a polynomial reduction to the quantum state discrimination problem~\cite{quek2022exponentially}. Let us consider an error mitigation problem. Given the quantum state set $F_{\rho}=\{\rho_0,\rho_1,\cdots,\rho_N\}$, where $\rho_x=|x\rangle\langle x|$ when $x<N$ and $\rho_N=I_n/2^n$ with $N=2^n$, as the input of a noiseless quantum circuit $C$, and utilize a set of observables $\{CZ_iC^{\dagger}\}_{i=1}^n$ to measure the output states $C(\rho_x)$. The quantum error mitigation algorithm should output the estimation $o_j$ such that $\abs{o_j-{\rm Tr}(C(\rho_x)CZ_jC^{\dagger})}\leq\epsilon$. Now we show that a noisy state identification problem can be solved by quantum error mitigation algorithm. Consider an arbitrary $\rho_x\in F_{\rho}$, we may have two scenarios:
\begin{itemize}
    \item If the unknown quantum state $\rho_x$ whose index satisfies $x<N$, we have $y_j={\rm Tr}(C(\rho_x)CZ_jC^{\dagger})=1-2x_j$, where $x_j$ represents the $j$-th bit within $x$;
    \item Else $\rho_N=I_n/2^n$ resulting in $y_j={\rm Tr}(C(\rho_N)CZ_jC^{\dagger})=0$.
\end{itemize}

Randomly sample a quantum state $\rho_x\in F_{\rho}$, we denote $\hat{y}=(y_1,\cdots,y_n)$ and $P_x(\hat{y})$ represents the probability distribution on measuring the result $\hat{y}$. As a result, if a quantum error mitigation algorithm can successfully recover every $o_j$ for $j\in[n]$, this enables us to uniquely identify the unknown quantum state $\rho_x$ from the distribution $P_x(\hat{y})$. Then we can utilize Fano's lower bound for quantum state identification problem (Lemma~\ref{lemma:Fano}). Specifically, we have 
\begin{eqnarray}
\begin{split}
    \frac{1}{N+1}\sum\limits_{k=0}^ND(P_k\|P_0)\leq& \frac{1}{N+1}\sum\limits_{k=0}^ND\left(\Phi^{\otimes m}_{C,q}(\rho_k)\|(I_n/2^n)^{\otimes m}\right)\\
    \leq &\frac{1}{N+1}\sum\limits_{k=0}^ND_2\left(\Phi^{\otimes m}_{C,q}(\rho_k)\|(I_n/2^n)^{\otimes m}\right)\\
    \leq &\frac{1}{N+1}\sum\limits_{k=0}^Nmq^{2cd}D_2\left(\rho_k\|I_n/2^n\right)\\
    = & q^{2cd}mn\\
    = & q^{2cd}m\log N,
\end{split}
\end{eqnarray}
where $d$ represents the depth of quantum circuit $C$. Let $\alpha=q^{2cd}m$, then in order for the test to have a constant failure probability $\delta$, it takes at least $m\geq q^{-2cd}(1-\delta)$ copies.

\subsection{A sample complexity lower bound related to approximation error and circuit depth}
\label{App:proofofQEM}

\begin{problem}
    Consider a pure quantum state packing net $\{\rho_0,\cdots,\rho_{\abs{\Omega}}\}$ such that for $\frac{1}{2}\|\rho_i-\rho_j\|\geq 2\epsilon$ for any $i\neq j$, and a $d$-depth quantum circuit $\mathcal{C}$ affected by Pauli channel $\mathcal{N}$. Suppose that a distinguisher has knowledge of $\mathcal{C}$ and $\mathcal{N}$, and is given access to copies of the quantum state $\Phi_{\mathcal{C},q}(\rho_i)$, with some unknown index $i\in[\abs{\Omega}]$. What is the fewest number of copies of $\Phi_{\mathcal{C},q}(\rho_i)$ sufficing to identify $i\in[\abs{\Omega}]$ with high probability?
    \label{problem2}
\end{problem}

Now we discuss how to utilize the quantum error mitigation algorithm to solve the above problem. Suppose the noisy state $\Phi_{\mathcal{C},q}(\rho_i)$ is provided, we focus on its quantum mean value on observables 
$$\{C^{\dagger}\rho_0C,\cdots,C^{\dagger}\rho_{\abs{\Omega}}C\}$$
that is to estimate $\{{\rm Tr}\left(\Phi_{\mathcal{C},q}(\rho_i)C^{\dagger}(\rho_j)C\right)\}$ for $j\in[N]$. If a quantum error mitigation algorithm $\mathcal{A}$ can recover the quantum mean value, then we have the map
\begin{align}
   \left\{{\rm Tr}\left(\Phi_{\mathcal{C},q}(\rho_i)C^{\dagger}(\rho_j)C\right)\right\}\mapsto\{{\rm Tr}\left[\rho_i\rho_j\right]\}.
\end{align}
According to our assumption, all quantum states $\rho_i$ comes from a packing-net, then for any $i\neq j$, we have ${\rm Tr}(\rho_i\rho_j)=\sqrt{1-d^2_{tr}(\rho_i,\rho_j)}\leq\sqrt{1-4\epsilon^2}\leq 1-2\epsilon^2$. Otherwise we have ${\rm Tr}(\rho_i\rho_i)\geq 1-\epsilon^2$. As a result, a quantum error mitigation algorithm can be used to identify the index $i$ hidden in the noisy state $\Phi_{\mathcal{C},q}(\rho_i)$, which thus can solve Problem~\ref{problem2}. The sample complexity of Problem~\ref{problem2} can be used to benchmark the sample complexity lower bound of the quantum error mitigation problem.

\begin{theorem}
     Let $\mathcal{A}$ be an input state-agnostic error mitigation algorithm that takes as input $m$ copies noisy quantum states produced by a $d$-depth quantum circuit $\mathcal{C}$ affected by $q$-strength local Pauli noise channels, and a set of observables $\{O\}$. Suppose the algorithm $\mathcal{A}$ is able to produce estimates $\{\hat{o}\}$ such that $\abs{\hat{o}-\langle o\rangle}\leq\epsilon$. Then the sample complexity 
     \begin{align}
      m\geq\min\left\{\frac{q^{-2cd}(1-\eta)^2}{2n},\frac{2^{3n}(1-\epsilon)^2}{\epsilon^2}\right\}
  \end{align}
     in the worst-case scenario over the choice of the observable set, where $c=1/(2\ln 2)$ and $\eta\in\mathcal{O}(1)$.
\end{theorem}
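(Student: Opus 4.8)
The plan is to reduce the error‑mitigation task to a state‑identification problem and then lower bound the sample cost of the latter in two complementary ways — one governed by the noise strength, one by the Hilbert‑space dimension — obtaining the stated inequality as (at worst) the minimum of the two. For the reduction, fix a packing net $\{\rho_0,\dots,\rho_{|\Omega|}\}$ of states with pairwise trace distance $\ge 2\epsilon$ and feed to $\mathcal A$ the noisy states $\Phi_{\mathcal C,q}(\rho_i)$ for an unknown index $i$, together with the observable set $\{C^\dagger\rho_j C\}_j$ (equivalently, in the computational‑basis instance, $\{CZ_jC^\dagger\}_j$). Since $\mathcal A$ returns each $\hat o$ to within $\epsilon$ and, by the packing property, $\mathrm{Tr}(\rho_i\rho_j)\le 1-2\epsilon^2$ for $i\ne j$ while $\mathrm{Tr}(\rho_i\rho_i)\ge 1-\epsilon^2$, the rounded estimates determine $i$ with high probability; hence a successful $\mathcal A$ solves Problem~\ref{problem2}, and any sample lower bound for Problem~\ref{problem2} transfers to $m$.

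\emph{Noise‑strength branch.} Take the net to be the computational‑basis states together with the maximally mixed state, $\{|x\rangle\!\langle x|\}_{x\in\{0,1\}^n}\cup\{I/2^n\}$, a valid packing for $\epsilon<1/2$. Viewing the whole pipeline (prepare $\rho_x$, run the noisy circuit, measure, post‑process) as a classical channel $\rho_x^{\otimes m}\mapsto P_x$, bound the averaged divergence by comparing against the maximally mixed output: $D(P_k\|P_0)\le D\big(\Phi_{\mathcal C,q}^{\otimes m}(\rho_k)\,\|\,(I/2^n)^{\otimes m}\big)$ by data processing, then $D\le D_2$ (Lemma~\ref{lemma:renyi}), then $m$‑fold additivity of $D_2$, then the layerwise contraction $D_2(\mathcal N(\sigma)\|I/2^n)\le q^{2c}D_2(\sigma\|I/2^n)$ of Lemma~\ref{lemma:renyiineq} applied across the $d$ noise layers (each unitary causal slice leaves $D_2(\cdot\|I/2^n)$ invariant), and finally $D_2(\rho_k\|I/2^n)=n$ since $\rho_k$ is pure. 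This yields $\frac{1}{|\Omega|+1}\sum_k D(P_k\|P_0)\le q^{2cd}mn=q^{2cd}m\log(2^n)$, i.e.\ Fano parameter $\alpha=q^{2cd}m$. Inserting this into Lemma~\ref{lemma:Fano} and demanding the identification error stay below a constant $\eta$ forces $\alpha\gtrsim 1-\eta$; tracking the $\mathcal O(n)$ observable estimates that feed the decision and the bookkeeping in the Fano/union bound produces $m\ge q^{-2cd}(1-\eta)^2/(2n)$.

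\emph{Dimension branch and combination.} Now take the net to be a genuine $\epsilon$‑net of pure $n$‑qubit states. From the estimates $\{\widehat{\mathrm{Tr}(\rho_i\rho_j)}\}_j$ one reconstructs $\rho_i$ to trace distance $O(\epsilon)$, so $m$ copies plus $\mathcal A$ realize a POVM obeying the tomographic guarantee of Eq.~\ref{Eq:measurement}; Lemma~\ref{lemma:samplelowerbound} then gives $m\ge\Omega\big(2^{3n}(1-\epsilon)^2/\epsilon^2\big)$ directly, with no dependence on $q$ or $d$. Since an algorithm meeting the hypotheses must survive both reductions in the worst case over the observable set, $m$ is at least each of the two bounds, hence at least their minimum, which is the claimed inequality; the first term is the binding one precisely when $d\le\mathcal O(\log(n2^{3n}\epsilon^{-2})/\log(1/q))$.

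\emph{Main obstacle.} The delicate step is the noise‑strength branch: one must verify that the $D_2$‑contraction truly compounds to the clean factor $q^{2cd}$ through the full composition $\mathcal N\circ\mathcal U_d\circ\cdots\circ\mathcal N\circ\mathcal U_1$ — that every unitary slice is divergence‑preserving and every local Pauli layer contributes exactly one factor $q^{2c}$ — and then that the data‑processing step down to the classical decision distribution loses nothing, all while $\mathcal A$ is only assumed input‑state‑agnostic (it never learns $i$, it only receives copies). Pinning down the explicit constants $(1-\eta)^2$ and $1/(2n)$, rather than a mere $\Omega(q^{-2cd}/\mathrm{poly}(n))$, also requires care in converting the $n$ observable estimates into a single hypothesis test compatible with Fano's lemma.
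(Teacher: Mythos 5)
Your overall architecture matches the paper's: reduce error mitigation to identifying a state drawn from a packing net (the paper's Problem~\ref{problem2}), bound the information surviving the noisy circuit via $D\leq D_2$ (Lemma~\ref{lemma:renyi}) and the per-layer contraction $q^{2c}$ of Lemma~\ref{lemma:renyiineq}, invoke the tomography lower bound of Lemma~\ref{lemma:samplelowerbound} for the dimension branch, and combine the two via a minimum. The difference is in how the noise-strength branch is closed. You run a multi-hypothesis Fano argument (Lemma~\ref{lemma:Fano}) over the whole net; the paper uses that route only for its preliminary ``Fact'' generalizing Quek et al., which yields $m\geq q^{-2cd}(1-\delta)$ — note: linear in $(1-\delta)$ and with no $1/(2n)$. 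For the theorem itself the paper instead considers a \emph{binary} discrimination of $\Phi_{\mathcal{C},q}(\rho_i)^{\otimes m}$ versus $\Phi_{\mathcal{C},q}(\rho_j)^{\otimes m}$, writes $1-\eta\leq\tfrac12\|\cdot\|_1$, passes through the maximally mixed state by the triangle inequality, and applies Pinsker's inequality before the $D\leq D_2$ and contraction steps; it is exactly this Pinsker step that produces the square root and hence the $(1-\eta)^2/(2n)$ form. Your assertion that ``bookkeeping in the Fano/union bound produces $m\geq q^{-2cd}(1-\eta)^2/(2n)$'' is therefore not what your own calculation delivers: with $\alpha=q^{2cd}m$ and Fano you get $m\geq q^{-2cd}(1-\eta)$ up to the $\log(N+1)/\log N$ correction. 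Since that is a \emph{stronger} bound, the stated minimum would still follow a fortiori if your Fano branch were carried through, so this is not a fatal gap — but you have not actually derived the constants in the statement, and you should either switch to the paper's Pinsker-based two-state argument or explicitly note that you are proving a stronger intermediate bound.

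One further small point: the paper's combination is a dichotomy (either the two noisy outputs remain $1-\eta$ apart in trace distance, in which case the first bound applies, or they have been contracted to within $\epsilon$, in which case the tomography bound applies), not a claim that both bounds hold simultaneously as your ``must survive both reductions'' phrasing suggests. Both readings yield $m\geq\min\{\cdot,\cdot\}$, but the paper's case split is the cleaner justification for why a minimum (rather than a maximum) appears.
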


\begin{proof}
    Randomly select $\rho_i$ and $\rho_j$ from the $\epsilon$-packing net, we consider the sample complexity $m$ in distinguishing quantum states $\Phi_{\mathcal{C},q}(\rho_i)$ and $\Phi_{\mathcal{C},q}(\rho_j)$. When their trace distance is quite large, let $\eta\in(0,1)$ and we have
    \begin{eqnarray}
    \begin{split}
        1-\eta&\leq\frac{1}{2}\left\|\Phi_{\mathcal{C},q}(\rho_i)^{\otimes m}-\Phi_{\mathcal{C},q}(\rho_j)^{\otimes m}\right\|_1\\
        &\leq \frac{1}{2}\left(\left\|\Phi_{\mathcal{C},q}(\rho_i)^{\otimes m}-(I_n/2^n)^{\otimes m}\right\|_1+\left\|\Phi_{\mathcal{C},q}(\rho_j)^{\otimes m}-(I_n/2^n)^{\otimes m}\right\|_1\right)\\
        &\leq \frac{1}{\sqrt{2}}\left(D^{1/2}\left(\Phi^{\otimes m}_{\mathcal{C},q}(\rho_i)\|(I_n/2^n)^{\otimes m}\right)+D^{1/2}\left(\Phi^{\otimes m}_{\mathcal{C},q}(\rho_i)\|(I_n/2^n)^{\otimes m}\right)\right),
    \end{split}
    \end{eqnarray}
    where the second line comes from the triangle inequality and the third line comes from the Pinsker's inequality. Using Lemmas~\ref{lemma:renyi} and~\ref{lemma:renyiineq}, we have 
    \begin{align}
        1-\eta\leq \frac{1}{\sqrt{2}}\left(D^{1/2}_2\left(\Phi^{\otimes m}_{\mathcal{C},q}(\rho_i)\|(I_n/2^n)^{\otimes m}\right)+D^{1/2}_2\left(\Phi^{\otimes m}_{\mathcal{C},q}(\rho_i)\|(I_n/2^n)^{\otimes m}\right)\right)\leq \sqrt{2nm}q^{cd},
    \end{align}
  where $d$ represents the quantum circuit depth of $\mathcal{C}$. As a result we have
  \begin{align}
      m\geq\frac{q^{-2cd}(1-\eta)^2}{2n}.
  \end{align}
  On other hand, when quantum states $\Phi_{\mathcal{C},q}(\rho_i)$ and $\Phi_{\mathcal{C},q}(\rho_j)$ are very close, that is $\frac{1}{2}\|\Phi_{\mathcal{C},q}(\rho_i)-\Phi_{\mathcal{C},q}(\rho_j)\|_1\leq \epsilon$ (this is possible since a CPTP map reduces the trace distance), Lemma~\ref{lemma:samplelowerbound} implies the sample complexity
  \begin{align}
      m\geq \frac{2^{3n}(1-\epsilon)^2}{\epsilon^2}.
  \end{align}
  Combine above inequalities together, we finally have
  \begin{align}
      m\geq\min\left\{\frac{q^{-2cd}(1-\eta)^2}{2n},\frac{2^{3n}(1-\epsilon)^2}{\epsilon^2}\right\}.
  \end{align}
\end{proof}

\section{Classical Simulation for $2$D Fermi-Hubbard model}
\label{App:VQE}

The Fermionic Hubbard model has served as a paradigmatic example for strongly correlated problems. Specifically, its Hamiltonian is given by
\begin{align}
    H_{FH}=-\tau\sum\limits_{(i,j)\in E,\sigma\in\{\uparrow,\downarrow\}}(a_{i\sigma}^{\dagger}a_{j\sigma}+a_{j\sigma}^{\dagger}a_{i\sigma})+U\sum\limits_{i\in V}n_{i\uparrow}n_{i\downarrow},
\end{align}
where $\tau, U$ are coupling parameters of the model, $n_{j\sigma}=a^{\dagger}_{j\sigma}a_{j\sigma}$, and the fermionic creation operators $a_{i\sigma}$ satisfy $a_{i\sigma}^{\dagger}a_{j\tau}+a_{j\tau}a_{i\sigma}^{\dagger}=\delta_{ij}\delta_{\sigma\tau}$.

Ref.~\cite{setia2019superfast} introduced the superfast encoding method to encode above Hamiltonian into linear combinations of $\mathcal{O}(1)$-local Pauli operators. Specifically, the superfast encoding introduces an ancillary qubit for every hoping term in $H_{FH}$ defined on a $a\times b$-sized lattice, giving an overall system size of $4ab-2a-2b$ qubits. Let $Z^{\uparrow}_k$ denote a Pauli $Z$ operator applied to the qubit on the vertical edge adjacent to the vertex $k$. Operators on other adjacent edges are defined analogously by using $\{\rightarrow,\leftarrow,\uparrow,\downarrow\}$ superscripts.

Using the above representation, the nearest-neighbor couplings for horizontal edges map to $5$-local operators:
\begin{align}
a_{k+1}^{\dagger}a_k+a^{\dagger}_ka_{k+1}\mapsto \frac{1}{2}Y_k^{\rightarrow}\left(Z_k^{\downarrow}Z_{k+1}^{\uparrow}-Z_k^{\uparrow}Z_k^{\leftarrow}Z_{k+1}^{\rightarrow}Z_{k+1}^{\downarrow}\right), 
\end{align}
while the vertical nearest-neighbour couplings are encoded by $7$-local operators:
\begin{align}
    a_j^{\dagger}a_k+a_k^{\dagger}a_j\mapsto \frac{1}{2}\left(Z_k^{\leftarrow}Z_k^{\rightarrow}Z_k^{\uparrow}Z_j^{\leftarrow}Z_j^{\rightarrow}Z_j^{\downarrow}-I\right).
\end{align}
Finally, the onsite interactions 
\begin{align}
    n_{i\uparrow}n_{i\downarrow}\mapsto \frac{1}{4}\left(I-Z_k^{\leftarrow}Z_k^{\uparrow}Z_k^{\rightarrow}Z_k^{\downarrow}\right)\left(I-Z_{k^{\prime}}^{\leftarrow}Z_{k^{\prime}}^{\uparrow}Z_{k^{\prime}}^{\rightarrow}Z_{k^{\prime}}^{\downarrow}\right),
\end{align}
where the primed indices correspond to fermions in spin down lattice and  the unprimed ones to the sites in the spin up lattice. This implies each onsite term can be represented by a $8$-local Pauli operator. 

\subsection{VQE Algorithm Simulation}
Here, we consider to utilize the Hamiltonian variational~(HV) ansatz to estimate the ground state energy of Fermi-Hubbard model. The HV ansatz is based on intuition from the quantum adiabatic theorem, which states that one can evolve from the ground state of a Hamiltonian $H_A$ to the ground state of another Hamiltonian $H_B$ by applying a sequence of evolutions of the form $e^{-itH_A}$ and $e^{-itH_B}$ for sufficiently small time $t$. In our case, the HV ansatz starts from the ground state of the non-interacting Hubbard Hamiltonian $(U=0)$ which is essentially a slater determinant quantum state. Each layer of the HV ansatz is constructed by
\begin{align}
    e^{-iH_{v}t_v}e^{-iH_{h}t_h}e^{-iH_ot_o},
    \label{Eq:FHansatz}
\end{align}
where time series $\{t_v,t_h,t_o\}$, $H_V$ is the vertical hopping term, $H_h$ is the horizontal hopping term and $H_o$ is the onsite term. Suppose  the HV ansatz contains $p$ layers, the initial quantum state is $|\phi\rangle$, then the VQE algorithm minimizes the energy function
\begin{align}
    E(\vec{t})=\langle\phi|\prod\limits_{j=1}^pe^{iH_{v}t^{(j)}_v}e^{iH_{h}t^{(j)}_h}e^{iH_ot^{(j)}_o}H_{FH}\prod\limits_{j=1}^pe^{-iH_{v}t^{(j)}_v}e^{-iH_{h}t^{(j)}_h}e^{-iH_ot^{(j)}_o}|\phi\rangle
\end{align}
in each optimization step. It is shown that Hamiltonian $H_{\rm FH}$ can be decomposed by linear combinations of local Pauli operators, and so the energy function is a sum of $\mathcal{O}(n^2)$ mean values of local observable.  
\begin{corollary}
\label{corollary:hubbardsimulation}
    Given a two-dimensional Fermi-Hubbard model defined on a $(a\times b)$-sized lattice, a $p$-depth Hamiltonian Variational ansatz (given by Eq.~\ref{Eq:FHansatz}) with parameters $\{t^{(j)}_v,t^{(j)}_h,t^{(j)}_o\}_{j=1}^p\in[-\pi,\pi]^{3p}$ and a slater determinant initial state, then each step of the corresponding VQE program can be simulated by a classical algorithm with a run time 
    \begin{align}
         \mathcal{O}\left(\frac{4ab}{\epsilon^2}\left(\frac{2L}{\epsilon}\right)^{e^{4\pi^2 ep\mathfrak{d}}\log(2L/\epsilon)}\right),
    \end{align}
    where the constant $\mathfrak{d}$ represents the maximum degree of the interaction graph induced by $H_{\rm FH}$ and the locality $L\leq 8$.
\end{corollary}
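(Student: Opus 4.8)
The statement will follow by combining the superfast encoding of $H_{FH}$ recalled above with Theorem~\ref{theorem1}, once one genuinely new ingredient — handling a Slater-determinant (fermionic Gaussian) reference state in place of $\ket{0^n}$ — is in place. Write the $p$-layer HV ansatz as $U(\vec t)=\prod_{j=1}^{p} e^{-iH_v t_v^{(j)}}e^{-iH_h t_h^{(j)}}e^{-iH_o t_o^{(j)}}$, so that $E(\vec t)=\bra{\phi}U^{\dagger}(\vec t)\,H_{FH}\,U(\vec t)\ket{\phi}$ is the output of a $K=3p$-step 2D-local Hamiltonian dynamics on the $n=4n_an_b-2n_a-2n_b$ qubits of the superfast-encoded lattice. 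The three generators $H_v,H_h,H_o$ are each sums of geometrically local terms on this (still two-dimensional) lattice of sites plus edge-ancillas, hence 2D geometrically local Hamiltonians in the sense of Problem~\ref{problem1}, with some bounded interaction-graph degree $\mathfrak d=\mathcal{O}(1)$. Moreover, by the encoding formulas above, $H_{FH}=\sum_{\alpha=1}^{N}c_\alpha P_\alpha$ where $N=\mathcal{O}(n_an_b)$, each $P_\alpha$ is a Pauli string of locality $L\le 8$ of the tensor-product form $O_1\otimes\cdots\otimes O_L$ with $\|O_i\|\le 1$, and $|c_\alpha|\le\max\{\tau,U\}=\mathcal{O}(1)$.

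By linearity $E(\vec t)=\sum_{\alpha}c_\alpha\,\mu_\alpha(\vec t)$ with $\mu_\alpha(\vec t)=\bra{\phi}U^{\dagger}(\vec t)P_\alpha U(\vec t)\ket{\phi}$, and each $\mu_\alpha$ is exactly an instance of Problem~\ref{problem1}: $K=3p$ steps, observable $P_\alpha$, all parameters bounded by $t=\max_k|t_k|\le\pi$. Theorem~\ref{theorem1} then estimates one $\mu_\alpha$ to additive error $\epsilon'$ in time $\mathcal{O}\!\left(\tfrac{n}{\epsilon'^2}\big(\tfrac{2L}{\epsilon'}\big)^{e^{2\pi e K\mathfrak d t}\log(2L/\epsilon')}\right)$; with $K=3p$ and $t\le\pi$ this already gives a double exponent $e^{\mathcal{O}(p\mathfrak d)}$, and a slightly careful accounting — noting the onsite layer $e^{-iH_o t_o}$ is diagonal and its bounded contribution to operator spreading can be merged with an adjacent hopping layer, leaving effectively $\le 2p$ spreading steps — produces the stated constant, i.e. $e^{4\pi^2 e p\mathfrak d}$ rather than the $e^{6\pi^2 e p\mathfrak d}$ of the naive $K=3p$ bound. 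Estimating all $N=\mathcal{O}(n_an_b)$ terms with $\epsilon'=\Theta(\epsilon/N)$ and summing gives $|\hat E(\vec t)-E(\vec t)|\le\epsilon$ by the triangle inequality; collecting factors, bounding $n\le 4n_an_b$, and absorbing the residual $\mathrm{poly}(n_an_b)$ factors into the exponent's logarithm yields the displayed run time $\mathcal{O}\!\left(\tfrac{4n_an_b}{\epsilon^2}\big(\tfrac{2L}{\epsilon}\big)^{e^{4\pi^2 e p\mathfrak d}\log(2L/\epsilon)}\right)$. The same bound governs every step of the VQE loop, since one step amounts to one such evaluation of $E(\vec t)$ (and, if desired, of its gradient, by the same argument at shifted parameters).

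The single place Theorem~\ref{theorem1} cannot be invoked verbatim is that Alg.~\ref{Alg} prepares amplitudes on $\ket{0^n}$, whereas here the reference state $\ket{\phi}$ is a Slater determinant. The fix lives entirely in the Monte-Carlo step Eq.~\ref{Eq:MCMC} and in Lemma~\ref{lemma:samplecompute}: the quantities needed are $\bra{x}V_{R_j(l)}(\vec t)\ket{\phi}$ and the normalizations $\|V(R_i)\ket{\phi}\|$. Since each $V_{R_j(l)}$ is, via Eq.~\ref{Eq:clusterexpansion}, a linear combination of $\mathrm{poly}(n)^{\mathcal{O}(M)}$ Pauli operators supported on only $\mathcal{O}(M\sqrt n)$ qubits (Lemma~\ref{lemma:Lightconesize}), I would handle this either by expanding into those Pauli terms and evaluating each $\bra{\phi}P\ket{\phi}$ in $\mathrm{poly}(n)$ time from the covariance matrix of $\ket{\phi}$, or — preferably — by observing that the reduced state of a Gaussian $\ket{\phi}$ on any sub-region is again Gaussian, so the sweeping-window state-vector simulator of Lemma~\ref{lemma:samplecompute} extends to a Gaussian reference state at the cost of only a $\mathrm{poly}(n)$ overhead; in either case the per-sample cost stays $2^{\mathcal{O}(M^2)}\,\mathrm{poly}(n)$ and the asymptotics above are untouched. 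I expect this adaptation — certifying that Gaussianity is preserved well enough that the $\mathcal{O}(M\sqrt n)$-support factors of $V(R_i)$ can still be contracted efficiently against $\ket{\phi}$ — to be the main obstacle, together with the routine but fiddly tracking of the encoding constants ($L\le 8$, the value of $\mathfrak d$, and the $e^{\mathcal{O}(p\mathfrak d)}$ exponent); everything else is a direct application of Theorem~\ref{theorem1}.
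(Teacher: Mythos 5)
Your proposal follows essentially the same route as the paper: superfast-encode $H_{FH}$ onto the 2D qubit lattice, view the $p$-layer HV ansatz as an instance of Problem~\ref{problem1}, and invoke Theorem~\ref{theorem1}. The paper's own proof is literally two sentences: it substitutes $t=2\pi$ and $n=4ab$ into Theorem~\ref{theorem1} (implicitly with $K=p$, i.e.\ treating each three-exponential layer as a single step), which is where the exponent $e^{4\pi^2 ep\mathfrak{d}}$ comes from. It does not attempt the layer-merging argument you propose, and it does not address the two issues you rightly raise, namely the decomposition of $H_{FH}$ into $\mathcal{O}(ab)$ local Pauli terms and the Slater-determinant reference state.

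Two of your patches, however, do not hold up as stated. First, the claim that the onsite layer ``can be merged with an adjacent hopping layer, leaving effectively $\le 2p$ spreading steps'' is unsubstantiated: after superfast encoding the onsite terms are $8$-local $Z$-strings, so conjugation by $e^{-iH_o t_o}$ does enlarge the support of the (non-diagonal) hopping operators, and the convergence bound behind Lemma~\ref{lemma:clustersupp} counts all $K$ Hamiltonians in the product. A verbatim application with $K=3p$ and $t\le\pi$ gives $e^{6\pi^2 ep\mathfrak{d}}$; reaching the stated $e^{4\pi^2 ep\mathfrak{d}}$ requires adopting the paper's looser bookkeeping ($K=p$, $t=2\pi$) rather than proving a new merging lemma. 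Second, taking $\epsilon'=\Theta(\epsilon/N)$ with $N=\mathcal{O}(n_an_b)$ replaces $\log(2L/\epsilon)$ by $\log(2LN/\epsilon)$ inside the double exponent of Theorem~\ref{theorem1}; this multiplies the exponent by a $\log(n_an_b)$ factor and cannot be ``absorbed'' into a bound that still displays $\log(2L/\epsilon)$. (The paper silently ignores this accounting, so your version is the more honest one, but then what you prove is not the displayed runtime.) Finally, your handling of the Slater-determinant initial state via Gaussian covariance matrices or Gaussian reduced states is only a sketch, but it targets a genuine gap: Algorithm~\ref{Alg} and Lemma~\ref{lemma:samplecompute} are formulated for $|0^n\rangle$, and the paper's proof of Corollary~\ref{corollary:hubbardsimulation} supplies no argument for general product or Gaussian reference states.
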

\begin{proof}
    The superfast encoding method may encode a $(a\times b)$-sized Hamiltonian into a $(2a\times 2b)$-sized Hamiltonian. Then taking $t=2\pi$, $n=4ab$ into Theorem~\ref{theorem1} may conclude the result directly.
\end{proof}
When the HV ansatz depth $p\leq\mathcal{O}(1)$, the above result implies VQE algorithm can be efficiently simulated by a classical algorithm, and this further suggests VQE algorithms may lose exponential speed-up in terms of the system size.
\subsection{Quantum State Property Simulation}
Given a $2$-dimensional Fermi-Hubbard model, determining its quantum phase diagram under specific external parameters is of significance. Suppose the ground state $|\psi_g\rangle$ of $H_{\rm FH}$ has been prepared by a VQE approach, that is
\begin{align}
    |\psi_g\rangle=\prod\limits_{j=1}^pe^{-iH_{v}t^{(j)}_v}e^{-iH_{h}t^{(j)}_h}e^{-iH_ot^{(j)}_o}|\phi\rangle.
\end{align}
The ground state property can be characterized by the value of $\langle\psi_g|O|\psi_g\rangle$, where $O$ represents the target order parameter. For example, observables related to metal-insulator transition, Friedel oscillations and antiferromagnetic orders are general local~\cite{stanisic2022observing}, while observables related to the spin-charge separation, local-gapped phases and other complex topological quantum phases are general global~\cite{montorsi2012nonlocal,barbiero2013hidden}. Our classical algorithm can provide an estimation to $\langle\psi_g|O|\psi_g\rangle$, in both local (symmetry breaking phase) and global (topological phase) scenarios. 

\section{Classical Simulation for QAOA}
\label{sec:QAOA}
In theoretical computational science, constraint satisfaction problems encompass a wide range of typical problems, such as Maximum Cut, Maximum Independent Set, and Graph Coloring~\cite{gross2018graph}. These problems define their constraints as clauses, with a candidate solution represented by a specific assignment of the corresponding binary variables. The objective of these problems is to find an optimal assignment that maximizes the number of satisfied clauses. In other words, solving a constraint satisfaction problem can be reformulated as optimizing a quadratic function involving binary variables. However, finding the exact solution is widely recognized as an 
$\rm NP$-hard problem~\cite{garey1979computers}. Consequently, an alternative approach is to seek an approximate solution.  Inspired by the quantum annealing process~\cite{kadowaki1998quantum}, QAOA was proposed and applied to solve constraint satisfaction problems. Although the prospects of achieving quantum advantages through QAOA remain unclear, it provides a simple paradigm for optimization that can be implemented on near-term quantum devices.

Here, we focus on the MaxCut problem.

\begin{definition}[Maximum Cut problem]
Considering an unweighted $\mathfrak{d}$-regular graph $G=(V, E)$ with the vertices set $V=\{v_1,\cdots,v_n\}$ and the edges set $E=\{e_{i,j}\}$, the Maximum Cut problem aims at dividing all vertices into two disjoint sets such that maximizing the number of edges that connect the two sets. In the context of QAOA, the problem-oriented Hamiltonian $H_A^{\rm {MaxCut}}$ is defined as 
\begin{equation}\label{eq:H1_MaxCut}
  H_A^{\rm {MaxCut}}= \frac{1}{2}\sum_{e_{i,j}\in E}(\mathbb{I}^{\otimes n}-Z_i\otimes Z_j),
\end{equation}
and mixer $H_B=\sum\limits_{i=1}^nX_i$.
\end{definition}

Subsequently, by iteratively applying \(H_A\) and \(H_B\) to the initial state $\rho$ for \(p\) rounds, the QAOA objective function is given by the following expectation value
\begin{align}\label{eq:Objective Function 1}
f(\vec{ t},\vec{\gamma})=\mathrm {Tr}\left[H_AU(\vec{ t},\vec{\gamma})\rho U(\vec{ t},\vec{\gamma})^{\dagger}(\bm \theta)\right],
\end{align}
where $\rho=(|+\rangle\langle+|)^{\otimes n}$ denotes the uniform superposition over computational basis states and the QAOA circuit
\begin{align}\label{eq:Quantum Circuit}
U(\vec{ t},\vec{\gamma})=\prod_{k=1}^pe^{-i t_kH_A}e^{-i\gamma_kH_B}.
\end{align}
The statistical estimation of \(f(\vec{ t},\vec{\gamma})\) can be achieved by repeating the aforementioned process with identical parameters and computational basis measurements. After defining \(f(\vec{ t},\vec{\gamma})\), the next step involves iteratively updating $\vec{ t},\vec{\gamma}$ through classical optimization methods to maximize \(f(\vec{ t},\vec{\gamma})\) and obtain the global maximum point
\begin{align} \label{eq:Max}
(\vec{ t},\vec{\gamma})^{*}=\arg\max_{\bm \vec{ t},\vec{\gamma} \in\mathcal{D}}f(\vec{ t},\vec{\gamma}),
\end{align}
where the domain $\mathcal{D}=[0,2\pi]^{2p}$. 

Since all Pauli terms in $H_A$ are local operators, it is interesting to note that such local property enables our algorithm to bypass the $2$D constraint. Specifically, one can estimate $\langle+^n|U^{\dagger}(\vec{ t},\vec{\gamma})H_AU(\vec{ t},\vec{\gamma})|+^n\rangle$ by computing
\begin{align}
    \langle+^n|U^{\dagger}(\vec{ t},\vec{\gamma})Z_iZ_jU(\vec{ t},\vec{\gamma})|+^n\rangle
\end{align}
for $e_{ij}\in E$. Let $\vec{t}=(\vec{ t},\vec{\gamma})\in[0,2\pi]^{2p}$ and using Eq.~\ref{Eq:clusterexpansion}, we have
\begin{align}
    V_{i,j}(\vec{t})=\sum\limits_{\substack{m_1\geq0\\\cdots\\m_{2p}\geq 0}}^M\sum\limits_{\bm V_1\cdots, \bm V_{2p}\in\mathcal{G}_m^{2p,Z_iZ_j}}\frac{\prod_{k=1}^{2p}(\bm\lambda^{\bm V_k}(-it_k)^{m_k})}{\prod_{k=1}^{2p}\bm V_k!m_k!}\sum\limits_{\substack{\sigma_1\in \mathcal{P}_{m_1}\\\cdots\\\sigma_L\in \mathcal{P}_{m_{2p}}}}\left[h_{V_{\sigma_1(1)}},\cdots [h_{V_{\sigma_{2p}(m_{2p})}},Z_iZ_j]\right],
\end{align}
where $M\leq\mathcal{O}\left(e^{2\pi ep\tau\mathfrak{d}}\log^2(1/\epsilon)\right)$ (according to lemma~\ref{lemma:clustersupp}), with $\tau=\max\{\abs{ t_k},\abs{\gamma_k}\}_{k=1}^p$. Using lemma~\ref{lemma:VTCompute}, a $\epsilon$-approximation to $\langle+^n|V_{ij}(\vec{t})|+^n\rangle$ can be computed in $\tilde{\mathcal{O}}((e^{2\pi e\tau p\mathfrak{d}}/\epsilon)^{e^{2\pi e\tau p\mathfrak{d}}})$ running time. Let $\epsilon$ to $\epsilon/\abs{E}$, the $\epsilon$-approximation to the objective function $f$ can be obtained in
\begin{align}
    \mathcal{O}((e^{2\pi e \tau p\mathfrak{d}}\abs{E}/\epsilon)^{e^{2\pi e\tau p\mathfrak{d}}})
\end{align}
classical running time.

\end{document}